\newtheorem{thm}{Theorem}[section]
\newtheorem{lem}{Lemma}[section]
\theoremstyle{definition}
\theoremstyle{remark}
\newtheorem{rem}{Remark}[section]
\numberwithin{equation}{section}
\DeclareMathAlphabet{\itbf}{OML}{cmm}{b}{it}
\numberwithin{equation}{section}
\numberwithin{equation}{section}
\newcounter{saveeqn}
\begin{document}
\title[Surface Polariton Resonance in Elastic Nanorods]{On surface polariton resonance and its curvature concentration effects from 3D elastic nanorods}

\author{Youjun Deng}
\address{School of Mathematics and Statistics, HNP-LAMA, Central South University, Changsha 410083, China}
\email{youjundeng@csu.edu.cn, dengyijun\_001@163.com}

\author{Hongyu Liu}
\address{Department of Mathematics, City University of Hong Kong, Hong Kong SAR, China}
\email{hongyliu@cityu.edu.hk}	

\author{Wanjing Tang \textsuperscript{*}}
\address{School of Mathematics and Statistics, Central South University, Changsha 410083, China}
\email{wanjingtang@csu.edu.cn}
\thanks{\textsuperscript{*}Corresponding author}

\author{Guang-Hui Zheng}
\address{School of Mathematics, Hunan University, Changsha 410082, China}
\email{zhenggh2012@hnu.edu.cn; zhgh1980@163.com}
\maketitle

\begin{abstract}
This paper investigates surface polariton resonance (SPR) in three-dimensional elastic metamaterials with nanorod geometry. The primary motivation is to surpass the physical limitations imposed by the quasi-static approximation for SPRs through anisotropic geometric design. The analysis boils down to analyzing the spectral properties of the matrix-valued elastic Neumann-Poincar\'e (NP) operator defined on the nanorod boundary. We develop novel analytical techniques and conduct a rigorous asymptotic analysis of elastic layer potential operators, specifically adapted for highly anisotropic structures. Within this framework, we derive precise asymptotic formulas for the scattered field in the quasi-static regime. A thorough examination of these expressions yields explicit resonance conditions that intricately link three fundamental parameters: elastic material parameters, wave frequency, and nanorod geometry. Furthermore, we characterize the intrinsic relationship between these parameters and the associated energy blow-up rate of the resonant field. This analysis explicitly establishes a sharp curvature concentration effect at the nanorod extremities, where field enhancement is locally maximized. Our work provides a rigorous theoretical foundation for harnessing elastic SPRs through anisotropic geometric engineering, with implications for sensing, wave focusing, and metamaterial applications.

\vspace{0.3cm}
\noindent{$\mathbf{Keywords:}$}~ Surface polariton resonance, quasi-static approximation, nanorod, anisotropic geometry, curvature concentration, elastic Neumann-Poincar\'e operator, spectral analysis, asymptotic analysis

\noindent{{\bf2020 Mathematics Subject Classification}: 35Q74, 35B30, 47G40}

\end{abstract}

\section{Introduction}

The unique ability of surface waves to confine energy and generate intense local field enhancements has positioned plasmonics at the forefront of modern photonics. Surface Plasmon Resonance (SPR), the resonant oscillation of conduction electrons at a metal-dielectric interface, is the foundational phenomenon enabling this capability. By confining light to subwavelength scales, SPR underpins revolutionary applications across diverse fields, including ultrasensitive biomedical imaging and therapy, nanoscale lasing, and advanced optical cloaking \cite{bi1, HAYDPM, bi4,bi5, bi7, bi8, bi10,DT,bi17,FD23, bi20, bi22, bi23, bi24, bi25, bi29, bi33, bi34, bi35}. Consequently, the design and analysis of plasmonic devices composed of metallic nanostructures remain a central research focus in physics and materials science. By analogy, the study of Surface Polariton Resonance (SPR) in elastic systems---where mechanical waves interact with nanostructures---has garnered significant interest for its potential to manipulate solid waves at the nanoscale, offering parallel promise for novel acoustic and mechanical metamaterials \cite{bi13, bi25, ANDOJK, bi12, bi14, ANDOKH}.

A fundamental question arises regarding the intrinsic nature of SPRs: Are they exclusively subwavelength phenomena, or is their origin more deeply tied to geometry? Resolving this question is critical, as it could reveal pathways to overcome size limitations and unlock new application regimes. Numerical evidence suggests that SPRs can persist beyond the conventional quasi-static regime, localized specifically at regions of high boundary curvature \cite{bi26}. Rigorous theoretical justification for this observed curvature concentration effect, however, remains a significant challenge. Recent analytical work for scalar wave systems (e.g., governed by the Laplace or Helmholtz equations) has made pivotal progress \cite{bi2, bi001, DYLHZG, DYLHZGH}. These studies demonstrate that SPR intensity concentrates at boundary points of high curvature, with its magnitude proportional to the curvature ratio between distinct boundary regions. As the operating wavelength decreases---equivalent to increasing the structural size relative to the wavelength---the resonance generally weakens. Crucially, however, it vanishes first from low-curvature areas, while persisting longer at high-curvature points. This provides a rigorous basis for prior numerical observations and establishes a profound geometric perspective: the essential origin of strong SPR may lie not strictly in the subwavelength scale, but in the presence of high local curvature. This insight carries substantial practical implications, suggesting that technologies leveraging SPRs may transcend traditional size constraints by strategically employing high-curvature geometric features.

A critical gap persists in the existing literature. The analyses in \cite{bi2, bi001, DYLHZG, DYLHZGH, bi26} are confined to {scalar} wave models, where the mathematical problem reduces to studying the spectral properties of the scalar Neumann-Poincar\'e (NP) operator. Extending this geometric understanding to {vectorial} elastic waves is both theoretically demanding and practically vital, as it requires confronting the complexities of matrix-valued elastic NP operators. This article aims to bridge this gap by providing a rigorous analysis of surface polariton resonance and its curvature concentration effects in three-dimensional elastic nanorods.

We develop novel analytical techniques and perform a detailed asymptotic analysis of elastic layer potential operators tailored for highly anisotropic geometries. Within this framework, we derive precise asymptotic formulas for the scattered field in the quasi-static regime. A thorough analysis of these expressions yields explicit resonance conditions that intricately link three fundamental parameters: elastic material constants, wave frequency, and nanorod geometry. Furthermore, we characterize the intrinsic relationship between these parameters and the associated energy blow-up rate of the resonant field. This analysis definitively establishes a sharp curvature concentration effect at the nanorod extremities, where field enhancement is locally maximized.

We focus our primary analysis on straight nanorods, we refer to \cite{bi15,bi16} for its application to warm hole cloaking; the extension to curved geometries, while technically more involved, follow from similar arguments. As our subsequent discussion will reveal, the analysis for the straight rod---already highly technical due to the intricacies of the matrix-valued elastic NP operator combined with anisotropic geometry---sufficiently captures the essential physical insight: that high curvature is a primary driver of intense elastic field concentration.

The rest of the paper is organized as follows. In Section 2, we present the mathematical setup of our study and state the main results. The asymptotic analysis for layer potential operators with respect to nanorod shape geometry is given in Section 3. Finally, we show the proof to our main results in Section 4.

\section{Mathematical setup and statement of main results}

\subsection{Geometric setup and elastic system}

We consider $D$ to be a straight nanorod. Let $\Gamma_{0}$ be a straight line on the x-axis in $\mathbb{R}^{3}$. $P=(-L/2, 0, 0)^T$ and $Q=(L/2, 0, 0)^T$ are the two endpoints of $\Gamma_0$. We start with a thin cylindrical structure $D^f$ defined by
\begin{equation}\label{Df}
D^f:=\left\{\mathbf{x} \mid x_2^2+x_3^2<\delta^2,-L / 2\leq x_1\leq L / 2,-\delta<x_3<\delta\right\},
\end{equation}
the lower and upper half boundaries of $D^f$ are denoted by $\Gamma_{1}$ and $\Gamma_2$ respectively, namely
\begin{equation}\label{T12}
\begin{gathered}
\Gamma_1=\left\{\mathbf{x} \mid x_2^2+x_3^2=\delta^2,-L / 2\leq x_1\leq L / 2,-\delta\leq x_3\leq 0 \right\}, \\
\Gamma_2=\left\{\mathbf{x} \mid x_2^2+x_3^2=\delta^2,-L / 2\leq x_1\leq L / 2, 0\leq x_3\leq \delta \right\} .
\end{gathered}
\end{equation}
 Let $D^a$ and $D^b$ be two half balls with radius $\delta$ and centered at P and Q, namely
 \begin{equation}\label{Dab}
\begin{aligned}
D^a & :=\left\{\mathbf{x} \mid\left(x_1+L / 2\right)^2+x_2^2+x_3^2<\delta^2, x_1<-L / 2\right\}, \\
D^b & :=\left\{\mathbf{x} \mid\left(x_1-L / 2\right)^2+x_2^2+x_3^2<\delta^2, x_1>L / 2\right\} .
\end{aligned}
\end{equation}
 Here and in what follows, $\mathbf{x}=(x_{1}, x_2, x_3)^T$ and $\mathbf{y}=(y_{1}, y_2, y_3)^T$, we also define $\mathbf{e}_j$, $j=1,2,3$ the orthnormal basis in $\mathbb{R}^3$ with $j$-th entry being one. Furthermore, we define $$S^a=\left\{\mathbf{x}|\left(x_1+L/2\right)^2+x_2^2+x_3^2=\delta^2, x_1<-L/2\right\},$$ $$S^b=\left\{\mathbf{x}|\left(x_1-L/2\right)^2+x_2^2+x_3^2=\delta^2, x_1>L/2\right\},$$
 $S^c=S^a\cup S^b$ and $S^f=\Gamma_1\cup\Gamma_2$. Then the rod shape inclusion $D$ is denoted by $D=D^a\cup D^f\cup D^b$ and the boundary $\partial D$ is given by $\partial D=S^a\cup S^f\cup S^b$. In the definition of $D$, we take $\delta=1$, and can immediately obtain the definition of $D_1$, we also use $S_1^f$ to denote the lateral part of $\partial D_1$. It is not difficult to see that $\partial D$ is of class $C^{1, \alpha}$ for some $\alpha\in \mathbb{R}_+$. For any $\mathbf{x}\in S^f$ or $\mathbf{y}\in S^f$, we use $\mathbf{z}_{\mathbf{x}}$ or $\mathbf{z}_{\mathbf{y}}$ to denote the orthogonal projection of $\mathbf{x}$ or $\mathbf{y}$ on $\Gamma_0$, respectively. In addition, by introducing the cylindrical coordinate, for any $\mathbf{x}$, $\mathbf{y}$ $\in S^f$, they can be represented by $\mathbf{x}=\mathbf{z}_{\mathbf{x}}+\delta\boldsymbol{\nu}_{\mathbf{x}}$ and $\mathbf{y}=\mathbf{z}_{\mathbf{y}}+\delta\boldsymbol{\nu}_{\mathbf{y}}$, respectively, where $\mathbf{z}_{\mathbf{x}}=(x_1, 0, 0)^T$, $\mathbf{z}_{\mathbf{y}}=(y_1, 0, 0)^T$, $\boldsymbol{\nu}_{\mathbf{x}}=(0, \cos\theta_1, \sin\theta_1)^T$ and $\boldsymbol{\nu}_{\mathbf{y}}=(0, \cos\theta_2, \sin\theta_2)^T$, noting that when $\mathbf{x}$, $\mathbf{y}$ belong to $\Gamma_1$, there holds $\pi\leq\theta_i<2\pi$, and when $\mathbf{x}$, $\mathbf{y}$ belong to $\Gamma_2$, $0\leq\theta_i<\pi, i=1, 2$. The elastic material properties of the nanorod $D$ are characterized by the Lam\'e parameters $c\lambda$ and $c\mu$. While the elastic homogeneous medium in $\mathbb{R}^{3}\setminus D$ is characterized by the Lam\'e parameters $\lambda$ and $\mu$. The constants $\lambda$ and $\mu$ satisfy the strong convexity condition
\begin{equation}\label{3lam}
\mu>0, \quad 3\lambda+ 2\mu>0.
\end{equation}
Now, set
$
(\tilde{\lambda}, \tilde{\mu})=A(\mathbf{x})(\lambda, \mu),
$
where
\begin{equation}\label{Ax}
A(\mathbf{x})= \begin{cases}c, & \mathbf{x} \in D, \\ 1, & \mathbf{x} \in \mathbb{R}^{3} \backslash \overline{D},\end{cases}
\end{equation}
with $c \in \mathbb{C}$.
The non-quasielastic wave scattering due to the presence of nanorod $D$ is described by the following Lam\'e system,
 \begin{equation}\label{3equa}
\begin{cases}
\nabla \cdot \mathcal{C}(\mathbf{x}) \nabla^s \mathbf{u}(\mathbf{x})+\omega^2 \mathbf{u}=0, \quad \mathbf{x}\in \mathbb{R}^3,\\[2pt]
 \mathbf{u}^s=\mathbf{u}(\mathbf{x})-\mathbf{H}(\mathbf{x})\quad \text{satisfies the radiation condition,}
 \end{cases}
\end{equation}
where $\mathbf{u} \in H_{l o c}^1\left(\mathbb{R}^3\right)^3, \mathcal{C}(\mathbf{x})=\tilde{\lambda} \delta_{i j} \delta_{k l}+\tilde{\mu}\left(\delta_{i k} \delta_{j l}+\delta_{j k} \delta_{i l}\right)$, the operator $\nabla^s \mathbf{u}$ is defined by
$$
\nabla^s \mathbf{u}=\left(\nabla \mathbf{u}+\nabla \mathbf{u}^T\right) / 2,
$$
with $T$ denoting the transpose. Here, $\mathcal{L}_{\lambda, \mu} \mathbf{H}+\omega^2\mathbf{H}=\mu \Delta \mathbf{H}+(\lambda+\mu) \nabla \nabla \cdot \mathbf{H}+\omega^2\mathbf{H}=0$ in $\mathbb{R}^3$, which signifies a background field.
In \eqref{3equa}, the radiation condition satisfies the following formulas as $|\mathbf{x}| \rightarrow+\infty$,
$$
(\nabla \times \nabla \times \left(\mathbf{u}-\mathbf{H}\right))(\mathbf{x}) \times \frac{\mathbf{x}}{|\mathbf{x}|}-\mathrm{i} k_{s} \nabla \times \left(\mathbf{u}-\mathbf{H}\right)(\mathbf{x})=\mathcal{O}\left(|\mathbf{x}|^{-2}\right),
$$
$$
\frac{\mathbf{x}}{|\mathbf{x}|} \cdot[\nabla(\nabla \cdot \left(\mathbf{u}-\mathbf{H}\right))](\mathbf{x})-\mathrm{i} k_{p} \nabla \left(\mathbf{u}-\mathbf{H}\right)(\mathbf{x})=\mathcal{O}\left(|\mathbf{x}|^{-2}\right),
$$
where $\mathrm{i}:=\sqrt{-1}$ and
\begin{equation}\label{eqqq:1522}
k_{s}=\omega /c_s,\quad k_{p}=\omega /c_p,
\end{equation}
with
\begin{equation}
c_{s}=\sqrt{\mu},\quad c_{p}=\sqrt{\lambda+2 \mu}.
\end{equation}

\subsection{Layer potentials}
Let $\boldsymbol{\Gamma}^\omega=\left(\Gamma_{i, j}^\omega\right)_{i, j=1}^3$ be the Kupradze matrix of fundamental solution to the operator $\mathcal{L}_{\lambda, \mu}+\omega^{2}$ in $\mathbb{R}^3$, i.e.,
\begin{equation}\label{eq:06}
\left(\Gamma_{i, j}^\omega\right)_{i, j=1}^3(\mathbf{x})=-\frac{\boldsymbol{\delta}_{i j}}{4 \pi \mu|\mathbf{x}|} e^{\mathrm{i} k_s|\mathbf{x}|}+\frac{1}{4 \pi \omega^2} \partial_i \partial_j \frac{e^{\mathrm{i} k_p|\mathbf{x}|}-e^{\mathrm{i} k_s|\mathbf{x}|}}{|\mathbf{x}|},
\end{equation}
and
\begin{equation}\label{gamma}
\begin{aligned}
\mathbf\Gamma^\omega(\boldsymbol{x}-\boldsymbol{y})= & -\frac{1}{4 \pi} \sum_{n=0}^{+\infty} \frac{\mathrm{i}^n}{(n+2) n!}\left(\frac{n+1}{c_s^{n+2}}+\frac{1}{c_p^{n+2}}\right) \omega^n |\boldsymbol{x}-\boldsymbol{y}|^{n-1}\mathbf{I} \\
& +\frac{1}{4 \pi} \sum_{n=0}^{+\infty} \frac{\mathrm{i}^n(n-1)}{(n+2) n!}\left(\frac{1}{c_s^{n+2}}-\frac{1}{c_p^{n+2}}\right) \omega^n|\boldsymbol{x}-\boldsymbol{y}|^{n-3}(\boldsymbol{x}-\boldsymbol{y})(\boldsymbol{x}-\boldsymbol{y})^T.
\end{aligned}
\end{equation}
If $\omega=0$, then $\boldsymbol{\Gamma}:=\boldsymbol{\Gamma}^{0}$ is the Kelvin matrix of the fundamental solution to the operator $\mathcal{L}_{\lambda, \mu}$, is given by
$$
\boldsymbol{\Gamma}(\mathbf{x}-\mathbf{y})=\frac{\alpha_1}{|\boldsymbol{x}-\boldsymbol{y}|}\mathbf{I}+\alpha_2 \frac{(\boldsymbol{x}-\boldsymbol{y})
(\boldsymbol{x}-\boldsymbol{y})^T}{|\boldsymbol{x}-\boldsymbol{y}|^3},
$$
where
$$
\alpha_1=-\frac{1}{8\pi}\left(\frac{1}{\mu}+\frac{1}{\lambda+2 \mu}\right), \quad \alpha_2=-\frac{1}{8\pi}\left(\frac{1}{\mu}-\frac{1}{\lambda+2 \mu}\right).
$$

For $\boldsymbol{\varphi} \in H^{-1 / 2}(\partial D)^{3}$, we define the single and double layer potentials
$$
\begin{gathered}
\mathbf{S}_{D}^{\omega}[\boldsymbol{\varphi}](\mathbf{x})=\int_{\partial D} \boldsymbol{\Gamma}^{\omega}(\mathbf{x}-\mathbf{y}) \boldsymbol{\varphi}(\mathbf{y}) d s(\mathbf{y}), \quad \mathbf{x} \in \mathbb{R}^{3}, \\
\mathbf{D}_{D}^{\omega}[\boldsymbol{\varphi}](\mathbf{x})=\int_{\partial D} \frac{\partial}{\partial \boldsymbol{\nu}_{\mathbf{y}}} \boldsymbol{\Gamma}^{\omega}(\mathbf{x}-\mathbf{y}) \boldsymbol{\varphi}(\mathbf{y}) d s(\mathbf{y}), \quad \mathbf{x} \in \mathbb{R}^{3} \backslash \partial D,
\end{gathered}
$$
where the conormal derivative $\partial / \partial \boldsymbol{\nu}$ of $\mathbf{u}$ on $\partial D$ is defined by
\begin{equation}\label{eq:01}
\frac{\partial \mathbf{u}}{\partial \boldsymbol{\nu}}=\lambda(\nabla \cdot \mathbf{u}) \boldsymbol{\nu}+\mu\left(\nabla \mathbf{u}+\nabla \mathbf{u}^{T}\right) \boldsymbol{\nu}.
\end{equation}
Here, $\boldsymbol{\nu}$ is the outward unit normal to the boundary to the boundary $\partial D$. On the boundary $\partial D$, the conormal derivative of the single layer potential satisfies the following jump relation,
\begin{equation}\label{eq:04}
\frac{\partial}{\partial \boldsymbol{\nu}} \mathbf{S}_{D}^{\omega}[\boldsymbol{\varphi}]|_{\pm}(\mathbf{x})=\left(\pm \frac{1}{2} I+\left(\mathbf{K}_{D}^{\omega}\right)^{*}\right)[\boldsymbol{\varphi}](\mathbf{x}), \quad \text { a.e. } \mathbf{x} \in \partial D,
\end{equation}
where the Neumann-Poincar\'e (NP) operator $\left(\mathbf{K}_{D}^{\omega}\right)^{*}$ is defined by
$$
\left(\mathbf{K}_{D}^{\omega}\right)^{*}[\boldsymbol{\varphi}](\mathbf{x})=\text { p.v. } \int_{\partial D} \frac{\partial}{\partial \boldsymbol{\nu}_{\mathbf{x}}} \boldsymbol{\Gamma}^{\omega}(\mathbf{x}-\mathbf{y}) \boldsymbol{\varphi}(\mathbf{y}) d s(\mathbf{y}),
$$
and $\left(\mathbf{K}_{D}^{\omega}\right)$ is the $L^{2}$-adjoint of $\left(\mathbf{K}_{D}^{\omega}\right)^{*}$, is given by
$$
\mathbf{K}_{D}^{\omega}[\boldsymbol{\varphi}](\mathbf{x})=\text { p.v. } \int_{\partial D} \frac{\partial}{\partial \boldsymbol{\nu}_{\mathbf{y}}} \boldsymbol{\Gamma}^{\omega}(\mathbf{x}-\mathbf{y}) \boldsymbol{\varphi}(\mathbf{y}) d s(\mathbf{y}),
$$
with p.v. denoting the Cauchy principal value. For simplicity, when $\omega=0$, we use $\mathbf{S}_{D}, \mathbf{D}_{D}, \mathbf{K}_{D}$, and $\mathbf{K}_{D}^{*}$ to signify $\mathbf{S}_{D}^{0}, \mathbf{D}_{D}^{0}, \mathbf{K}_{D}^{0}$, and $\left(\mathbf{K}_{D}^{0}\right)^{*}$.
The system \eqref{3equa} can be rewritten as the following transmission problem
\begin{equation}\label{eq:02}
\begin{cases}\mathcal{L}_{\lambda, \mu} \mathbf{u}(\mathbf{x})+\frac{\omega^{2}}{c} \mathbf{u}(\mathbf{x})=\mathbf{0}, & \mathbf{x} \in D \\ \mathcal{L}_{\lambda, \mu} \mathbf{u}(\mathbf{x})+\omega^{2} \mathbf{u}(\mathbf{x})=\mathbf{0}, & \mathbf{x} \in \mathbb{R}^{3} \backslash \overline{D} \\ \left.\mathbf{u}(\mathbf{x})\right|_{-}=\mathbf{u}(\mathbf{x})|_{+}, & \mathbf{x} \in \partial D \\ c\frac{\partial\mathbf{u}(\mathbf{x})}{\partial\boldsymbol{\nu}}|_{-}=\frac{\partial\mathbf{u}(\mathbf{x})}{\partial\boldsymbol{\nu}}|_{+}, & \mathbf{x} \in \partial D\\
\mathbf{u}(\mathbf{x})-\mathbf{H}(\mathbf{x})\quad \text{satisfies the radiation condition.}
\end{cases}
\end{equation}
Using the layer potential method, the solution to \eqref{eq:02} has the following form
\begin{equation}\label{eq:03}
\mathbf{u}(\mathbf{x})= \begin{cases}\mathbf{S}_{D}^{\omega_{1}}[\boldsymbol{\varphi}](\mathbf{x}), & \mathbf{x} \in D, \\[2pt] \mathbf{S}_{D}^{\omega_{2}}[\boldsymbol{\psi}](\mathbf{x})+\mathbf{H}(\mathbf{x}), & \mathbf{x} \in \mathbb{R}^{3} \backslash \overline{D},\end{cases}
\end{equation}
where $\boldsymbol{\varphi}, \boldsymbol{\psi} \in H^{-1 / 2}(\partial D)^{3}$ and $\omega_{1}=\frac{\omega}{\sqrt{c}}, \omega_{2}=\omega$. In order to satisfy the third and forth conditions in \eqref{eq:02} across $\partial D$, and using the jump relation \eqref{eq:04}, we can obtain
\begin{equation}\label{eq:16140000a}
\begin{cases}\mathbf{S}_{D}^{\omega_{1}}[\boldsymbol{\varphi}]-\mathbf{S}_{D}^{\omega_{2}}[\boldsymbol{\psi}]=\mathbf{H} & \text { on } \partial D \\[5pt] c\left(-\frac{I}{2}+\left(\mathbf{K}_{D}^{\omega_1}\right)^*\right)[\boldsymbol{\varphi}]-\left(\frac{I}{2} +\left(\mathbf{K}_{D}^{\omega_2}\right)^*\right)[\boldsymbol{\psi}]=\frac{\partial \mathbf{H}}{\partial \boldsymbol{\nu}} & \text { on } \partial D\end{cases}
\end{equation}

It is known that for $\omega$ small enough, $\mathbf{S}_{D}^{\omega}$ is invertible. Therefore, by using the first equation in \eqref{eq:16140000a}, one can directly obtain that
$$
\boldsymbol{\varphi}=\left(\mathbf{S}_{D}^{\omega_1}\right)^{-1}\left(\mathbf{S}_{D}^{\omega_2}[\boldsymbol{\psi}]+\mathbf{H}\right).
$$
Then, from the second equation in \eqref{eq:16140000a}, we have that
\begin{equation}\label{eq:main1}
\mathcal{A}_{D}(\omega)[\boldsymbol{\psi}]=\mathbf{F},
\end{equation}
where
\begin{equation}\label{F}
\begin{aligned}
\mathcal{A}_{D}(\omega) & =c\left(-\frac{I}{2}+\left(\mathbf{K}_{D}^{\omega_1}\right)^*\right)\left(\mathbf{S}_{D}^{\omega_1}\right)^{-1} \mathbf{S}_{D}^{\omega_2}-\left(\frac{I}{2}+\left(\mathbf{K}_{D}^{\omega_2}\right)^*\right),\\
\mathbf{F} & =\frac{\partial \mathbf{H}}{\partial \boldsymbol{\nu}}-c\left(-\frac{I}{2}+\left(\mathbf{K}_{D}^{\omega_1}\right)^*\right)\left(\mathbf{S}_{D}^{\omega_1}\right)^{-1}\left[\mathbf{H}\right].
\end{aligned}
\end{equation}

Clearly,

\begin{equation}\label{eq:AD0}
\begin{aligned}
\mathcal{A}_{D}(0)=\mathcal{A}_{D, 0} & =c\left(-\frac{I}{2}+\mathbf{K}_{D}^*\right)-\left(\frac{I}{2}+\mathbf{K}_{D}^*\right) \\
& =-\frac{c+1}{2}I+(c-1)\mathbf{K}_{D}^*.
\end{aligned}
\end{equation}
For subsequent use, we define $S_1^a$ and $S_1^b$ by
$$
\begin{aligned}
&S_1^a=\{\mathbf{x}|(x_1+L/2)^2+x_2^2+x_3^2=1, x_1<-L/2\},\\
&S_1^b=\{\mathbf{x}|(x_1-L/2)^2+x_2^2+x_3^2=1, x_1>L/2\},
\end{aligned}
$$
and also introduce the following regions:
$$
\iota_\delta(P):=\left\{\mathbf{x} ;\left|P-\mathbf{z}_{\mathbf{x}}\right|=\mathcal{O}(\delta), \mathbf{x} \in S^f\right\},
$$
$$
\iota_\delta(Q):=\left\{\mathbf{x} ;\left|Q-\mathbf{z}_{\mathbf{x}}\right|=\mathcal{O}(\delta), \mathbf{x} \in S^f\right\}.
$$
Let
\begin{small}
\begin{equation}\label{Psi}
\boldsymbol{\Psi}:=\left\{\left.\mathbf{v}\right|_{\partial D} \in H^{-1 / 2}(\partial D)^{3} \mid \mathbf{v}=\left(v_1, v_2, v_3\right)^T \text { satisfies } \partial_j v_i+\partial_i v_j=0 \text { for all } 1 \leq i, j \leq 3\right\} .
\end{equation}
\end{small}
Define $\tilde{\boldsymbol{\psi}}(\tilde{\mathbf{x}}):=\boldsymbol{\psi}(\mathbf{x})$, where $\mathbf{x} \in S^a, S^b$ and $\tilde{\mathbf{x}} \in S_1^a, S_1^b$.

\subsection{Statement of main results }
With the above preparation, we present our main results on quantitative analysis of the perturbed non-quasielastic field and the proofs will be given in the last section.

Before proceeding, we shall define some boundary integral for later usage. Let $\mathbf{y}\in S_1^f$ be parameterized by $(y,\theta_y)$, where $y\in[-L/2, L/2]$ and $\theta\in [0, 2\pi)$. For $\mathbf{x}\neq \mathbf{z}_{\mathbf{y}}$, $\boldsymbol{\phi}\in L^2[-L/2, L/2]^3$ and $\mathbf{N}\in L^2[-L/2, L/2]^{3\times 3}$, define
\begin{equation}\label{AA1}
\mathcal{A}_1[\boldsymbol{\phi}](\mathbf{x})=\int_{-L/2}^{L/2}\mathbf\Gamma\left(\mathbf{x}-\mathbf{z}_{\mathbf{y}}\right)\boldsymbol{\phi}(y) dy,
\end{equation}
and
\begin{equation}\label{AA2}
\mathcal{A}_2[\mathbf{N}](\mathbf{x})=\int_{-L/2}^{L/2}\sum_{i=2}^3\partial_{x_i}\mathbf\Gamma\left(\mathbf{x}-\mathbf{z}_{\mathbf{y}}\right)\mathbf{N}(y)\mathbf{e}_i dy.
\end{equation}
\begin{thm}\label{3main}
Let $\mathbf{u}$ be the solution of \eqref{3equa} related to the rod-shaped inclusion $D$, where the parameters are defined in \eqref{3lam}-\eqref{Ax}. Then, for $\mathbf{x} \in \mathbb{R}^3 \backslash \overline{D}$, it holds that
\begin{equation}\label{u}
\begin{aligned}
&\mathbf{u}(\mathbf{x})-\mathbf{H}(\mathbf{x})
=\delta^2\mathcal{A}_1[\mathbf{G}_{\mathbf{H}_0}+\omega\mathbf{G}_{\mathbf{H}_1}](\mathbf{x})
-\delta^2\mathcal{A}_2[\mathbf{R}_{\mathbf{H}_0}+\omega\mathbf{R}_{\mathbf{H}_1}](\mathbf{x})\\
&-\pi\delta^2(\lambda_1-\frac{1}{2})^{-1}\mathbf\Gamma\left(\mathbf{x}-P\right)\left[\lambda\Big(\nabla\cdot \big(\mathbf{H}_0(P)+\omega\mathbf{H}_1(P)\big)\Big)\mathbf{e}_1+2\mu\nabla^s\big(\mathbf{H}_0(P)+\omega\mathbf{H}_1(P)\big)\mathbf{e}_1\right]
\\&+\pi\delta^2(\lambda_1-\frac{1}{2})^{-1}\mathbf\Gamma\left(\mathbf{x}-Q\right)\left[\lambda\Big(\nabla\cdot \big(\mathbf{H}_0(Q)+\omega\mathbf{H}_1(Q)\big)\Big)\mathbf{e}_1+2\mu\nabla^s\big(\mathbf{H}_0(Q)+\omega\mathbf{H}_1(Q)\big)\mathbf{e}_1\right]\\
&+\alpha_3\omega\delta^2\int_{-L/2}^{L/2}\mathbf{G}_{\mathbf{H}_0}(y) dy-\alpha_3\pi\omega\delta^2(\lambda_1-\frac{1}{2})^{-1}\left[\lambda(\nabla\cdot \mathbf{H}_0(P))\mathbf{e}_1+2\mu\nabla^s \mathbf{H}_0(P)\mathbf{e}_1\right]\\
&+\alpha_3\pi\omega\delta^2(\lambda_1-\frac{1}{2})^{-1}\left[\lambda(\nabla\cdot \mathbf{H}_0(Q))\mathbf{e}_1+2\mu\nabla^s \mathbf{H}_0(Q)\mathbf{e}_1\right]\\
&+o(\delta^2)+\omega\cdot o(\delta^2)+\delta\cdot\mathcal{O}(\omega^2),
   \end{aligned}
\end{equation}
where $\lambda_1=\frac{c+1}{2(c-1)}$ and $\alpha_3$ will be given in \eqref{al3}. $\mathbf{H}_0$ and $\mathbf{H}_1$ are the zeroth-order and first-order terms of $\mathbf{H}$ with respect to $\omega$, respectively. $\mathbf{G}_{\mathbf{H}_i}, i=0, 1$ is defined by
\begin{equation}\label{eq:compG}
\begin{aligned}
\mathbf{G}_{\mathbf{H}_i}=&2\pi\Big((\lambda+2\mu)+ \frac{\lambda+\mu}{2\lambda_1-1}\Big)l_1 B\Big(\lambda\nabla\nabla\cdot\mathbf{H}_i(\mathbf{z}_{\mathbf{y}})+2\mu^2l_2\nabla\nabla\cdot \big(B \mathbf{H}_i(\mathbf{z}_{\mathbf{y}})\big)\Big)\\
& + \frac{\mu\pi}{\lambda_1}\Big(I_3-\mu l_2B\Big)\Big(\big(B\nabla\big)^T\big(\nabla\mathbf{H}_i(\mathbf{z}_{\mathbf{y}})+\nabla\mathbf{H}_i(\mathbf{z}_{\mathbf{y}})^T\big)\Big)^T\\
& +\frac{\mu\pi}{\lambda_1(2\lambda_1-1)}\Big(I_3-\mu l_2B\Big)\Big(\Delta \mathbf{H}_i(\mathbf{z}_{\mathbf{y}})-\partial_{y_1}^2 \mathbf{H}_i(\mathbf{z}_{\mathbf{y}})+ \nabla\nabla\cdot \big(B \mathbf{H}_i(\mathbf{z}_{\mathbf{y}})\big)\Big)\\
& -\frac{2\pi\mu^2}{2\lambda_1-1}l_2C\nabla\Big((\nabla\times \mathbf{H}_i(\mathbf{z}_{\mathbf{y}}))\cdot \mathbf{e}_1\Big),
\end{aligned}
\end{equation}\
with $l_1$ and $l_2$ are
\begin{equation}\label{l1}
l_1=\frac{1}{2\lambda_1(\lambda+2\mu)-\mu}, \quad \quad
l_2=\frac{1}{2\lambda_1(\lambda+2\mu)+\mu}.
\end{equation}
$I_3$ represents the three-dimensional identity matrix. The matrix $B$ and $C$ are defined by
\begin{equation}\label{eq:BC}
B=\left(\begin{array}{ccc} 0& 0 &0 \\
0 & 1& 0 \\
0& 0 & 1\end{array} \right), \quad
C=\left(\begin{array}{ccc} 0& 0 &0 \\
0 & 0& 1 \\
0& -1 & 0\end{array} \right).
\end{equation}
$\mathbf{R}_{\mathbf{H}_i}, i=0, 1$ is defined by
\begin{equation}\label{3F}
\begin{aligned}
\mathbf{R}_{\mathbf{H}_i}=&\pi\Big(\big(l_{01}\nabla\cdot \mathbf{H}_i(\mathbf{z}_{\mathbf{y}})+l_{02}\nabla\cdot(B\mathbf{H}_i(\mathbf{z}_{\mathbf{y}}))\big)I_3 \\
&+\frac{\mu}{\lambda_1}\big(I_3-\mu l_2 B\big)\big(\nabla \mathbf{H}_i(\mathbf{z}_{\mathbf{y}})+\nabla \mathbf{H}_i(\mathbf{z}_{\mathbf{y}})^T\big)\Big),
\end{aligned}
\end{equation}
with $l_{0i}, i=1, 2$ are defined by
\begin{equation}\label{l0}
l_{01}=
\frac{2\lambda(\lambda+2\mu)}{2(\lambda+2\mu)\lambda_1+\lambda}, \quad \quad
l_{02}=-\frac{2\mu(\lambda-\mu)(\lambda+2\mu)}{[2(\lambda+2\mu)\lambda_1+\mu][2(\lambda+2\mu)\lambda_1+\lambda]}.
\end{equation}
\end{thm}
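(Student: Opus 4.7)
The plan is to work from the reduced integral equation $\mathcal{A}_D(\omega)[\boldsymbol{\psi}]=\mathbf{F}$ in \eqref{eq:main1} together with the exterior representation $\mathbf{u}(\mathbf{x})-\mathbf{H}(\mathbf{x})=\mathbf{S}_D^{\omega_2}[\boldsymbol{\psi}](\mathbf{x})$, and to carry out a double asymptotic expansion in the two small parameters $\omega$ (quasi-static regime) and $\delta$ (slender body). Using the explicit series \eqref{gamma} of the Kupradze matrix, one reads off $\boldsymbol{\Gamma}^\omega=\boldsymbol{\Gamma}+\omega\alpha_3 I+\mathcal{O}(\omega^2)$, where $\alpha_3$ is precisely the $n=1$ coefficient of the scalar sum (which will later be named in \eqref{al3}). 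This induces $\mathcal{A}_D(\omega)=\mathcal{A}_{D,0}+\omega\mathcal{A}_{D,1}+\mathcal{O}(\omega^2)$, $\mathbf{F}=\mathbf{F}_0+\omega\mathbf{F}_1+\mathcal{O}(\omega^2)$, $\mathbf{H}=\mathbf{H}_0+\omega\mathbf{H}_1+\mathcal{O}(\omega^2)$, and $\boldsymbol{\psi}=\boldsymbol{\psi}_0+\omega\boldsymbol{\psi}_1+\mathcal{O}(\omega^2)$. Matching orders yields the hierarchy $\mathcal{A}_{D,0}[\boldsymbol{\psi}_0]=\mathbf{F}_0$ and $\mathcal{A}_{D,0}[\boldsymbol{\psi}_1]=\mathbf{F}_1-\mathcal{A}_{D,1}[\boldsymbol{\psi}_0]$. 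In view of \eqref{eq:AD0}, $\mathcal{A}_{D,0}=(c-1)(\mathbf{K}_D^*-\lambda_1 I)$, so every term in \eqref{u} is controlled by the resolvent of the elastic NP operator at $\lambda_1=\frac{c+1}{2(c-1)}$.

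The bulk of the work is the $\delta$-asymptotics. I decompose $\partial D=S^f\cup S^a\cup S^b$ and exploit the cylindrical parametrization $\mathbf{y}=\mathbf{z}_{\mathbf{y}}+\delta\boldsymbol{\nu}_{\mathbf{y}}$ of the lateral surface $S^f$, while rescaling $S^a,S^b$ to the unit half-spheres $S_1^a,S_1^b$ via $\mathbf{x}\mapsto\tilde{\mathbf{x}}$. On $S^f$, the asymptotic behavior of $\mathbf{S}_D,\mathbf{K}_D^*$ from Section~3 is combined with a Taylor expansion of $\mathbf{H}_i$ around $\mathbf{z}_{\mathbf{y}}$, and $\mathcal{A}_{D,0}$ is inverted mode-by-mode in the azimuthal angle $\theta$. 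The circular cross-section integrations absorb two factors of $\delta$ (one from the arc measure and one from $\boldsymbol{\nu}_{\mathbf{y}}-\mathbf{e}_{\theta}$ smallness) and distill the matrix structure $B$, $C$ together with the spectral parameters $l_1,l_2,l_{01},l_{02}$ of \eqref{l1}, \eqref{l0}. Collapsing the remaining surface integrals in $\mathbf{S}_D^{\omega_2}[\boldsymbol{\psi}]$ to one-dimensional integrals along $\Gamma_0$ then produces the $\delta^2\mathcal{A}_1[\mathbf{G}_{\mathbf{H}_0}+\omega\mathbf{G}_{\mathbf{H}_1}]$ and $-\delta^2\mathcal{A}_2[\mathbf{R}_{\mathbf{H}_0}+\omega\mathbf{R}_{\mathbf{H}_1}]$ contributions from \eqref{AA1} and \eqref{AA2}, with the $\alpha_3\omega\delta^2\int_{-L/2}^{L/2}\mathbf{G}_{\mathbf{H}_0}(y)\,dy$ correction arising from the $\omega$-linear coefficient $\alpha_3 I$ in $\boldsymbol{\Gamma}^{\omega_2}$.

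The endpoint contributions at $P$ and $Q$ are obtained separately. Near each endpoint, the leading part of $\mathbf{K}_D^*$ acts (after the rescaling defining $\tilde{\boldsymbol{\psi}}$) as the full-sphere elastic NP operator, whose eigenvalue on the subspace transverse to $\boldsymbol{\Psi}$ relevant here is $\tfrac12$. Solving $(\mathbf{K}_D^*-\lambda_1 I)[\boldsymbol{\psi}]=\mathbf{F}$ on the cap therefore yields a $(\lambda_1-\tfrac12)^{-1}$ amplification factor acting on the inhomogeneity $\frac{\partial \mathbf{H}_i}{\partial\boldsymbol{\nu}}\big|_{P,Q}=\lambda(\nabla\cdot\mathbf{H}_i)\mathbf{e}_1+2\mu\nabla^s\mathbf{H}_i\,\mathbf{e}_1$, evaluated at the appropriate endpoint with a sign from the outward normal. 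Substituting back into $\mathbf{S}_D^{\omega_2}$ and using $\boldsymbol{\Gamma}^{\omega_2}(\mathbf{x}-\mathbf{y})=\boldsymbol{\Gamma}(\mathbf{x}-P)+\omega\alpha_3 I+\cdots$ for $\mathbf{y}\in S^a$ and $\mathbf{x}$ away from $D$, together with the scaled cap area $\pi\delta^2$, produces the two endpoint terms $\mp\pi\delta^2(\lambda_1-\tfrac12)^{-1}\boldsymbol{\Gamma}(\mathbf{x}-P),\boldsymbol{\Gamma}(\mathbf{x}-Q)$ and, at next order in $\omega$, the matching $\alpha_3\pi\omega\delta^2(\lambda_1-\tfrac12)^{-1}$ corrections in \eqref{u}.

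The principal obstacle is the quantitative inversion of $\mathcal{A}_{D,0}$ on the anisotropic geometry: $\mathbf{K}_D^*$ is not diagonalizable in closed form, its spectrum depends on $\delta$, and its lateral kernel becomes nearly singular along $\Gamma_0$ as $\delta\to 0$, so a naive Neumann series fails. I resolve this by a three-piece decomposition of the density into (i) its azimuthally symmetric mode on $S^f$, (ii) higher Fourier modes in $\theta$ on $S^f$ whose leading contribution feeds the $\mathcal{A}_2$ term, and (iii) boundary layers localized in $\iota_\delta(P)\cup\iota_\delta(Q)$ matched to the rescaled half-sphere problems on $S_1^a,S_1^b$. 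Closing this matching on the transition circles, and bookkeeping the remainders from the Section~3 expansions of $\mathbf{S}_D$ and $\mathbf{K}_D^*$, is what gives the claimed error $o(\delta^2)+\omega\cdot o(\delta^2)+\delta\cdot\mathcal{O}(\omega^2)$ in \eqref{u}.
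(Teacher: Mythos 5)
Your high-level strategy (double expansion in $\omega$ and $\delta$, cylindrical parametrization of $S^f$, rescaling of the caps, mode-by-mode inversion in the azimuthal angle, collapse to one-dimensional integrals along $\Gamma_0$) is consistent with the structure of the paper's argument, which passes through Lemmas~\ref{leS}--\ref{leSa}. However, your treatment of the endpoint contributions contains a genuine error of mechanism. You claim that near $P$ and $Q$ the leading part of $\mathbf{K}_D^*$ ``acts as the full-sphere elastic NP operator, whose eigenvalue on the subspace transverse to $\boldsymbol{\Psi}$ relevant here is $\tfrac12$.'' This is backwards: the value $\tfrac12$ is the eigenvalue of $\mathbf{K}_D$ \emph{on} the space of rigid motions $\boldsymbol{\Psi}$ (since $\mathbf{D}_D[\mathbf{v}]=\mathbf{v}$ in $D$ and $\mathbf{D}_D[\mathbf{v}]=0$ outside for $\mathbf{v}\in\boldsymbol{\Psi}$, so $\mathbf{K}_D[\mathbf{v}]=\tfrac12\mathbf{v}$), not on the transverse subspace, and it is not a spectral feature of a rescaled hemisphere problem. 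The actual source of the $(\lambda_1-\tfrac12)^{-1}$ factor in the paper is the global integral identity obtained by pairing the boundary equation $(\lambda_1 I - \mathbf{K}_D^*)[\boldsymbol{\psi}_i]=\mathbf{U}_i$ against constant vectors: since $\int_{\partial D}\mathbf{K}_D^*[\boldsymbol{\psi}]\,ds=\tfrac12\int_{\partial D}\boldsymbol{\psi}\,ds$ (the adjoint of $\mathbf{K}_D[\mathbf{c}]=\tfrac12\mathbf{c}$), one gets $(\lambda_1-\tfrac12)\int_{\partial D}\boldsymbol{\psi}_i=\int_{\partial D}\mathbf{U}_i$, and localizing the source $\mathbf{U}_i$ to $S^a\cup\iota_\delta(P)$ then produces $\int_{S^a\cup\iota_\delta(P)}\boldsymbol{\psi}_i\approx(\lambda_1-\tfrac12)^{-1}\int_{S^a\cup\iota_\delta(P)}\frac{\partial\mathbf{H}_i(P)}{\partial\boldsymbol\nu}$, whence the cap charge that enters $\boldsymbol{\Gamma}(\mathbf{x}-P)$. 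Your boundary-layer matching as described does not supply this identity, and if one actually tried to diagonalize the elastic NP operator of a half-sphere (not a full sphere) one would not obtain $\tfrac12$ on the required modes, so this step of your proposal would fail as written.

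A secondary, less serious issue: in setting up your $\omega$-hierarchy you include a term $\mathcal{A}_{D,1}[\boldsymbol{\psi}_0]$. In fact $\mathcal{A}_{D,1}$ vanishes (this is Lemma~\ref{le:AD}, and rests on $(-\tfrac12 I+\mathbf{K}_D^*)\mathbf{S}_D^{-1}[\mathbf{c}]=\mathbf{0}$ together with the fact that $\mathbf{S}_{D,1}$ is a constant), so the correct hierarchy is simply $(\lambda_1 I-\mathbf{K}_D^*)[\boldsymbol{\psi}_i]=\frac{\partial\mathbf{H}_i}{\partial\boldsymbol\nu}$, $i=0,1$, which is what feeds into Lemma~\ref{leD}. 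Your proposal would presumably recover this cancellation upon computation, but as stated the hierarchy is incorrect.
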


\begin{rem}
It can be seen from Theorem \ref{3main} that the scattered field concentrate near the two end points (a.k.a, high curvature parts) of the nanorod, which demonstrate that high curvature is a primary driver of intense elastic field concentration.
\end{rem}

With the asymptotic result for the non-quasielastic field, we are now in the position of presenting our main results regarding polariton resonance.
Define the elastic energy functional
\begin{equation}\label{Eu}
E(\mathbf{u})=\int_{\mathbb{R}^{3} \backslash \overline{D}}\nabla^s \overline{\mathbf{u}(\mathbf{x})}: \mathcal{C} \nabla^s \mathbf{u}(\mathbf{x}) d \mathbf{x}.
\end{equation}
Here $A: B=\sum_{i, j} a_{i j} b_{i j} \text { for two matrices } A=\left(a_{i j}\right) \text { and } B=\left(b_{i j}\right).$ In fact, the macroscopic effect exhibited by polarization resonance is the maximization of elastic energy, or even its blow up. Therefore, we will explore in detail the conditions under which the elastic parameters, frequency, and geometry lead to maximized or blown up elastic energy, and further characterize the intrinsic relationship between the blow-up rate and the aforementioned parameters. As for the rigorous definition of resonance, we can refer to \cite{DYLHZGH}. Next we obtain the results for polariton resonance as follows.
\begin{thm}\label{Thm11}
{\rm Let $c=c_0+i\varrho$, where $\varrho>0$ is the loss parameter. Let $\mathbf{u}^s$ be the scattering solution of \eqref{3equa}.\\
(1) If $c_0=-\frac{\lambda+\mu}{\lambda+3\mu}$ and the source densities $\mathbf{G}^{11}_{\mathbf{H}_0}(y) \not\equiv \mathbf{0}$ or $\mathbf{R}^{11}_{\mathbf{H}_0}(y)\mathbf{e}_i \not\equiv \mathbf{0}$ for $i=2, 3$ on $[-L/2, L/2]$, let
$$
\mathbf{B}_0^1(\mathbf{x})
=\mathcal{A}_1[\mathbf{G}^{11}_{\mathbf{H}_0}](\mathbf{x})-\mathcal{A}_2[\mathbf{R}^{11}_{\mathbf{H}_0}](\mathbf{x}),
$$
where the operators $\mathcal{A}_1$, $\mathcal{A}_2$ are defined in \eqref{AA1}-\eqref{AA2}, and $\mathbf{G}^{11}_{\mathbf{H}_0}$,  $\mathbf{R}^{11}_{\mathbf{H}_0}$ are given in \eqref{eqG11} and \eqref{eqR11}.
Then it holds
$$
E(\mathbf{u}^s)\sim\frac{4(\lambda+2\mu)^2\delta^4}{(\lambda+3\mu)^4\varrho^2}\int_{\mathbb{R}^{3} \backslash \overline{D}}\nabla^s \overline{\mathbf{B}_0^1(\mathbf{x})}: \mathcal{C} \nabla^s \mathbf{B}_0^1(\mathbf{x}) d \mathbf{x}=\mathcal{O}(\varrho^{-2}\delta^4).
$$
Furthermore, if $\varrho=o\left(\delta^2\right)$ (as $\delta \rightarrow 0, \varrho \rightarrow 0$ ), then it follows that
$$
\sqrt{E(\mathbf{u}^s)} \rightarrow \infty.
$$
(2) If $c_0=-\frac{\mu}{\lambda+\mu}$ and the source densities $\mathbf{R}^{21}_{\mathbf{H}_0}(y)\mathbf{e}_i \not\equiv \mathbf{0}$ for $i=2, 3$ on $[-L/2, L/2]$, where
\begin{equation}
\begin{aligned}
\mathbf{R}^{21}_{\mathbf{H}_0}=&\pi\Big(\nabla\cdot \mathbf{H}_0(\mathbf{z}_{\mathbf{y}})-\frac{\mu(\lambda-\mu)}{\lambda\big(2(\lambda+2\mu)\lambda_1+\mu\big)}\nabla\cdot(B\mathbf{H}_0(\mathbf{z}_{\mathbf{y}}))\Big)I_3,
\end{aligned}
\end{equation}
then it holds
$$
E(\mathbf{u}^s)\sim\frac{\lambda^2(\lambda+2\mu)^4\delta^4}{(\lambda+\mu)^4\varrho^2}\int_{\mathbb{R}^{3} \backslash \overline{D}}\nabla^s \overline{\mathcal{A}_2[\mathbf{R}^{21}_{\mathbf{H}_0}](\mathbf{x})}: \mathcal{C} \nabla^s \mathcal{A}_2[\mathbf{R}^{21}_{\mathbf{H}_0}](\mathbf{x}) d \mathbf{x}=\mathcal{O}(\varrho^{-2}\delta^4).
$$
Furthermore, if $\varrho=o\left(\delta^2\right)$ (as $\delta \rightarrow 0, \varrho \rightarrow 0$ ), then it follows that
$$
\sqrt{E(\mathbf{u}^s)} \rightarrow \infty.
$$
(3) If $c_0\rightarrow -1$ and the source densities $\mathbf{G}^{31}_{\mathbf{H}_0}(y) \not\equiv \mathbf{0}$ or $\mathbf{R}^{31}_{\mathbf{H}_0}(y)\mathbf{e}_i \not\equiv \mathbf{0}$ for $i=2, 3$ on $[-L/2, L/2]$, let
$$
\mathbf{B}_0^3(\mathbf{x})=\mathcal{A}_1[\mathbf{G}^{31}_{\mathbf{H}_0}](\mathbf{x})-\mathcal{A}_2[\mathbf{R}^{31}_{\mathbf{H}_0}](\mathbf{x}),
$$
where
$\mathbf{G}^{31}_{\mathbf{H}_0}$ and $\mathbf{R}^{31}_{\mathbf{H}_0}$ are defined in \eqref{GH4} and \eqref{RH4}, respectively. Then it holds
$$
E(\mathbf{u}^s)\sim\frac{16\delta^4}{(c_0+1)^2+\varrho^2}\int_{\mathbb{R}^{3} \backslash \overline{D}}\nabla^s \overline{\mathbf{B}_0^3(\mathbf{x})}: \mathcal{C} \nabla^s \mathbf{B}_0^3(\mathbf{x}) d \mathbf{x}=\mathcal{O}\left(\frac{\delta^4}{(c_0+1)^2+\varrho^2}\right).
$$
Furthermore, if $(c_0+1)^2+\varrho^2=o\left(\delta^4\right)$ (as $\delta \rightarrow 0, \varrho \rightarrow 0$ ), then it follows that
$$
\sqrt{E(\mathbf{u}^s)} \rightarrow \infty.
$$
(4) If $c_0\rightarrow +\infty$ and $\mathbf{G}^{41}_{\mathbf{H}_0}(y) \not\equiv \mathbf{0}$ on $[-L/2, L/2]$ or $\mathbf{C}_P \neq \mathbf{0}$ or $\mathbf{C}_Q \neq \mathbf{0}$, let
\begin{equation}
\begin{aligned}
\mathbf{B}_0^4(\mathbf{x})
=&\mathcal{A}_1[\mathbf{G}^{41}_{\mathbf{H}_0}](\mathbf{x})-\pi\mathbf\Gamma\left(\mathbf{x}-P\right)\mathbf{C}_P
+\pi\mathbf\Gamma\left(\mathbf{x}-Q\right)\mathbf{C}_Q,
\end{aligned}
\end{equation}
where $\mathbf{G}^{41}_{\mathbf{H}_0}$ is defined in \eqref{GH5}, $\mathbf{C}_P = \lambda\big(\nabla\cdot \mathbf{H}_0(P)\big)\mathbf{e}_1+2\mu\nabla^s\mathbf{H}_0(P)\mathbf{e}_1$ and $\mathbf{C}_Q = \lambda\big(\nabla\cdot \mathbf{H}_0(Q)\big)\mathbf{e}_1+2\mu\nabla^s\mathbf{H}_0(Q)\mathbf{e}_1$. Then it holds
$$
E(\mathbf{u}^s)\sim|c-1|^2\delta^4\int_{\mathbb{R}^{3} \backslash \overline{D}}\nabla^s \overline{\mathbf{B}_0^4(\mathbf{x})}: \mathcal{C} \nabla^s \mathbf{B}_0^4(\mathbf{x}) d \mathbf{x}=\mathcal{O}(\delta^4c_0^2).
$$
Furthermore, if $c_0^{-2}=o\left(\delta^4\right)$ (as $\delta \rightarrow 0$ ), then it follows that
$$
\sqrt{E(\mathbf{u}^s)} \rightarrow \infty.
$$
}
\end{thm}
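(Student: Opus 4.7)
My plan is to exploit the asymptotic expansion of $\mathbf{u}-\mathbf{H}$ provided by Theorem \ref{3main}, and to isolate, in each of the four cases, the resonant coefficient whose denominator is driven to zero by the critical choice of $c_0$. All the $c$-dependent scalars that can become singular in the expansion are $l_1$, $l_2$, $l_{01}$, $l_{02}$, the prefactor $1/\lambda_1$ appearing inside $\mathbf{G}_{\mathbf{H}_i}$ and $\mathbf{R}_{\mathbf{H}_i}$, and the factor $(\lambda_1-1/2)^{-1}$ weighting the two point contributions at $P$ and $Q$. Since $\lambda_1=(c+1)/(2(c-1))$ is a M\"obius transformation of $c$, each of these scalars is a meromorphic function of $c$, so the resonance analysis reduces to locating the simple poles of these rational functions and computing their residues.

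I would then treat the four cases uniformly by setting $c=c_0+\mathrm{i}\varrho$ and Taylor-expanding the vanishing denominator using $d\lambda_1/dc=-(c-1)^{-2}$. For case (1), $c_0=-(\lambda+\mu)/(\lambda+3\mu)$ produces $2\lambda_1(\lambda+2\mu)+\mu=0$, so $l_2$ is resonant with leading size $\mathcal{O}(\varrho^{-1})$ and explicit prefactor $2(\lambda+2\mu)/(\lambda+3\mu)^{2}$, which is exactly what is required by the stated $4(\lambda+2\mu)^2/((\lambda+3\mu)^4\varrho^2)$ factor in the energy. For case (2), $c_0=-\mu/(\lambda+\mu)$ makes $2(\lambda+2\mu)\lambda_1+\lambda=0$, so $l_{01}$ is the resonant factor. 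For case (3), $c_0=-1$ drives $\lambda_1\to 0$, and $1/\lambda_1$ becomes dominant, producing the $((c_0+1)^2+\varrho^2)^{-1}$ scaling after separating real and imaginary parts. For case (4), the identity $\lambda_1-1/2=1/(c-1)$ implies that the point contributions at $P,Q$ acquire a multiplicative factor $c-1\sim c_0$, which accounts for the $|c-1|^2$ prefactor. In each case, collecting the finite residues of $\mathbf{G}_{\mathbf{H}_0}$ and $\mathbf{R}_{\mathbf{H}_0}$ around the pole yields the effective densities $\mathbf{G}^{j1}_{\mathbf{H}_0}$, $\mathbf{R}^{j1}_{\mathbf{H}_0}$, and the claimed field $\mathbf{B}_0^j$ is then reconstructed from them through the operators $\mathcal{A}_1$, $\mathcal{A}_2$ of \eqref{AA1}-\eqref{AA2}.

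With $\mathbf{u}^s\sim(\text{resonant scalar})\cdot\delta^2\mathbf{B}_0^j$ to leading order, I would substitute into the energy functional \eqref{Eu}. Since $\mathbf{B}_0^j$ is a fixed field independent of $c$ and $\varrho$, the resonant scalar factors out as a squared modulus, yielding the claimed asymptotic equivalences. The blow-up conditions then follow immediately by sending the resulting right-hand side to $+\infty$: $\varrho=o(\delta^2)$ for cases (1) and (2), $(c_0+1)^2+\varrho^2=o(\delta^4)$ for case (3), and $c_0^{-2}=o(\delta^4)$ for case (4). The hard part will be ensuring that the remainder terms $o(\delta^2)$, $\omega\cdot o(\delta^2)$ and $\delta\cdot\mathcal{O}(\omega^2)$ in \eqref{u} remain negligible as $c\to c_0$, since the leading resonant contribution is enhanced by $\varrho^{-1}$ (or $|c-1|$) while the remainders are stated without any such enhancement; this may require revisiting the derivation of Theorem \ref{3main} to track explicit $c$ and $\varrho$ dependence in the implicit constants, and confirming that the non-resonant summands inside $\mathbf{G}_{\mathbf{H}_0}$ and $\mathbf{R}_{\mathbf{H}_0}$ stay uniformly bounded near $c_0$. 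Under the stated non-vanishing hypotheses on $\mathbf{G}^{j1}_{\mathbf{H}_0}$, $\mathbf{R}^{j1}_{\mathbf{H}_0}$, $\mathbf{C}_P$, $\mathbf{C}_Q$, ellipticity of the elastic bilinear form ensures that $\int_{\mathbb{R}^3\setminus\overline{D}}\nabla^s\overline{\mathbf{B}_0^j}:\mathcal{C}\nabla^s\mathbf{B}_0^j\,d\mathbf{x}$ is a strictly positive constant, and the energy blow-up conclusions follow.
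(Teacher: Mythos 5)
Your overall strategy matches the paper's: identify the resonant rational function of $c$ in the expansion from Theorem~\ref{3main}, expand it about $c_0$ via $\lambda_1=\frac{c+1}{2(c-1)}$, factor $\mathbf{u}^s$ as (resonant scalar)$\cdot\delta^2\mathbf{B}_0^j + $ (bounded remainder), and let the squared modulus of the scalar carry the blow-up. Your identification of the resonant factors is correct in all four cases: $l_2$ for (1) with leading coefficient $2i(\lambda+2\mu)/(\lambda+3\mu)^2$, $l_{01}$ for (2) via $2(\lambda+2\mu)\lambda_1+\lambda\to 0$, $1/\lambda_1\sim -4/(c+1)$ for (3), and $(\lambda_1-1/2)^{-1}=c-1$ for (4). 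These reproduce the stated prefactors, and the decomposition into singular/regular parts is exactly the paper's step.

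The genuine gap is in your last step. You write that ``ellipticity of the elastic bilinear form ensures that $\int_{\mathbb{R}^3\setminus\overline{D}}\nabla^s\overline{\mathbf{B}_0^j}:\mathcal{C}\nabla^s\mathbf{B}_0^j\,d\mathbf{x}$ is a strictly positive constant'' once the source densities are not identically zero. This does not follow from ellipticity alone: the quadratic form $\mathbf{v}\mapsto\int\nabla^s\overline{\mathbf{v}}:\mathcal{C}\nabla^s\mathbf{v}$ has a nontrivial kernel, namely the space $\boldsymbol{\Psi}$ of infinitesimal rigid motions defined in \eqref{Psi}. Non-vanishing of $\mathbf{G}^{j1}_{\mathbf{H}_0}$ or $\mathbf{R}^{j1}_{\mathbf{H}_0}$ does not immediately rule out that $\mathcal{A}_1[\cdot]-\mathcal{A}_2[\cdot]$ lands in $\boldsymbol{\Psi}$; nothing in your argument blocks, say, a cancellation giving $\mathbf{B}_0^j\equiv\mathbf{a}+\mathbf{M}\mathbf{x}$ with $\mathbf{M}$ skew. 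The paper closes this hole via Lemma~\ref{le12345}: one first shows (by decay at infinity) that if $\mathbf{B}_0^j\in\boldsymbol{\Psi}$ then $\mathbf{B}_0^j\equiv 0$, and then a local singularity analysis near $\Gamma_0$ (logarithmic blow-up of $\mathcal{A}_1$ and $\rho^{-1}$ blow-up of $\mathcal{A}_2$ as $\mathbf{x}\to\mathbf{z}_{x_0}$, including the check that the angular profile $\mathbf{A}(\theta)$ cannot vanish identically unless all relevant components of the source vanish) shows this is incompatible with a non-trivial source density. That argument is the technical core of the lemma and is missing from your proposal.

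On the other hand, your flagged concern about uniformity of the $o(\delta^2)$, $\omega\cdot o(\delta^2)$, $\delta\cdot\mathcal{O}(\omega^2)$ remainders as $c\to c_0$ is a legitimate point: the paper asserts boundedness of the regular part $\mathbf{J}^j$ without explicitly verifying that those implicit constants stay bounded while the resonant scalar is enhanced. That observation goes beyond the paper's own treatment and would strengthen the argument if carried out.
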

\begin{rem}
We highlight a unique feature of the resonance in Case (4) of Theorem \ref{Thm11} that distinguishes it from the surface polariton resonances in Cases (1)-(3). In Cases (1)-(3), the resonant fields are distributed along the entire lateral surface of the nanorod. In sharp contrast, the energy concentration in Case (4) occurs exclusively near the two endpoints $P$ and $Q$.\\
To visualize this phenomenon explicitly, consider the specific background field $\mathbf{H}_0(\mathbf{x})$, which, as we will verify later, satisfies the static Lam\'e system:
$$
\mathbf{H}_0(\mathbf{x}) = (x_2, x_1, 0)^T.
$$
In this scenario, direct calculations show that the distributed source density vanishes identically, i.e., $\mathbf{G}^{41}_{\mathbf{H}_0}(y) \equiv \mathbf{0}$ on $[-L/2, L/2]$. Consequently, the layer potential term $\mathcal{A}_1$ makes no contribution, and the corrector field $\mathbf{B}_0^4$ is determined solely by:
$$
\mathbf{B}_0^4(\mathbf{x}) = 2\pi\mu \Big[ \mathbf\Gamma(\mathbf{x}-Q) - \mathbf\Gamma(\mathbf{x}-P) \Big] \mathbf{e}_2.
$$
Substituting this explicit expression into the energy estimate in Theorem \ref{Thm11} (4), we obtain:
\begin{small}
$$
E(\mathbf{u}^s) \sim 4\pi^2\mu^2|c-1|^2\delta^4\int_{\mathbb{R}^{3} \backslash \overline{D}} \nabla^s \left( \Big[ \mathbf\Gamma(\mathbf{x}-Q) - \mathbf\Gamma(\mathbf{x}-P) \Big] \mathbf{e}_2 \right) : \mathcal{C} \nabla^s \left( \Big[ \mathbf\Gamma(\mathbf{x}-Q) - \mathbf\Gamma(\mathbf{x}-P) \Big] \mathbf{e}_2 \right) d \mathbf{x}.
$$
\end{small}
This integral representation clearly demonstrates that the scattering energy is generated entirely by the interaction between the fundamental solutions centered at the two endpoints $P$ and $Q$, which shows the enhancement of resonance near the high curvature points.
\end{rem}
\section{Asymptotic expansions of layer potentials}
In this section, we shall consider the asymptotic behavior of the perturbed elastic field, while the key is to derive some key asymptotic behavior for layer potentials related to lam\'e system.

Define a new inner product on $H^{-1 / 2}(\partial D)^3$ by

\begin{equation}\label{eq111}
(\boldsymbol{\varphi}, \boldsymbol{\psi})_{\mathcal{H}^*}:=-\left\langle\boldsymbol{\varphi}, \mathbf{S}_D[\boldsymbol{\psi}]\right\rangle, \quad \boldsymbol{\varphi}, \boldsymbol{\psi} \in H^{-1 / 2}(\partial D)^3,
\end{equation}

where $\langle\cdot, \cdot\rangle$ is the duality pairing of $H^{-1 / 2}(\partial D)^3$ and $H^{1 / 2}(\partial D)^3$. Since the operator $\mathbf{S}_D$ is invertible from $H^{-1 / 2}(\partial D)^3$ to $H^{1 / 2}(\partial D)^3$, hence the right side of \eqref{eq111} is well-defined. It can be verified that $(\cdot, \cdot)_{\mathcal{H}^*}$ is the inner product on $H^{-1 / 2}(\partial D)^3$ and we denote by $\|\cdot\|_{\mathcal{H}^*}$ the norm induced by this inner product. We denote the space $H^{-1 / 2}(\partial D)^3$ with the new inner product by $\mathcal{H}^*$. Also define a new inner product on $H^{1 / 2}(\partial D)^3$  by
\begin{equation}
(\boldsymbol{f}, \boldsymbol{g})_{\mathcal{H}}:=-\left\langle\mathbf{S}_D^{-1}[\boldsymbol{f}], \boldsymbol{g}\right\rangle, \quad \boldsymbol{f}, \boldsymbol{g} \in H^{1 / 2}(\partial D)^3,
\end{equation}
$(\cdot, \cdot)_{\mathcal{H}}$ is the inner product on $H^{1 / 2}(\partial D)^3$ and we denote by $\|\cdot\|_{\mathcal{H}}$ the norm induced by this inner product. We denote the space $H^{1 / 2}(\partial D)^3$ with the new inner product by $\mathcal{H}$. Moreover, $\mathbf{S}_D$ is an isometry between $\mathcal{H}^*$ and $\mathcal{H}$. $A \lesssim B$ means $A \leq C B$ for some generic positive constant $C$. In this paper, the notation $A \sim B$ (as $\delta \rightarrow 0$ ) is defined to mean that $A$ is asymptotically equivalent to $B$.

\begin{lem}\label{le:14151111}
For static single layer potential operator $\mathbf{S}_D$, we have
\begin{equation}\label{eq:13520927}
\mathbf{S}_{D}[\boldsymbol{\varphi}](\mathbf{x})=-4 \pi\delta \ln \delta\left(\begin{array}{c}

\left(\alpha_1+\alpha_2\right) \kappa_0^1 \\

\alpha_1 \kappa_0^2 \\

\alpha_1 \kappa_0^3

\end{array}\right)+o(\delta \ln \delta),
\end{equation}
and for any $\boldsymbol{\varphi}\in \mathcal{H}^*$, $\mathbf{u}\in \mathcal{H}$, it holds
\begin{equation}\label{eq:15510927}
\begin{aligned}
\|\boldsymbol{\varphi}\|_{\mathcal{H}^*}&\sim \sqrt{8} \pi \left(\delta |\ln \delta|^{1/2}\right) \sqrt{-\bigg(\int_{-L / 2}^{L / 2}  \Big(\left(\alpha_1+\alpha_2\right)(\kappa_0^1)^2+\alpha_1(\kappa_0^2)^2+\alpha_1(\kappa_0^3)^2\Big) d x_1\bigg)},\\
\|\mathbf{u}\|_{\mathcal{H}}&=\mathcal{O}(\delta |\ln \delta|^{1/2}),
\end{aligned}
\end{equation}
where
$$
\kappa^i_0=\frac{1}{2 \pi} \int_0^{2 \pi} \varphi_i\left(x_1, \theta_1\right)d \theta_1,\quad i=1, 2, 3,
$$
and
$$
\alpha_1=-\frac{1}{8\pi}\left(\frac{1}{\mu}+\frac{1}{\lambda+2 \mu}\right)<0, \quad
\alpha_2=-\frac{1}{8\pi}\left(\frac{1}{\mu}-\frac{1}{\lambda+2 \mu}\right)<0.
$$
\end{lem}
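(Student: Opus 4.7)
The plan is to extract the leading-order asymptotic behavior of $\mathbf{S}_D[\boldsymbol{\varphi}](\mathbf{x})$ as $\delta\to 0$ by direct evaluation of the layer-potential integral, exploiting the cylindrical symmetry of the lateral surface $S^f$. First, I would decompose $\partial D = S^a \cup S^f \cup S^b$ and show that the two hemispherical end-caps contribute at most $\mathcal{O}(\delta^2)$, since on each cap the surface measure scales as $\delta^2$ while the Kelvin kernel stays bounded for $\mathbf{x}$ sitting in the interior of $S^f$. The dominant part thus comes from $S^f$, which I would parameterize via $\mathbf{y} = \mathbf{z}_{\mathbf{y}} + \delta\boldsymbol{\nu}_{\mathbf{y}}$ with $(y_1,\theta_2)\in[-L/2,L/2]\times[0,2\pi)$.

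Next, I would Fourier-decompose $\boldsymbol{\varphi}(y_1,\theta_2)$ in $\theta_2$ and argue that only the zeroth mode $\boldsymbol{\kappa}_0(y_1)=\frac{1}{2\pi}\int_0^{2\pi}\boldsymbol{\varphi}(y_1,\theta_2)\,d\theta_2$ contributes to the leading $\delta|\ln\delta|$ order; higher modes $e^{in\theta_2}$ oscillate against the (nearly) rotation-invariant kernel and kill the logarithmic singularity. For the zeroth mode, using $|\mathbf{x}-\mathbf{y}|^2=(x_1-y_1)^2+4\delta^2\sin^2\frac{\theta_1-\theta_2}{2}$ together with the change of variables $u=y_1-x_1$, $\phi=\theta_2-\theta_1$ followed by the rescaling $u=2\delta|\sin(\phi/2)|\,s$, the canonical integral $\int\!\!\int\delta\,\frac{du\,d\phi}{\sqrt{u^2+4\delta^2\sin^2(\phi/2)}}$ reduces to $\delta\int_0^{2\pi}\!\int_{-L_\phi}^{L_\phi}\!\frac{ds\,d\phi}{\sqrt{s^2+1}}$ with $L_\phi\sim 1/(\delta|\sin(\phi/2)|)$; evaluating the inner integral produces $-2\ln\delta+\mathcal{O}(1)$, and integration over $\phi$ yields the universal leading coefficient $-4\pi\delta\ln\delta+\mathcal{O}(\delta)$.

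I would then apply this template to the two pieces of $\boldsymbol{\Gamma}$. The scalar part $\alpha_1 I/|\mathbf{x}-\mathbf{y}|$ contributes $-4\pi\alpha_1\delta\ln\delta\,\boldsymbol{\kappa}_0$ in every component. For the rank-one part $\alpha_2(\mathbf{x}-\mathbf{y})(\mathbf{x}-\mathbf{y})^T/|\mathbf{x}-\mathbf{y}|^3$, the key observation is that $(\mathbf{x}-\mathbf{y})_1=x_1-y_1$ carries no $\delta$, whereas $(\mathbf{x}-\mathbf{y})_j=\delta(\nu_{x,j}-\nu_{y,j})$ for $j=2,3$. The same rescaling then shows that the $(1,1)$ entry $(x_1-y_1)^2/|\mathbf{x}-\mathbf{y}|^3$ again yields a $-4\pi\ln\delta$ factor, producing an additional $-4\pi\alpha_2\delta\ln\delta\,\kappa_0^1$ contribution to the first component; the mixed $(1,j)$ entries vanish at leading order by odd parity in $u$; and the transverse $(i,j)$ entries carry an extra $\delta^2$ and therefore contribute at most $\mathcal{O}(\delta)$ with no logarithm. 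Summing these pieces produces exactly \eqref{eq:13520927}.

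Finally, the $\mathcal{H}^*$-norm formula would follow by inserting \eqref{eq:13520927} into $\|\boldsymbol{\varphi}\|_{\mathcal{H}^*}^2=-\langle\boldsymbol{\varphi},\mathbf{S}_D[\boldsymbol{\varphi}]\rangle$, using the surface measure $\delta\,dx_1\,d\theta_1$ on $S^f$ and the identity $\int_0^{2\pi}\varphi_i(x_1,\theta_1)\,d\theta_1=2\pi\kappa_0^i(x_1)$ to obtain a factor $8\pi^2\delta^2|\ln\delta|$ in front of the quadratic form $-\int_{-L/2}^{L/2}((\alpha_1+\alpha_2)(\kappa_0^1)^2+\alpha_1(\kappa_0^2)^2+\alpha_1(\kappa_0^3)^2)\,dx_1$; taking the square root yields the stated equivalence. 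The $\mathcal{H}$-norm bound $\mathcal{O}(\delta|\ln\delta|^{1/2})$ is then inherited through the isometry $\mathbf{S}_D:\mathcal{H}^*\to\mathcal{H}$ from the corresponding upper estimate on $\|\boldsymbol{\varphi}\|_{\mathcal{H}^*}$. The main technical obstacle I expect is the rigorous error control: one must verify that (i) non-axisymmetric Fourier modes, (ii) end-cap contributions, and (iii) the non-logarithmic off-diagonal pieces of the rank-one kernel genuinely give $o(\delta|\ln\delta|)$ uniformly in $x_1$. This requires careful handling of the almost-singular integrals, together with boundary layer effects near $x_1=\pm L/2$ where the cylindrical parameterization degenerates and one must match smoothly to $S^a$ and $S^b$ using the $C^{1,\alpha}$-regularity of $\partial D$.
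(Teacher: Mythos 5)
Your proposal is correct and follows essentially the same route as the paper: cylindrical parametrization of $S^f$, Fourier decomposition in the angular variable isolating the zeroth mode, extraction of the $-2\ln\delta$ factor from the longitudinal integral, and the observation that only $T_1, T_2, T_3$ and the $(1,1)$-entry $V_1^1$ of the rank-one kernel are $\mathcal{O}(\delta\ln\delta)$ while the remaining entries are $\mathcal{O}(\delta)$ or $\mathcal{O}(\delta^2)$. The paper evaluates the longitudinal integral via the explicit $\operatorname{arsinh}$ antiderivative and the identity $\ln[2(1-\cos\theta)]=-\sum_{n\neq 0}e^{in\theta}/|n|$, whereas you use the rescaling $u=2\delta|\sin(\phi/2)|s$; these are cosmetically different but yield identical leading coefficients, so the arguments are equivalent.
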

\begin{proof}
From the definition of $\mathbf{S}_{D}[\boldsymbol{\varphi}]$ and the geometry of the domain $D$, we know that
\begin{equation}\label{13588}
\begin{aligned}
\mathbf{S}_{D}[\boldsymbol{\varphi}](\mathbf{x})\sim &\int_{ S^f}\left(\frac{\alpha_1}{|\mathbf{x}-\mathbf{y}|}\mathbf{I}+\alpha_2 \frac{(\mathbf{x}-\mathbf{y})
(\mathbf{x}-\mathbf{y})^T}{|\mathbf{x}-\mathbf{y}|^3}\right)\boldsymbol{\varphi}(\mathbf{y}) d s(\mathbf{y})\\=&\alpha_1\int_{S^f} \frac{1}{|\mathbf{x}-\mathbf{y}|}\boldsymbol{\varphi}(\mathbf{y}) d s(\mathbf{y})+\alpha_2 \int_{S^f}\frac{(\mathbf{x}-\mathbf{y})
(\mathbf{x}-\mathbf{y})^T}{|\mathbf{x}-\mathbf{y}|^3}\boldsymbol{\varphi}(\mathbf{y}) d s(\mathbf{y})\\=&\alpha_1\mathbf{T}[\boldsymbol{\varphi}]+\alpha_2\mathbf{V}[\boldsymbol{\varphi}]=\alpha_1\left(\begin{array}{c}T_1\\
T_2\\
T_3
\end{array} \right)
+\alpha_2\left(\begin{array}{c}V_1\\
V_2\\
V_3
\end{array} \right),
\end{aligned}
\end{equation}
where
$$
T_i(\mathbf{x})=\int_{S^f} \frac{1}{|\mathbf{x}-\mathbf{y}|}\varphi_i(\mathbf{y}) d s(\mathbf{y}),
$$
$$
\begin{aligned}
V_1=&\int_{S^f}\frac{(x_1-y_1)^2}{|\mathbf{x}-\mathbf{y}|^3}\varphi_1(\mathbf{y}) d s(\mathbf{y})+\int_{S^f}\frac{(x_1-y_1)(x_2-y_2)}{|\mathbf{x}-\mathbf{y}|^3}\varphi_2(\mathbf{y}) d s(\mathbf{y})\\&+\int_{S^f}\frac{(x_1-y_1)(x_3-y_3)}{|\mathbf{x}-\mathbf{y}|^3}\varphi_3(\mathbf{y}) d s(\mathbf{y})=V_1^1+V_1^2+V_1^3,
\end{aligned}
$$
$$
\begin{aligned}
V_2=&\int_{S^f}\frac{(x_1-y_1)(x_2-y_2)}{|\mathbf{x}-\mathbf{y}|^3}\varphi_1(\mathbf{y}) d s(\mathbf{y})+\int_{S^f}\frac{(x_2-y_2)^2}{|\mathbf{x}-\mathbf{y}|^3}\varphi_2(\mathbf{y}) d s(\mathbf{y})\\&+\int_{S^f}\frac{(x_2-y_2)(x_3-y_3)}{|\mathbf{x}-\mathbf{y}|^3}\varphi_3(\mathbf{y}) d s(\mathbf{y})=V_2^1+V_2^2+V_2^3,
\end{aligned}
$$
and
$$
\begin{aligned}
V_3=&\int_{S^f}\frac{(x_1-y_1)(x_3-y_3)}{|\mathbf{x}-\mathbf{y}|^3}\varphi_1(\mathbf{y}) d s(\mathbf{y})+\int_{S^f}\frac{(x_2-y_2)(x_3-y_3)}{|\mathbf{x}-\mathbf{y}|^3}\varphi_2(\mathbf{y}) d s(\mathbf{y})\\&+\int_{S^f}\frac{(x_3-y_3)^2}{|\mathbf{x}-\mathbf{y}|^3}\varphi_3(\mathbf{y}) d s(\mathbf{y})=V_3^1+V_3^2+V_3^3.
\end{aligned}
$$
Firstly, we analyze $T_i, i=1, 2, 3$. Through cylindrical coordinate transformation, we have
$$
\begin{aligned}
T_i(\mathbf{x})=&\int_{S^f} \frac{1}{|\mathbf{x}-\mathbf{y}|}\varphi_i(\mathbf{y}) d s(\mathbf{y})\\=&\int_{0}^{2\pi}\int_{-L/2}^{L/2}\frac{\delta}{\big((x_1-y_1)^2+2\delta^2(1-\cos(\theta_1-\theta_2))\big)^{1/2}}
\varphi_i(y_1, \theta_2)dy_1d\theta_2,
\end{aligned}
$$
by the change of variables $\frac{y_1-x_1}{\delta}=t$ and $2(1-\cos(\theta_1-\theta_2))=u$, the following holds
$$
\begin{aligned}
T_i(\mathbf{x})=&\int_{0}^{2\pi}\int_{\frac{-L/2-x_1}{\delta}}^{\frac{L/2-x_1}{\delta}}\frac{\delta}{\big(t^2+u\big)^{1/2}}
\varphi_i(x_1+\delta t, \theta_2)dtd\theta_2\\ \sim &\int_{0}^{2\pi}\varphi_i(x_1, \theta_2)\int_{\frac{-L/2-x_1}{\delta}}^{\frac{L/2-x_1}{\delta}}\frac{\delta}{\big(t^2+u\big)^{1/2}}dtd\theta_2,
\end{aligned}
$$
then
$$
\begin{aligned}
\delta J^{\prime}=&\int_{\frac{-L/2-x_1}{\delta}}^{\frac{L/2-x_1}{\delta}}\frac{\delta}{\big(t^2+u\big)^{1/2}}dt
=\delta\left.\operatorname{arsinh}\left(\frac{t}{\sqrt{u}}\right)\right|_{\frac{-L/2-x_1}{\delta}}^{\frac{L/2-x_1}{\delta}}\\=&\delta\left[\operatorname{arsinh}\left(\frac{L / 2-x_1}{\delta \sqrt{u}}\right)-\operatorname{arsinh}\left(\frac{-L / 2-x_1}{\delta\sqrt{u}}\right)\right].
\end{aligned}
$$
For $x \gg 1$, $\operatorname{arsinh}(x) \sim \ln (2 x)$, then
$$
\delta J^{\prime}\sim \delta\ln \left(\frac{4\left(L^2 / 4-x_1^2\right)}{\delta^2 u}\right)=\delta \ln [4(L^2 / 4-x_1^2)]-2\delta\ln\delta-\delta\ln\big[2\big(1-\cos(\theta_1-\theta_2)\big)\big].
$$
Thus
$$
\begin{aligned}
T_i(\mathbf{x})\sim & \int_{0}^{2\pi}\varphi_i(x_1, \theta_2)\left(\delta \ln [4(L^2 / 4-x_1^2)]-2\delta\ln\delta-\delta\ln\big[2\big(1-\cos(\theta_1-\theta_2)\big)\big]\right)d\theta_2\\=&\left(\delta \ln [4(L^2 / 4-x_1^2)]-2\delta\ln\delta\right)\int_{0}^{2\pi}\varphi_i(x_1, \theta_2)d\theta_2\\&-\delta\int_{0}^{2\pi}\varphi_i(x_1, \theta_2)\ln\big[2\big(1-\cos(\theta_1-\theta_2)\big)\big]d\theta_2.
\end{aligned}
$$
Using Fourier series expansion, we have
$$
\varphi_i\left(x_1, \theta^{\prime}\right)=\sum_{n=-\infty}^{\infty} \kappa^i_n e^{i n \theta^{\prime}},
$$
where
$$
\kappa^i_n=\frac{1}{2 \pi} \int_0^{2 \pi} \varphi_i\left(x_1, \theta^{\prime}\right) e^{-i n \theta^{\prime}} d \theta^{\prime}.
$$
Let $\theta_2-\theta_1=\theta$, by direct calculation and using the fact that $\ln |\sin (x)|=-\ln (2)-\sum_{k=1}^{\infty} \frac{\cos (2 k x)}{k}$, we can obtain
$$
\ln2[1-\cos\theta]=-\sum_{n \neq 0}^{\infty} \frac{e^{i n \theta}}{|n|},
$$
applying the convolution theorem, we can deduce that
\begin{equation}\label{eq:070926}
-\delta\int_{0}^{2\pi}\varphi_i(x_1, \theta_2)\ln2[1-\cos(\theta_2-\theta_1)]d\theta_2=2 \pi\delta \sum_{n \neq 0} \frac{\kappa^i_n}{|n|} e^{i n \theta_1}.
\end{equation}
Thus
$$
\begin{aligned}
T_i(\mathbf{x})\sim 2\pi\left(\delta \ln [4(L^2 / 4-x_1^2)]-2\delta\ln\delta\right)\kappa^i_0+2 \pi\delta \sum_{n \neq 0} \frac{\kappa^i_n}{|n|} e^{i n \theta_1}.
\end{aligned}
$$
We can see that the order of $V_1^1$ with respect to $\delta$ is $\mathcal{O}(\delta\ln\delta)$.
Next we we analyze $V_i, i=1, 2, 3$. Using the cylindrical coordinate transformation again, we can obtain
$$
\begin{aligned}
V_1^1=&\int_{S^f}\frac{(x_1-y_1)^2}{|\mathbf{x}-\mathbf{y}|^3}\varphi_1(\mathbf{y}) d s(\mathbf{y})\\ = &\int_{0}^{2\pi}\int_{-L/2}^{L/2}\frac{\delta(x_1-y_1)^2}{\big((x_1-y_1)^2+2\delta^2(1-\cos(\theta_1-\theta_2))\big)^{3/2}}
\varphi_1(y_1, \theta_2)dy_1d\theta_2.
\end{aligned}
$$
By the change of variables $\frac{y_1-x_1}{\delta}=t$, it holds
$$
\begin{aligned}
V_1^1=&\delta\int_{0}^{2\pi}\int_{\frac{-L/2-x_1}{\delta}}^{\frac{L/2-x_1}{\delta}}
\frac{t^2}{\big(t^2+2(1-\cos(\theta_1-\theta_2))\big)^{3/2}}\varphi_1(x_1+\delta t, \theta_2)dtd\theta_2\\
 \sim & \int_{0}^{2\pi}\varphi_1(x_1, \theta_1+\theta)\int_{\frac{-L/2-x_1}{\delta}}^{\frac{L/2-x_1}{\delta}}\frac{\delta t^2}{\big(t^2+2(1-\cos(\theta_1-\theta_2))\big)^{3/2}}dtd\theta_2.
\end{aligned}
$$
Set $2(1-\cos(\theta_1-\theta_2))=u$, then
$$
\begin{aligned}
\delta J_0=&\int_{\frac{-L/2-x_1}{\delta}}^{\frac{L/2-x_1}{\delta}}\frac{ \delta t^2}{\big(t^2+2(1-\cos(\theta_1-\theta_2))\big)^{3/2}}dt=\int_{\frac{-L/2-x_1}{\delta}}^{\frac{L/2-x_1}{\delta}}\frac{ \delta t^2}{\big(t^2+u\big)^{3/2}}dt\\=&\delta\left(\left.\operatorname{arsinh}
\left(\frac{t}{\sqrt{u}}\right)-{\frac{t}{\sqrt{t^2+u}}}\right)\right|_{\frac{-L/2-x_1}{\delta}}^{\frac{L/2-x_1}{\delta}},
\end{aligned}
$$
let $\frac{-L/2-x_1}{\delta}=B_1$ and $\frac{L/2-x_1}{\delta}=B_2$, one has
$$
J_0=\operatorname{arsinh}\left(\frac{B_2}{\sqrt{u}}\right)-{\frac{B_2}{\sqrt{B_2^2+u}}}
-\left(\operatorname{arsinh}\left(\frac{B_1}{\sqrt{u}}\right)-{\frac{B_1}{\sqrt{B_1^2+u}}} \right).
$$
For $x \gg 1$, $\operatorname{arsinh}(x) \sim \ln (2 x)$,
and noting that $\frac{B_2}{\sqrt{B_2^2+u}}\sim 1$, $\frac{B_1}{\sqrt{B_1^2+u}}\sim -1$, then
$$
\begin{aligned}
J_0\sim & \ln\left(\frac{2B_2}{\sqrt{u}}\right)-1-\left( -\ln \left(\frac{2\left|B_1\right|}{\sqrt{u}}\right)-(-1)\right)\\=&
\ln\left(\frac{2(L/2-x_1)}{\delta\sqrt{u}}\right)-1-\left(-\ln\left(\frac{2(L/2+x_1)}{\delta\sqrt{u}}\right)-(-1) \right)\\=&
\ln\left[ \frac{4(L^2/4 - x_1^2)}{u \delta^2} \right]-2=\ln\left[ \frac{4(L^2/4 - x_1^2)}{u} \right]-2-2\ln\delta,
\end{aligned}
$$
namely
$$
\delta J_0\sim \delta\ln\left[ \frac{4(L^2/4 - x_1^2)}{u} \right]-2\delta-2\delta\ln\delta.
$$
From \eqref{eq:070926}, we have
$$
\begin{aligned}
V_1^1\sim &\int_{0}^{2\pi}\varphi_1(x_1, \theta_2)\delta J_0 d\theta_2\\ \sim &\int_{0}^{2\pi}\varphi_1(x_1, \theta_2)\Big( \delta\ln\left[4(L^2/4 - x_1^2) \right]-\delta \ln u-2\delta-2\delta\ln\delta\Big)d\theta_2\\
=&2\pi \Big( \delta\ln\left[4(L^2/4 - x_1^2) \right]-2\delta-2\delta\ln\delta\Big)\kappa^1_0+2 \pi\delta\sum_{n \neq 0} \frac{\kappa^1_n}{|n|} e^{i n \theta_1}.
\end{aligned}
$$
We can see that the order of $V_1^1$ with respect to $\delta$ is $\mathcal{O}(\delta\ln\delta)$.
$$
\begin{aligned}
V_1^2=&\int_{S^f}\frac{(x_1-y_1)(x_2-y_2)}{|\mathbf{x}-\mathbf{y}|^3}\varphi_2(\mathbf{y}) d s(\mathbf{y})\\ = &\int_{0}^{2\pi}\int_{-L/2}^{L/2}\frac{\delta^2(x_1-y_1)(\cos\theta_1-\cos\theta_2)}{\big((x_1-y_1)^2+2\delta^2(1-\cos(\theta_1-\theta_2))\big)^{3/2}}
\varphi_2(y_1, \theta_2)dy_1d\theta_2.
\end{aligned}
$$
By variable substitution $\frac{y_1-x_1}{\delta}=t$, it holds
$$
\begin{aligned}
V_1^2=&\delta\int_{0}^{2\pi}\int_{\frac{-L/2-x_1}{\delta}}^{\frac{L/2-x_1}{\delta}}
\frac{ t (\cos\theta_2-\cos\theta_1)}{\big(t^2+2(1-\cos(\theta_1-\theta_2))\big)^{3/2}}\varphi_2(x_1+\delta t, \theta_2)dtd\theta_2\\
 \sim & \int_{0}^{2\pi}\varphi_2(x_1, \theta_2)\int_{\frac{-L/2-x_1}{\delta}}^{\frac{L/2-x_1}{\delta}}\frac{\delta t(\cos\theta_2-\cos\theta_1)}{\big(t^2+2(1-\cos(\theta_1-\theta_2))\big)^{3/2}}dtd\theta_2.
\end{aligned}
$$
Set $2(1-\cos(\theta_1-\theta_2))=u$, then
$$
\begin{aligned}
\delta J_1=&\int_{\frac{-L/2-x_1}{\delta}}^{\frac{L/2-x_1}{\delta}}\frac{ \delta t}{\big(t^2+2(1-\cos(\theta_1-\theta_2))\big)^{3/2}}dt=\int_{\frac{-L/2-x_1}{\delta}}^{\frac{L/2-x_1}{\delta}}\frac{ \delta t}{\big(t^2+u\big)^{3/2}}dt\\=&
\left.\frac{-\delta}{\sqrt{t^2+u}}\right|_{\frac{-L/2-x_1}{\delta}}^{\frac{L/2-x_1}{\delta}}=\delta^2\left(\frac{1}{\sqrt{(L / 2+x_1)^2+u \delta^2}}-\frac{1}{\sqrt{(L / 2-x_1)^2+u \delta^2}}\right),
\end{aligned}
$$
noting that
$$
\begin{aligned}
&\frac{1}{\sqrt{(L / 2-x_1)^2+u \delta^2}}=\frac{1}{(L / 2-x_1) \sqrt{1+u \delta^2 / (L / 2-x_1)^2}}\\&=\frac{1}{(L / 2-x_1)}\left(1+u \delta^2 / (L / 2-x_1)^2\right)^{-1 / 2} \sim \frac{1}{(L / 2-x_1)}\left(1-\frac{u \delta^2}{2 (L / 2-x_1)^2}\right)\\&\sim \frac{1}{(L / 2-x_1)},
\end{aligned}
$$
then
$$
\delta J_1\sim\delta^2\left(\frac{1}{(L / 2+x_1)}-\frac{1}{(L / 2-x_1)}\right)=-2\delta^2\frac{x_1}{L^2 / 4-x_1^2}.
$$
By using Fourier series expansion and the convolution theorem again, we can derive
$$
\begin{aligned}
V_1^2 \sim & -2\delta^2\frac{x_1}{L^2 / 4-x_1^2}\int_{0}^{2\pi}(\cos\theta_2-\cos\theta_1)\varphi_2(x_1, \theta_2)d\theta_2
\\ = &-2 \delta^2\pi\frac{x_1}{L^2 / 4-x_1^2}\Big(\left(\kappa^2_1+\kappa^2_{-1}\right)-2 \kappa^2_0 \cos \theta_1\Big),
\end{aligned}
$$
similarly, we can obtain
$$
\begin{aligned}
V_2^1=&\int_{S^f}\frac{(x_1-y_1)(x_2-y_2)}{|\mathbf{x}-\mathbf{y}|^3}\varphi_1(\mathbf{y}) d s(\mathbf{y})\\ \sim &-2\delta^2\frac{x_1}{L^2 / 4-x_1^2}\int_{0}^{2\pi}(\cos\theta_2-\cos\theta_1)\varphi_1(x_1, \theta_2)d\theta_2\\=&-2 \delta^2\pi\frac{x_1}{L^2 / 4-x_1^2}\Big(\left(\kappa^1_1+\kappa^1_{-1}\right)-2 \kappa^1_0 \cos \theta_1\Big),
\end{aligned}
$$
$$
\begin{aligned}
V_1^3=&\int_{S^f}\frac{(x_1-y_1)(x_3-y_3)}{|\mathbf{x}-\mathbf{y}|^3}\varphi_3(\mathbf{y}) d s(\mathbf{y})\\ \sim &-2\delta^2\frac{x_1}{L^2 / 4-x_1^2}\int_{0}^{2\pi}(\sin\theta_2-\sin\theta_1)\varphi_3(x_1, \theta_2)d\theta_2\\=&-2\pi\delta^2\frac{x_1}{L^2 / 4-x_1^2}\Big(i \left(\kappa^3_1-\kappa^3_{-1}\right)-2 \kappa^3_0 \sin \theta_1\Big),
\end{aligned}
$$
and
$$
\begin{aligned}
V_3^1=&\int_{S^f}\frac{(x_1-y_1)(x_3-y_3)}{|\mathbf{x}-\mathbf{y}|^3}\varphi_1(\mathbf{y}) d s(\mathbf{y})\\ \sim &-2\delta^2\frac{x_1}{L^2 / 4-x_1^2}\int_{0}^{2\pi}(\sin\theta_2-\sin\theta_1)\varphi_1(x_1, \theta_2)d\theta_2\\=&-2\pi\delta^2\frac{x_1}{L^2 / 4-x_1^2}\Big(i \left(\kappa^1_1-\kappa^1_{-1}\right)-2 \kappa^1_0 \sin \theta_1\Big),
\end{aligned}
$$
where $\kappa^j_{-1}$, $\kappa^j_0$ and $\kappa^j_1$, $j=1, 2, 3$ are the Fourier coefficients of $\varphi_3(x_1, \theta_2)$.
We can see that the order of $V_1^j$ and $V_j^1$, $j=2, 3$ with respect to $\delta$ is $\mathcal{O}(\delta^2)$.
$$
\begin{aligned}
V_2^2=&\int_{S^f}\frac{(x_2-y_2)^2}{|\mathbf{x}-\mathbf{y}|^3}\varphi_2(\mathbf{y}) d s(\mathbf{y})\\ = &\int_{0}^{2\pi}\int_{-L/2}^{L/2}\frac{\delta^3(\cos\theta_1-\cos\theta_2)^2}{\big((x_1-y_1)^2+2\delta^2(1-\cos(\theta_1-\theta_2))\big)^{3/2}}
\varphi_2(y_1, \theta_2)dy_1d\theta_2.
\end{aligned}
$$
By variable substitution $\frac{y_1-x_1}{\delta}=t$, it holds
$$
\begin{aligned}
V_2^2=&\delta\int_{0}^{2\pi}\int_{\frac{-L/2-x_1}{\delta}}^{\frac{L/2-x_1}{\delta}}
\frac{  (\cos\theta_1-\cos\theta_2)^2}{\big(t^2+2(1-\cos(\theta_1-\theta_2))\big)^{3/2}}\varphi_2(x_1+\delta t, \theta_2)dtd\theta_2\\
 \sim & \int_{0}^{2\pi}\varphi_2(x_1, \theta_2)\int_{\frac{-L/2-x_1}{\delta}}^{\frac{L/2-x_1}{\delta}}\frac{\delta (\cos\theta_1-\cos\theta_2)^2}{\big(t^2+2(1-\cos(\theta_1-\theta_2))\big)^{3/2}}dtd\theta_2.
\end{aligned}
$$
Set $2(1-\cos(\theta_1-\theta_2))=u$, then
$$
\begin{aligned}
\delta J_2=&\int_{\frac{-L/2-x_1}{\delta}}^{\frac{L/2-x_1}{\delta}}\frac{ \delta }{\big(t^2+2(1-\cos(\theta_1-\theta_2))\big)^{3/2}}dt=\int_{\frac{-L/2-x_1}{\delta}}^{\frac{L/2-x_1}{\delta}}\frac{ \delta }{\big(t^2+u\big)^{3/2}}dt\\=&
\left.\frac{\delta t}{u \sqrt{t^2+u}}\right|_{\frac{-L/2-x_1}{\delta}}^{\frac{L/2-x_1}{\delta}}\sim\frac{2\delta}{u}.
\end{aligned}
$$
We expand $\varphi_2\left(x_1, \theta_2\right)$ into a real valued Fourier series
$$
\varphi_2\left(x_1, \theta_2\right)=\frac{\varpi^2_0}{2}+\sum_{n=1}^{\infty}\left(\varpi^2_n \cos \left(n \theta_2\right)+\vartheta^2_n \sin \left(n \theta_2\right)\right),
$$
where
$$
\begin{aligned}
& \varpi^2_n=\frac{1}{\pi} \int_0^{2 \pi} \varphi_2\left(x_1, \theta^{\prime}\right) \cos \left(n \theta^{\prime}\right) d \theta^{\prime}, \\
& \vartheta^2_n=\frac{1}{\pi} \int_0^{2 \pi} \varphi_2\left(x_1, \theta^{\prime}\right) \sin \left(n \theta^{\prime}\right) d \theta^{\prime}.
\end{aligned}
$$
By direct calculation, we can deduce that
$$
\frac{(\cos\theta_1-\cos\theta_2)^2}{1-\cos(\theta_2-\theta_1)}=1-\cos \left(2 \theta_1\right) \cos \theta+\sin \left(2 \theta_1\right) \sin \theta,
$$
where $\theta=\theta_2-\theta_1$.
Then
$$
\int_{0}^{2\pi}\varphi_2(x_1, \theta_2)\frac{(\cos\theta_1-\cos\theta_2)^2}{1-\cos(\theta_2-\theta_1)}d\theta_2= \pi\left(\varpi^2_0-\varpi^2_1 \cos \theta_1+\vartheta^2_1 \sin \theta_1\right).
$$
Thus
$$
\begin{aligned}
V_2^2 & \sim\delta\int_{0}^{2\pi}\varphi_2(x_1, \theta_2)\frac{(\cos\theta_1-\cos\theta_2)^2}{1-\cos(\theta_2-\theta_1)}d\theta_2
\\&=\pi\delta\left(\varpi^2_0-\varpi^2_1 \cos \theta_1+\vartheta^2_1 \sin \theta_1\right).
\end{aligned}
$$
Similarly, we have
$$
\begin{aligned}
V_2^3&=\int_{S^f}\frac{(x_2-y_2)(x_3-y_3)}{|\mathbf{x}-\mathbf{y}|^3}\varphi_3(\mathbf{y}) d s(\mathbf{y})
\\&\sim\delta\int_{0}^{2\pi}\varphi_3(x_1, \theta_2)\frac{(\cos\theta_1-\cos\theta_2)(\sin\theta_1-\sin\theta_2)}{1-\cos(\theta_2-\theta_1)}d\theta_2,
\end{aligned}
$$
$$
\begin{aligned}
V_3^2&=\int_{S^f}\frac{(x_2-y_2)(x_3-y_3)}{|\mathbf{x}-\mathbf{y}|^3}\varphi_2(\mathbf{y}) d s(\mathbf{y})
\\&\sim\delta\int_{0}^{2\pi}\varphi_2(x_1, \theta_2)\frac{(\cos\theta_1-\cos\theta_2)(\sin\theta_1-\sin\theta_2)}{1-\cos(\theta_2-\theta_1)}d\theta_2,
\end{aligned}
$$
and
$$
\begin{aligned}
V_3^3&=\int_{S^f}\frac{(x_3-y_3)^2}{|\mathbf{x}-\mathbf{y}|^3}\varphi_3(\mathbf{y}) d s(\mathbf{y})
\\&\sim \delta\int_{0}^{2\pi}\varphi_3(x_1, \theta_2)\frac{(\sin\theta_1-\sin\theta_2)^2}{1-\cos(\theta_2-\theta_1)}d\theta_2.
\end{aligned}
$$
By direct calculation, we can obtain
$$
\frac{(\cos\theta_1-\cos\theta_2)(\sin\theta_1-\sin\theta_2)}{1-\cos(\theta_2-\theta_1)}=-\sin \left(\theta_2+\theta_1\right),
$$
and
$$
\frac{(\sin\theta_1-\sin\theta_2)^2}{1-\cos(\theta_2-\theta_1)}=\left(1+\cos \left(\theta_2+\theta_1\right)\right).
$$
Thus
$$
V_2^3\sim-\pi\delta\left(\varpi^3_1 \sin \theta_1+\vartheta^3_1 \cos \theta_1\right),
$$
$$
V_3^2\sim-\pi\delta\left(\varpi^2_1 \sin \theta_1+\vartheta^2_1 \cos \theta_1\right),
$$
and
$$
V_3^3\sim\pi\delta\left(\varpi^3_0+\varpi^3_1 \cos \theta_1-\vartheta^3_1 \sin \theta_1\right),
$$
where $\varpi^2_1$, $\vartheta^2_1$ and $\varpi^3_i$, $i=0, 1$ and $\vartheta^3_1$ are the  real valued Fourier coefficients of $\varphi_2(x_1, \theta_2)$ and $\varphi_3(x_1, \theta_2)$, respectively. We can see that the orders of $V_2^j$ and $V_3^j$, $j=2, 3$ with respect to $\delta$ are $\mathcal{O}(\delta)$. From the above derivation, we can see that only $T_i, i=1, 2, 3$ and $V_1^1$ contain the order of $\delta\ln\delta$, while the other terms are higher-order. Combining \eqref{13588}, we can obtain \eqref{eq:13520927}. For any $\boldsymbol{\varphi}\in\mathcal{H}^*$, one has
$$
\begin{aligned}
&\|\boldsymbol{\varphi}\|_{\mathcal{H}^*}^2 =- \langle\boldsymbol{\varphi}, \mathbf{S}_D[\boldsymbol{\varphi}]\rangle = -\int_{\partial D}\boldsymbol{\varphi} \cdot \mathbf{S}_{D}[\boldsymbol{\varphi}] ds\sim-\int_{\partial D}\boldsymbol{\varphi}\cdot \Big(\alpha_1\mathbf{T}[\boldsymbol{\varphi}]+\alpha_2\mathbf{V}[\boldsymbol{\varphi}]\Big)ds\\&=-\int_{\partial D}\Big(\alpha_1\left(\varphi_1 T_1+\varphi_2 T_2+\varphi_3 T_3\right)+\alpha_2\left(\varphi_1 V_1+\varphi_2 V_2+\varphi_3 V_3\right)\Big)d s\\&\sim\int_{S^f}4 \pi(\delta \ln \delta)\Big(\left(\alpha_1+\alpha_2\right) \varphi_1 \kappa_0^1+\alpha_1 \varphi_2 \kappa_0^2+\alpha_1 \varphi_3 \kappa_0^3\Big)ds\\
&=8 \pi^2\left(\delta^2 \ln \delta\right) \int_{-L / 2}^{L / 2}  \Big(\left(\alpha_1+\alpha_2\right)(\kappa_0^1)^2+\alpha_1(\kappa_0^2)^2+\alpha_1(\kappa_0^3)^2\Big) d x_1,
\end{aligned}
$$
then the first equation in \eqref{eq:15510927} holds. Due to the invertibility of $\mathbf{S}_D: \mathcal{H}^*\rightarrow \mathcal{H},$ we know that for any $\mathbf{u}\in \mathcal{H}$, there exists $\boldsymbol{\varphi^{\prime}}\in \mathcal{H}^*$, so that $\mathbf{S}_D[\boldsymbol{\varphi^{\prime}}]=\mathbf{u},$
 then
$$
\|\mathbf{u}\|_{\mathcal{H}}=\|\mathbf{S}_D[\boldsymbol{\varphi^{\prime}}]\|_{\mathcal{H}}
=\|\boldsymbol{\varphi^{\prime}}\|_{\mathcal{H}^*}=\mathcal{O}(\delta |\ln \delta|^{1/2}).
$$
This completes the proof.
\end{proof}

\begin{lem}\label{le:1}
For $\omega \ll 1$, the following asymptotic results hold.\\
(i) The single layer potential operator $\mathbf{S}_{D}^{\omega}:\mathcal{H}^* \rightarrow \mathcal{H}$ can be expanded as follows:

\begin{equation}\label{eq:123}
\mathbf{S}_{D}^{\omega}=\mathbf{S}_{D}+\omega \mathbf{S}_{D, 1}+\omega^2 \mathbf{S}_{D, 2}+\mathcal{O}(\omega^3),
\end{equation}
where
\begin{equation}\label{S1}
\mathbf{S}_{D, 1}[\boldsymbol{\varphi}](\mathbf{x})=\alpha_3\int_{\partial D}\boldsymbol{\varphi}(\mathbf{y}) d s(\mathbf{y}),
\end{equation}
with
\begin{equation}\label{al3}
\alpha_3=-\frac{\rm{i}}{12\pi}\left(\frac{2}{{\mu}^{3/2}}+\frac{1}{{(\lambda+2 \mu)}^{3/2}}\right),
\end{equation}
and
\begin{equation}\label{S2}
\mathbf{S}_{D, 2}[\boldsymbol{\varphi}](\mathbf{x})=\int_{\partial D}\mathbf{\Gamma}_2(\mathbf{x}-\mathbf{y})\boldsymbol{\varphi}(\mathbf{y}) d s(\mathbf{y}),
\end{equation}
\begin{equation}\label{G2}
\mathbf{\Gamma}_2(\mathbf{x}-\mathbf{y})=\alpha_4|\boldsymbol{x}-\boldsymbol{y}|\mathbf{I}+\alpha_5\frac{(\boldsymbol{x}-\boldsymbol{y})
(\boldsymbol{x}-\boldsymbol{y})^T}{|\boldsymbol{x}-\boldsymbol{y}|},
\end{equation}
with
\begin{equation}\label{a45}
\alpha_4=\frac{1}{32\pi}\left(\frac{3}{\mu^2}+\frac{1}{(\lambda+2 \mu)^2}\right),\quad
\alpha_5=-\frac{1}{32\pi}\left(\frac{1}{\mu^2}-\frac{1}{(\lambda+2\mu)^2}\right).
\end{equation}
The following estimate holds:
\begin{equation}\label{2123}
\left\|\mathbf{S}_{D}\right\|_{\mathcal{L}\left(\mathcal{H}^*, \mathcal{H}\right)}=1,\quad
\left\|\mathbf{S}_{D, i}\right\|_{\mathcal{L}\left(\mathcal{H}^*, \mathcal{H}\right)}=\mathcal{O}\left(1\right), i=1, 2.
\end{equation}

(ii)The operator $(\mathbf{S}_{D}^{\omega})^{-1}:\mathcal{H}\rightarrow \mathcal{H}^*$ can be expanded as follows:
\begin{equation}\label{1303}
\left(\mathbf{S}_{D}^{\omega}\right)^{-1}=\mathbf{S}_{D}^{-1}+\omega \mathbf{B}_{D, 1}+\omega^2 \mathbf{B}_{D, 2}+\mathcal{O}(\omega^3),
\end{equation}
where
$$
\begin{aligned}
\mathbf{B}_{D, 1} & =-\mathbf{S}_{D}^{-1} \mathbf{S}_{D, 1} \mathbf{S}_{D}^{-1}, \\
\mathbf{B}_{D, 2} & =-\mathbf{S}_{D}^{-1} \mathbf{S}_{D, 2} \mathbf{S}_{D}^{-1}+\mathbf{S}_{D}^{-1} \mathbf{S}_{D, 1} \mathbf{S}_{D}^{-1} \mathbf{S}_{D, 1} \mathbf{S}_{D}^{-1},
\end{aligned}
$$
the following estimate holds:
\begin{equation}\label{1304}
\left\|\mathbf{S}_{D}^{-1}\right\|_{\mathcal{L}\left(\mathcal{H}, \mathcal{H}^*\right)}=1.
\end{equation}

(iii) The Neumann-Poincar\'e operator $\left(\mathbf{K}_{D}^{\omega}\right)^*:\mathcal{H}^* \rightarrow \mathcal{H}^*$ can be expanded as follows:
\begin{equation}\label{1407}
\left(\mathbf{K}_{D}^{\omega}\right)^*=\mathbf{K}_{D}^*+\omega^2\mathbf{K}_{D, 2}+\mathcal{O}(\omega^3),
\end{equation}
where
\begin{equation}\label{K2}
\mathbf{K}_{D, 2}[\boldsymbol{\varphi}](\mathbf{x})=\int_{\partial D}\frac{\partial \mathbf\Gamma_2(\mathbf{x}-\mathbf{y})}{\partial \boldsymbol{\nu}_{\mathbf{x}}}\boldsymbol{\varphi}(\mathbf{y}) d s(\mathbf{y}),
\end{equation}
the following estimate holds:
\begin{equation}\label{1408}
\left\|\mathbf{K}_{D}^*\right\|_{\mathcal{L}\left(\mathcal{H}^*, \mathcal{H}^*\right)}=\frac{1}{2},
\quad \left\|\mathbf{K}_{D, 2}\right\|_{\mathcal{L}\left(\mathcal{H}^*, \mathcal{H}^*\right)}=\mathcal{O}(1).
\end{equation}
\end{lem}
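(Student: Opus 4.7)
The plan is to obtain all three expansions by substituting the series representation \eqref{gamma} of $\boldsymbol{\Gamma}^{\omega}$ term by term into the integral definitions of $\mathbf{S}_D^{\omega}$ and $(\mathbf{K}_D^{\omega})^*$, then inverting the resulting series for $(\mathbf{S}_D^{\omega})^{-1}$ by a Neumann-type argument. The norm estimates will follow from the isometric identification $\mathbf{S}_D:\mathcal{H}^*\to\mathcal{H}$ together with standard spectral facts for the elastic NP operator.

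For part (i), I would plug \eqref{gamma} into the definition of $\mathbf{S}_D^{\omega}$ and separate the sum according to the power of $\omega$. The $n=0$ contribution is exactly $\mathbf{S}_D$ by comparison with the Kelvin matrix. At $n=1$ the second sum drops out (the prefactor $n-1$ vanishes), and the first sum contributes an $\mathbf{x}$-independent kernel, namely $-\frac{\mathrm{i}}{12\pi}\bigl(\frac{2}{\mu^{3/2}}+\frac{1}{(\lambda+2\mu)^{3/2}}\bigr)\mathbf{I}$, which matches $\alpha_3$ in \eqref{al3} and yields the operator in \eqref{S1}. At $n=2$, both sums contribute: reading off the coefficients in front of $|\mathbf{x}-\mathbf{y}|\mathbf{I}$ and $(\mathbf{x}-\mathbf{y})(\mathbf{x}-\mathbf{y})^T/|\mathbf{x}-\mathbf{y}|$ reproduces $\alpha_4$ and $\alpha_5$ in \eqref{a45} and confirms the form \eqref{G2}. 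The remaining tail is controlled by a uniform bound $|\partial D|\cdot\mathrm{diam}(D)^n$, giving the $\mathcal{O}(\omega^3)$ remainder. The isometry $\|\mathbf{S}_D\|_{\mathcal{L}(\mathcal{H}^*,\mathcal{H})}=1$ is a direct consequence of the definitions of the two inner products; the bounds $\|\mathbf{S}_{D,i}\|_{\mathcal{L}(\mathcal{H}^*,\mathcal{H})}=\mathcal{O}(1)$ follow from the boundedness of the smooth kernels together with the embedding $H^{-1/2}\hookrightarrow \mathcal{H}^*$ already calibrated in Lemma \ref{le:14151111}.

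For part (ii), since the leading operator $\mathbf{S}_D$ is invertible with isometric inverse, I would factor $\mathbf{S}_D^{\omega}=\mathbf{S}_D\bigl(I+\omega\,\mathbf{S}_D^{-1}\mathbf{S}_{D,1}+\omega^2\,\mathbf{S}_D^{-1}\mathbf{S}_{D,2}+\mathcal{O}(\omega^3)\bigr)$ and invoke the Neumann series $(I+X)^{-1}=I-X+X^2-\cdots$ with $X=\omega\,\mathbf{S}_D^{-1}\mathbf{S}_{D,1}+\omega^2\,\mathbf{S}_D^{-1}\mathbf{S}_{D,2}+\mathcal{O}(\omega^3)$, which converges for $\omega$ sufficiently small by the uniform bounds above. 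Collecting powers of $\omega$ produces exactly the formulas for $\mathbf{B}_{D,1}$ and $\mathbf{B}_{D,2}$, with the $\omega^2$ term receiving the expected cross-contribution $\mathbf{S}_D^{-1}\mathbf{S}_{D,1}\mathbf{S}_D^{-1}\mathbf{S}_{D,1}\mathbf{S}_D^{-1}$ from the quadratic term of the Neumann series. The identity $\|\mathbf{S}_D^{-1}\|_{\mathcal{L}(\mathcal{H},\mathcal{H}^*)}=1$ is again immediate from $(\cdot,\cdot)_{\mathcal{H}}=-\langle\mathbf{S}_D^{-1}\cdot,\cdot\rangle$.

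For part (iii), I would apply the conormal derivative $\partial/\partial\boldsymbol{\nu}_{\mathbf{x}}$ to the series \eqref{gamma} inside the integral. The critical observation is that the $n=1$ contribution to $\boldsymbol{\Gamma}^{\omega}$ is a constant multiple of $\mathbf{I}$ (because $|\mathbf{x}-\mathbf{y}|^0=1$ and the second sum vanishes at $n=1$), so its conormal derivative is identically zero; this is what kills the $\omega$-term and explains why \eqref{1407} jumps directly from $\omega^0$ to $\omega^2$. The $n=2$ term is differentiable in $\mathbf{x}$ and yields exactly the kernel $\partial_{\boldsymbol{\nu}_\mathbf{x}}\boldsymbol{\Gamma}_2$ appearing in \eqref{K2}. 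The operator norm $\|\mathbf{K}_D^*\|_{\mathcal{L}(\mathcal{H}^*,\mathcal{H}^*)}=\tfrac12$ is the standard consequence of the Plemelj-type symmetrization $\mathbf{S}_D\mathbf{K}_D^*=\mathbf{K}_D\mathbf{S}_D$ making $\mathbf{K}_D^*$ self-adjoint on $\mathcal{H}^*$ with spectrum in $[-\tfrac12,\tfrac12]$, while $\|\mathbf{K}_{D,2}\|_{\mathcal{L}(\mathcal{H}^*,\mathcal{H}^*)}=\mathcal{O}(1)$ follows from the smoothness of the kernel $\boldsymbol{\Gamma}_2$ and a mapping argument through $\mathbf{S}_D^{-1}$. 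The main obstacle I anticipate is not the formal expansion but keeping the remainder and the operator norms uniform in the anisotropic geometry as $\delta\to 0$; since the kernels $\boldsymbol{\Gamma}_2,\partial_{\boldsymbol{\nu}}\boldsymbol{\Gamma}_2,\ldots$ are all globally smooth and the norms on $\mathcal{H}^*,\mathcal{H}$ are equivalent to fractional Sobolev norms on $\partial D$, standard duality and trace estimates suffice, but their interaction with the degenerate geometry of the nanorod must be checked carefully, in the spirit of the computations in Lemma \ref{le:14151111}.
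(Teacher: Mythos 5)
Your proposal is correct and follows essentially the same route as the paper: term-by-term substitution of the Kupradze series \eqref{gamma} into the layer-potential definitions (with the $n=1$ vanishing of the $(\mathbf{x}-\mathbf{y})(\mathbf{x}-\mathbf{y})^T$-part giving the constant $\alpha_3\mathbf{I}$ kernel and hence $\mathbf{K}_{D,1}=0$ after conormal differentiation), the Neumann-series factorization $\mathbf{S}_D^\omega=\mathbf{S}_D(I+\omega\mathbf{S}_D^{-1}\mathbf{S}_{D,1}+\cdots)$ for part (ii), and the isometry/self-adjointness arguments via the $\mathcal{H}^*,\mathcal{H}$ inner products and the Calder\'on symmetrization $\mathbf{S}_D\mathbf{K}_D^*=\mathbf{K}_D\mathbf{S}_D$ for the norm identities. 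You are also right to flag the uniformity of the $\mathcal{O}(1)$ norm bounds as $\delta\to 0$ as the genuinely delicate point; the paper handles it by invoking the $\delta$-scaling established in Lemma~\ref{le:14151111}, exactly as you anticipated.
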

\begin{proof}
(1) We first prove \eqref{2123}. Owing to $\left\|\mathbf{S}_D\right\|_{\mathcal{L}\left(\mathcal{H}^*, \mathcal{H}\right)}=\sup\limits_{\boldsymbol{\varphi}\neq 0} \frac{\left\|\mathbf{S}_D[\boldsymbol{\varphi}]\right\|_{\mathcal{H}}}{\|\boldsymbol{\varphi}\|_{\mathcal{H}^*}}$, and $\left\|\mathbf{S}_D[\boldsymbol{\varphi}]\right\|^2_{\mathcal{H}}=(\mathbf{S}_D[\boldsymbol{\varphi}], \mathbf{S}_D[\boldsymbol{\varphi}])_{\mathcal{H}}=-\left\langle\boldsymbol{\varphi}, \mathbf{S}_D[\boldsymbol{\varphi}]\right\rangle=(\boldsymbol{\varphi}, \boldsymbol{\varphi})_{\mathcal{H}^*}=\left\|\boldsymbol{\varphi}\right\|^2_{\mathcal{H}^*}$. Then $\left\|\mathbf{S}_D\right\|_{\mathcal{L}\left(\mathcal{H}^*, \mathcal{H}\right)}=\sup\limits_{\boldsymbol{\varphi}\neq 0}\frac{\left\|\boldsymbol{\varphi}\right\|_{\mathcal{H}^*}}{\|\boldsymbol{\varphi}\|_{\mathcal{H}^*}}\\=1.$ From \eqref{eq:15510927}, we have
$$
\begin{aligned}
\left\|\mathbf{S}_{D, i}\right\|_{\mathcal{L}\left(\mathcal{H}^*, \mathcal{H}\right)}=&\sup\limits_{\boldsymbol{\varphi}\neq 0} \frac{\left\|\mathbf{S}_{D, i}[\boldsymbol{\varphi}]\right\|_{\mathcal{H}}}{\|\boldsymbol{\varphi}\|_{\mathcal{H}^*}}=\mathcal{O}(1).
\end{aligned}
$$

From \eqref{gamma} and the definition of $\mathbf{S}_{D}^{\omega}$, we can immediately obtain \eqref{eq:123}.

(2) Since $\left\|\mathbf{S}_D^{-1}\right\|_{\mathcal{L}\left(\mathcal{H}, \mathcal{H}^*\right)}=\sup\limits_{\boldsymbol{\varphi}\neq 0} \frac{\left\|\mathbf{S}_D^{-1}[\boldsymbol{\varphi}]\right\|_{\mathcal{H}^*}}{\|\boldsymbol{\varphi}\|_{\mathcal{H}}}$, and $$\left\|\mathbf{S}_D^{-1}[\boldsymbol{\varphi}]\right\|^2_{\mathcal{H}^*}=(\mathbf{S}_D^{-1}[\boldsymbol{\varphi}], \mathbf{S}_D^{-1}[\boldsymbol{\varphi}])_{\mathcal{H}^*}=-\left\langle\mathbf{S}_D^{-1}[\boldsymbol{\varphi}], \boldsymbol{\varphi}\right\rangle=(\boldsymbol{\varphi}, \boldsymbol{\varphi})_{\mathcal{H}}=\left\|\boldsymbol{\varphi}\right\|^2_{\mathcal{H}}.$$
Then $\left\|\mathbf{S}_D^{-1}\right\|_{\mathcal{L}\left(\mathcal{H}, \mathcal{H}^*\right)}=\sup\limits_{\boldsymbol{\varphi}\neq 0}\frac{\left\|\boldsymbol{\varphi}\right\|_{\mathcal{H}}}{\|\boldsymbol{\varphi}\|_{\mathcal{H}}}=1.$ By \eqref{eq:123}, we have $$\mathbf{S}_{D}^{\omega}=\mathbf{S}_{D}\Big(I+\omega \mathbf{S}_{D}^{-1}\mathbf{S}_{D, 1}+\omega^2 \mathbf{S}_{D}^{-1}\mathbf{S}_{D, 2}+\mathcal{O}(\omega^3)\Big),$$
therefore
$$
\begin{aligned}
(\mathbf{S}_{D}^{\omega})^{-1}=&\Big(I+\omega \mathbf{S}_{D}^{-1}\mathbf{S}_{D, 1}+\omega^2 \mathbf{S}_{D}^{-1}\mathbf{S}_{D, 2}+\mathcal{O}(\omega^3)\Big)^{-1}\mathbf{S}_{D}^{-1}\\=&\Big(I-\omega \mathbf{S}_{D}^{-1}\mathbf{S}_{D, 1}-\omega^2 \mathbf{S}_{D}^{-1}\mathbf{S}_{D, 2}+\omega^2\mathbf{S}_{D}^{-1}\mathbf{S}_{D, 1}\mathbf{S}_{D}^{-1}\mathbf{S}_{D, 1}+\mathcal{O}(\omega^3)\Big)\mathbf{S}_{D}^{-1}\\=&\mathbf{S}_{D}^{-1}-\omega \mathbf{S}_{D}^{-1}\mathbf{S}_{D, 1}\mathbf{S}_{D}^{-1}-\omega^2 \mathbf{S}_{D}^{-1}\mathbf{S}_{D, 2}\mathbf{S}_{D}^{-1}+\omega^2\mathbf{S}_{D}^{-1}\mathbf{S}_{D, 1}\mathbf{S}_{D}^{-1}\mathbf{S}_{D, 1}\mathbf{S}_{D}^{-1}\\&+\mathcal{O}(\omega^3).
\end{aligned}
$$

(3) Noting that
$$
\begin{aligned}
(\mathbf{K}_D^*[\boldsymbol{\varphi}], \boldsymbol{\psi})_{\mathcal{H}^*}=&-\left\langle\mathbf{K}_D^*[\boldsymbol{\varphi}], \mathbf{S}_D[\boldsymbol{\psi}]\right\rangle=-\left\langle\boldsymbol{\varphi}, \mathbf{K}_D\mathbf{S}_D[\boldsymbol{\psi}]\right\rangle=-\left\langle\boldsymbol{\varphi}, \mathbf{S}_D\mathbf{K}_D^*[\boldsymbol{\psi}]\right\rangle\\=&(\boldsymbol{\varphi}, \mathbf{K}_D^*[\boldsymbol{\psi}])_{\mathcal{H}^*},
\end{aligned}
$$
that is to say, $\mathbf{K}_D^*$ is a self-adjoint operator in the inner product space $\mathcal{H}^*$. So $\left\|\mathbf{K}_D^*\right\|_{\mathcal{L}\left(\mathcal{H}^*, \mathcal{H}^*\right)}=\sup\limits_j\left|\beta_j\right|=\frac{1}{2},$ where $\beta_j$ is the eigenvalue of $\mathbf{K}_D^*$.
From \eqref{eq:15510927}, we have
$$
\begin{aligned}
\left\|\mathbf{K}_{D, 2}\right\|_{\mathcal{L}\left(\mathcal{H}^*, \mathcal{H}^*\right)}=&\sup\limits_{\boldsymbol{\varphi}\neq 0} \frac{\left\|\mathbf{K}_{D, 2}[\boldsymbol{\varphi}]\right\|_{\mathcal{H}^*}}{\|\boldsymbol{\varphi}\|_{\mathcal{H}^*}}=\mathcal{O}(1).
\end{aligned}
$$

By \eqref{gamma} and the definition of $\left(\mathbf{K}_{D}^{\omega}\right)^*$, we can immediately obtain \eqref{1408}.
\end{proof}

\begin{lem}\label{leS}
Let $\boldsymbol{\varphi}\in\mathcal{H}^*$, then the following asymptotic result hold:
\begin{equation}
\mathbf{S}_{D}[\boldsymbol{\varphi}]=\mathbf{S}_{00}[\boldsymbol{\varphi}]
+\delta\mathbf{S}_{01}[\boldsymbol{\varphi}]+\delta^2\mathbf{S}_{02}[\boldsymbol{\varphi}]+\mathcal{O}(\delta^3), \quad \mathbf{x}\in \partial D,
\end{equation}
where
\begin{equation}
\begin{aligned}
\mathbf{S}_{00}[\boldsymbol{\varphi}](\mathbf{x})=&\chi(S^a\cup S^b)\int_{S^f} \mathbf\Gamma(\mathbf{x}-\mathbf{y})\boldsymbol{\varphi}(\mathbf{y}) d s(\mathbf{y})
+\sum_{\substack{i,j=1}}^{2}\chi(\Gamma_i)\int_{\Gamma_j}\mathbf\Gamma(\mathbf{x}-\mathbf{y})\boldsymbol{\varphi}(\mathbf{y})d s(\mathbf{y})
\\&+\chi(S^a\cup \iota_\delta(P))\int_{S^a}\mathbf\Gamma(\mathbf{x}-\mathbf{y})\boldsymbol{\varphi}(\mathbf{y})d s(\mathbf{y})\\&+\chi(S^b\cup \iota_\delta(Q))\int_{S^b} \mathbf\Gamma(\mathbf{x}-\mathbf{y})\boldsymbol{\varphi}(\mathbf{y})d s(\mathbf{y}),
\end{aligned}
\end{equation}

\begin{equation}
\begin{aligned}
\mathbf{S}_{01}[\boldsymbol{\varphi}](\mathbf{x})=&\chi_P\delta\bigg(\int_{S_{1}^a}\boldsymbol{\Gamma}(\mathbf{x}-P)\tilde{\boldsymbol{\varphi}}(\tilde{\mathbf{y}})d s(\tilde{\mathbf{y}})+\mathcal{O}\Big(\frac{\delta}{|\mathbf{x}-P|^2}\Big)\bigg)\\
&+\chi_Q\delta\bigg(\int_{S_{1}^a}\boldsymbol{\Gamma}(\mathbf{x}-Q)\tilde{\boldsymbol{\varphi}}(\tilde{\mathbf{y}})d s(\tilde{\mathbf{y}})+\mathcal{O}\Big(\frac{\delta}{|\mathbf{x}-Q|^2}\Big)\bigg),
\end{aligned}
\end{equation}
and
\begin{equation}
\begin{aligned}
\mathbf{S}_{02}[\boldsymbol{\varphi}](\mathbf{x})=&\chi(S^a)\int_{S_{1}^b}\boldsymbol{\Gamma}(\mathbf{x}-Q)
\tilde{\boldsymbol{\varphi}}(\tilde{\mathbf{y}})ds(\tilde{\mathbf{y}})+\chi(S^b)\int_{S_{1}^a}\boldsymbol{\Gamma}(\mathbf{x}-P)
\tilde{\boldsymbol{\varphi}}(\tilde{\mathbf{y}})ds(\tilde{\mathbf{y}}).
\end{aligned}
\end{equation}
Here $\chi_P:=\chi\left(S^f \backslash \iota_\delta(P)\right)$ and $\chi_Q:=\chi\left(S^f \backslash \iota_\delta(Q)\right)$.
\end{lem}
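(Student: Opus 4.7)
The plan is to decompose $\mathbf{S}_D[\boldsymbol{\varphi}](\mathbf{x})$ according to the partition $\partial D = S^a \cup S^f \cup S^b$ and then control each piece depending on the location of $\mathbf{x}$. The key structural observation is that the end caps $S^a, S^b$ have diameter $\mathcal{O}(\delta)$, so under the rescalings $\mathbf{y} = P + \delta \tilde{\mathbf{y}}$ and $\mathbf{y} = Q + \delta \tilde{\mathbf{y}}$ the surface element transforms as $ds(\mathbf{y}) = \delta^2\,ds(\tilde{\mathbf{y}})$ on the fixed unit half-spheres $S_1^a, S_1^b$. Combined with the fact that $\mathbf\Gamma$ is homogeneous of degree $-1$ and $|\nabla\mathbf\Gamma(\mathbf{z})| \lesssim |\mathbf{z}|^{-2}$, this immediately dictates why cross-interactions between the two ends carry explicit powers of $\delta^2$, while self-interactions of each cap must be retained verbatim.

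For $\mathbf{x}\in S^a$, splitting $\int_{\partial D} = \int_{S^a}+\int_{S^f}+\int_{S^b}$, I would keep the $S^a$ piece inside $\mathbf{S}_{00}$ via the indicator $\chi(S^a\cup\iota_\delta(P))$, since after rescaling both $\mathbf{x}$ and $\mathbf{y}$ lie on a surface of diameter $\mathcal{O}(\delta)$ and no simplification is possible; the $S^f$ piece is likewise retained through $\chi(S^a\cup S^b)$. For the remote $S^b$ piece, $|\mathbf{x}-Q|=\mathcal{O}(L)$, so writing $\mathbf{y}=Q+\delta\tilde{\mathbf{y}}$ and Taylor expanding $\mathbf\Gamma(\mathbf{x}-Q-\delta\tilde{\mathbf{y}})=\mathbf\Gamma(\mathbf{x}-Q)+\mathcal{O}(\delta)$ yields
\[
\int_{S^b}\mathbf\Gamma(\mathbf{x}-\mathbf{y})\boldsymbol{\varphi}(\mathbf{y})\,ds(\mathbf{y}) = \delta^2 \int_{S_1^b}\mathbf\Gamma(\mathbf{x}-Q)\tilde{\boldsymbol{\varphi}}(\tilde{\mathbf{y}})\,ds(\tilde{\mathbf{y}}) + \mathcal{O}(\delta^3),
\]
which is exactly the $\chi(S^a)$ summand of $\delta^2\mathbf{S}_{02}$; the symmetric argument handles $\mathbf{x}\in S^b$.

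For the lateral case $\mathbf{x}\in S^f$, the self-contribution $\int_{S^f}\mathbf\Gamma(\mathbf{x}-\mathbf{y})\boldsymbol{\varphi}(\mathbf{y})\,ds(\mathbf{y})$ is kept inside $\mathbf{S}_{00}$ through $\sum_{i,j}\chi(\Gamma_i)\int_{\Gamma_j}$, while the two end-cap contributions are handled differently according to whether $\mathbf{x}$ lies in $\iota_\delta(P)$ (resp.\ $\iota_\delta(Q)$) or not. When $\mathbf{x}\in\iota_\delta(P)$, the distance $|\mathbf{x}-P|=\mathcal{O}(\delta)$ is comparable to the cap scale, so no Taylor expansion is possible and the full $\int_{S^a}\mathbf\Gamma(\mathbf{x}-\mathbf{y})\boldsymbol{\varphi}(\mathbf{y})\,ds(\mathbf{y})$ is absorbed into the $\chi(S^a\cup\iota_\delta(P))$ summand of $\mathbf{S}_{00}$. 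When $\mathbf{x}\in S^f\setminus\iota_\delta(P)$ (indicator $\chi_P$), the Taylor expansion $\mathbf\Gamma(\mathbf{x}-P-\delta\tilde{\mathbf{y}})=\mathbf\Gamma(\mathbf{x}-P)+\mathcal{O}(\delta/|\mathbf{x}-P|^2)$ is legitimate, and extracting the leading factor from the $\delta^2\int_{S_1^a}$ integral yields exactly the $\chi_P$ term of $\delta\mathbf{S}_{01}$ with its stated $\mathcal{O}(\delta/|\mathbf{x}-P|^2)$ remainder; the analogous argument near $Q$ supplies the $\chi_Q$ term.

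The main obstacle is bookkeeping rather than estimation: one has to verify that the indicator functions appearing in $\mathbf{S}_{00},\mathbf{S}_{01},\mathbf{S}_{02}$ partition $\partial D$ without double counting on the transition shell $|\mathbf{x}-P|\sim\delta$, so that the three pieces genuinely sum to $\mathbf{S}_D[\boldsymbol{\varphi}]$ with a uniform $\mathcal{O}(\delta^3)$ remainder. A secondary technicality is confirming that the neglected first-order Taylor remainder, which is pointwise $\mathcal{O}(\delta^3/|\mathbf{x}-P|^2)$ on $\chi_P$, is compatible with the claimed global $\mathcal{O}(\delta^3)$ only because $\chi_P$ deliberately excludes the shell of radius $\mathcal{O}(\delta)$ around $P$; this is precisely why the indicator $\chi_P$ (rather than $\chi(S^f)$) must appear in the statement of $\mathbf{S}_{01}$.
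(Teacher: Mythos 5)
Your proposal is correct and follows essentially the same route as the paper: decompose $\mathbf{S}_D[\boldsymbol{\varphi}]$ by the locations of $\mathbf{x}$ and $\mathbf{y}$, rescale $\mathbf{y}=P+\delta\tilde{\mathbf{y}}$ or $\mathbf{y}=Q+\delta\tilde{\mathbf{y}}$ so that $ds(\mathbf{y})=\delta^2\,ds(\tilde{\mathbf{y}})$ on $S_1^a,S_1^b$, Taylor expand $\boldsymbol{\Gamma}$ about the cap centers when $\mathbf{x}$ is well separated, and retain the exact integrals near-end via the indicators $\chi(S^a\cup\iota_\delta(P))$, $\chi(S^b\cup\iota_\delta(Q))$. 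Your remark about why $\chi_P$ (rather than $\chi(S^f)$) must appear in $\mathbf{S}_{01}$ is a useful clarification that the paper leaves implicit.
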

\begin{proof}
For $\mathbf{x}\in \partial D$, from the definition of $\mathbf{S}_{D}[\boldsymbol{\varphi}]$, it follows that
$$
\begin{aligned}
&\mathbf{S}_{D}[\boldsymbol{\varphi}](\mathbf{x})=\int_{S^f} \boldsymbol{\Gamma}(\mathbf{x}-\mathbf{y}) \boldsymbol{\varphi}(\mathbf{y}) d s(\mathbf{y})+\int_{S^a\cup S^b} \boldsymbol{\Gamma}(\mathbf{x}-\mathbf{y}) \boldsymbol{\varphi}(\mathbf{y}) d s(\mathbf{y})\\=&\chi(S^f)\int_{S^f} \boldsymbol{\Gamma}(\mathbf{x}-\mathbf{y}) \boldsymbol{\varphi}(\mathbf{y}) d s(\mathbf{y})+\chi(S^a\cup S^b)\int_{S^f} \boldsymbol{\Gamma}(\mathbf{x}-\mathbf{y}) \boldsymbol{\varphi}(\mathbf{y}) d s(\mathbf{y})\\&+\chi(S^f)\int_{S^a\cup S^b} \boldsymbol{\Gamma}(\mathbf{x}-\mathbf{y}) \boldsymbol{\varphi}(\mathbf{y}) d s(\mathbf{y})+\chi(S^a\cup S^b)\int_{S^a\cup S^b} \boldsymbol{\Gamma}(\mathbf{x}-\mathbf{y}) \boldsymbol{\varphi}(\mathbf{y}) d s(\mathbf{y}).
\end{aligned}
$$
For $\mathbf{x}\in S^a$ and $\mathbf{y}\in S^b$,
\begin{equation}
\mathbf{x}-\mathbf{y}=\mathbf{x}-Q-(\mathbf{y}-Q)=\mathbf{x}-Q-\delta(\tilde{\mathbf{y}}-Q),
\end{equation}
\begin{equation}
|\mathbf{x}-\mathbf{y}|=|\mathbf{x}-Q|-\frac{\delta}{|\mathbf{x}-Q|}(\tilde{\mathbf{y}}-Q, \mathbf{x}-Q)+O(\delta^2),
\end{equation}
\begin{equation}
\boldsymbol{\Gamma}(\mathbf{x}-\mathbf{y})=\frac{\alpha_1}{|\boldsymbol{x}-Q|}\mathbf{I}+\alpha_2 \frac{(\boldsymbol{x}-Q)
(\boldsymbol{x}-Q)^T}{|\boldsymbol{x}-Q|^3}+O(\delta),
\end{equation}
then
\begin{equation}
\chi(S^a)\int_{S^b}\boldsymbol{\Gamma}(\mathbf{x}-\mathbf{y})\boldsymbol{\varphi}(\mathbf{y})ds(\mathbf{y})=\chi(S^a)\delta^2\int_{S_1^b}
\boldsymbol{\Gamma}(\mathbf{x}-Q)
\tilde{\boldsymbol{\varphi}}(\tilde{\mathbf{y}})ds(\tilde{\mathbf{y}})+O(\delta^3),
\end{equation}
similarly,
\begin{equation}
\chi(S^b)\int_{S^a}\boldsymbol{\Gamma}(\mathbf{x}-\mathbf{y})\boldsymbol{\varphi}(\mathbf{y})ds(\mathbf{y})=\chi(S^b)\delta^2\int_{S_1^a}
\boldsymbol{\Gamma}(\mathbf{x}-P)
\tilde{\boldsymbol{\varphi}}(\tilde{\mathbf{y}})ds(\tilde{\mathbf{y}})+O(\delta^3).
\end{equation}
For $\mathbf{x}\in S^f$ and $\mathbf{y}\in S^a$, we have
\begin{equation}
\begin{aligned}
&\chi(S^f)\int_{S^a}\boldsymbol{\Gamma}(\mathbf{x}-\mathbf{y})\boldsymbol{\varphi}(\mathbf{y})d s(\mathbf{y})\\
=&\chi\left(S^f \backslash \iota_\delta(P)\right)\delta^2\bigg(\int_{S_{1}^a}\boldsymbol{\Gamma}(\mathbf{x}-P)\tilde{\boldsymbol{\varphi}}(\tilde{\mathbf{y}})d s(\tilde{\mathbf{y}})+\mathcal{O}\Big(\frac{\delta}{|\mathbf{x}-P|^2}\Big)\bigg)\\&+\chi(\iota_\delta(P))\int_{S^a}\boldsymbol{\Gamma}(\mathbf{x}-\mathbf{y})\boldsymbol{\varphi}(\mathbf{y})d s(\mathbf{y}),
\end{aligned}
\end{equation}
similarly, for $\mathbf{x}\in S^f$ and $\mathbf{y}\in S^b$, a similar result holds. This completes the proof.
\end{proof}

\begin{lem}
{\rm (1) For $\mathbf{S}_{D, 1}[\boldsymbol{\varphi}]$ defined in \eqref{S1}, the following asymptotic result hold:
\begin{equation}\label{SD1}
\mathbf{S}_{D, 1}[\boldsymbol{\varphi}]=\mathbf{S}_{10}[\boldsymbol{\varphi}]
+\delta^2\mathbf{S}_{11}[\boldsymbol{\varphi}]+\mathcal{O}(\delta^3), \quad \mathbf{x}\in \partial D,
\end{equation}
where
\begin{equation}
\begin{aligned}
\mathbf{S}_{10}[\boldsymbol{\varphi}](\mathbf{x})=&\chi(S^a\cup S^b)\int_{S^f}\alpha_3\boldsymbol{\varphi}(\mathbf{y}) d s(\mathbf{y})
+\sum_{\substack{i,j=1}}^{2}\chi(\Gamma_i)\int_{\Gamma_j}\alpha_3\boldsymbol{\varphi}(\mathbf{y})d s(\mathbf{y}),
\end{aligned}
\end{equation}
and
\begin{equation}
\begin{aligned}
\mathbf{S}_{11}[\boldsymbol{\varphi}](\mathbf{x})=&\chi(\partial D)\int_{S_1^a\cup S_1^b}\alpha_3\tilde{\boldsymbol{\varphi}}(\tilde{\mathbf{y}}) d s(\tilde{\mathbf{y}}).
\end{aligned}
\end{equation}
(2) For $\mathbf{S}_{D, 2}[\boldsymbol{\varphi}]$ defined in \eqref{S2}, the following asymptotic result hold:
\begin{equation}\label{SD2}
\mathbf{S}_{D, 2}[\boldsymbol{\varphi}]=\mathbf{S}_{20}[\boldsymbol{\varphi}]
+\delta\mathbf{S}_{21}[\boldsymbol{\varphi}]+\delta^2\mathbf{S}_{22}[\boldsymbol{\varphi}]+\mathcal{O}(\delta^3), \quad \mathbf{x}\in \partial D,
\end{equation}
where
\begin{equation}
\begin{aligned}
\mathbf{S}_{20}[\boldsymbol{\varphi}](\mathbf{x})=&\chi(S^a\cup S^b)\int_{S^f} \mathbf\Gamma_2(\mathbf{x}-\mathbf{y})\boldsymbol{\varphi}(\mathbf{y}) d s(\mathbf{y})+\sum_{\substack{i,j=1}}^{2}\chi(\Gamma_i)\int_{\Gamma_j}\mathbf\Gamma_2(\mathbf{x}-\mathbf{y})\boldsymbol{\varphi}(\mathbf{y})d s(\mathbf{y})
\\&+\chi(S^a\cup \iota_\delta(P))\int_{S^a}\mathbf\Gamma_2(\mathbf{x}-\mathbf{y})\boldsymbol{\varphi}(\mathbf{y})d s(\mathbf{y})\\&+\chi(S^b\cup \iota_\delta(Q))\int_{S^b} \mathbf\Gamma_2(\mathbf{x}-\mathbf{y})\boldsymbol{\varphi}(\mathbf{y})d s(\mathbf{y}),
\end{aligned}
\end{equation}
\begin{equation}
\begin{aligned}
\mathbf{S}_{21}[\boldsymbol{\varphi}](\mathbf{x})=&\chi_P\delta\bigg(\int_{S_{1}^a}\mathbf\Gamma_2(\mathbf{x}-P)\tilde{\boldsymbol{\varphi}}(\tilde{\mathbf{y}}) d s(\tilde{\mathbf{y}})+O\Big(\frac{\delta^2}{|\mathbf{x}-P|}\Big)\bigg)\\
&+\chi_Q\delta\bigg(\int_{S_{1}^b}\mathbf\Gamma_2(\mathbf{x}-Q)\tilde{\boldsymbol{\varphi}}(\tilde{\mathbf{y}}) d s(\tilde{\mathbf{y}})+O\Big(\frac{\delta^2}{|\mathbf{x}-Q|}\Big)\bigg),
\end{aligned}
\end{equation}
and
\begin{equation}
\begin{aligned}
\mathbf{S}_{22}[\boldsymbol{\varphi}](\mathbf{x})=&\chi(S^a)\int_{S_{1}^b}\mathbf\Gamma_2(\mathbf{x}-Q)
\tilde{\boldsymbol{\varphi}}(\tilde{\mathbf{y}})d s(\tilde{\mathbf{y}})+\chi(S^b)\int_{S_{1}^a}\mathbf\Gamma_2(\mathbf{x}-P)
\tilde{\boldsymbol{\varphi}}(\tilde{\mathbf{y}})d s(\tilde{\mathbf{y}}),
\end{aligned}
\end{equation}
with $\mathbf\Gamma_2$ is given in \eqref{G2}. Here $\chi_P:=\chi\left(S^f \backslash \iota_\delta(P)\right)$ and $\chi_Q:=\chi\left(S^f \backslash \iota_\delta(Q)\right)$.\\
(3) For $\mathbf{K}_{D, 2}^*[\boldsymbol{\varphi}]$ defined in \eqref{K2}, the following asymptotic result hold:
\begin{equation}\label{KD2}
\mathbf{K}_{D, 2}^*[\boldsymbol{\varphi}](\mathbf{x})=\mathbf{K}_{10}[\boldsymbol{\varphi}](\mathbf{x})+\delta \mathbf{K}_{11}[\boldsymbol{\varphi}](\mathbf{x})+\delta^2 \mathbf{K}_{12}[\boldsymbol{\varphi}](\mathbf{x})+\mathcal{O}\left(\delta^3\right),
\end{equation}
where
\begin{equation}
\begin{aligned}
\mathbf{K}_{10}[\boldsymbol{\varphi}](\mathbf{x})=&\chi(S^a\cup S^b)\int_{S^f} \frac{\partial\mathbf\Gamma_2(\mathbf{x}-\mathbf{y})}{\partial\boldsymbol{\nu}_{\mathbf{x}}}\boldsymbol{\varphi}(\mathbf{y}) d s(\mathbf{y})
\\&+\Big(\lambda\alpha_4+\big(3\lambda+2\mu\big)\alpha_5\Big)\sum_{\substack{i,j=1}}^{2}\chi(\Gamma_i)\int_{\Gamma_j}
\frac{\boldsymbol{\nu}_{\mathbf{x}}(\mathbf{x}-\mathbf{y})^T}{|\mathbf{x}-\mathbf{y}|}\boldsymbol{\varphi}(\mathbf{y})d s(\mathbf{y})
\\&+\mu(\alpha_4+\alpha_5)\sum_{\substack{i,j=1}}^{2}\chi(\Gamma_i)\int_{\Gamma_j}
\left(\frac{(\mathbf{x}-\mathbf{y})\boldsymbol{\nu}_{\mathbf{x}}^T}{|\mathbf{x}-\mathbf{y}|}
+\frac{(\mathbf{x}-\mathbf{y},\boldsymbol{\nu}_{\mathbf{x}})}{|\mathbf{x}-\mathbf{y}|}\right)\boldsymbol{\varphi}(\mathbf{y})d s(\mathbf{y})\\
&-2\mu\alpha_5\sum_{\substack{i,j=1}}^{2}\chi(\Gamma_i)\int_{\Gamma_j}
\frac{(\mathbf{x}-\mathbf{y})(\mathbf{x}-\mathbf{y})^T\boldsymbol{\nu}_{\mathbf{x}}(\mathbf{x}-\mathbf{y})^T}{|\mathbf{x}-\mathbf{y}|^3}
\boldsymbol{\varphi}(\mathbf{y})d s(\mathbf{y})
\\&+\chi(S^a\cup\iota_\delta(P))\int_{S^a}\frac{\partial\mathbf\Gamma_2(\mathbf{x}-\mathbf{y})}{\partial\boldsymbol{\nu}_{\mathbf{x}}}\boldsymbol{\varphi}(\mathbf{y})d s(\mathbf{y})\\&+\chi(S^b\cup\iota_\delta(Q))\int_{S^b} \frac{\partial\mathbf\Gamma_2(\mathbf{x}-\mathbf{y})}{\partial\boldsymbol{\nu}_{\mathbf{x}}}\boldsymbol{\varphi}(\mathbf{y})d s(\mathbf{y}).
\end{aligned}
\end{equation}
The matrix type operator $\mathbf{K}_{11}$ and $\mathbf{K}_{12}$ admit the following forms:
\begin{equation}
\begin{aligned}
\mathbf{K}_{11}[\boldsymbol{\varphi}](\mathbf{x})=&\Big(\lambda\alpha_4+\big(3\lambda+2\mu\big)\alpha_5\Big)\delta\chi_P\Big(\int_{S_{1}^a}
\frac{\boldsymbol{\nu}_{\mathbf{x}}(\mathbf{x}-P)^T}{|\mathbf{x}-P|}\tilde{\boldsymbol{\varphi}}(\tilde{\mathbf{y}})+O(\frac{\delta}{|\mathbf{x}-P|})\Big)
\\&+\Big(\lambda\alpha_4+\big(3\lambda+2\mu\big)\alpha_5\Big)\delta\chi_Q\Big(\int_{S_{1}^a}
\frac{\boldsymbol{\nu}_{\mathbf{x}}(\mathbf{x}-Q)^T}{|\mathbf{x}-Q|}\tilde{\boldsymbol{\varphi}}(\tilde{\mathbf{y}})+O(\frac{\delta}{|\mathbf{x}-Q|})\Big)
\\&+\mu(\alpha_4+\alpha_5)\delta\chi_P\int_{S_{1}^a}
\Big(\frac{(\mathbf{x}-P)\boldsymbol{\nu}_{\mathbf{x}}^T}{|\mathbf{x}-P|}
+\frac{\delta(1-(\tilde{\mathbf{y}}-P,\boldsymbol{\nu}_{\mathbf{x}}))}{|\mathbf{x}-P|}\Big)\tilde{\boldsymbol{\varphi}}(\tilde{\mathbf{y}})\\
&+\mu(\alpha_4+\alpha_5)\delta\chi_Q\int_{S_{1}^b}
\Big(\frac{(\mathbf{x}-Q)\boldsymbol{\nu}_{\mathbf{x}}^T}{|\mathbf{x}-Q|}
+\frac{\delta(1-(\tilde{\mathbf{y}}-Q,\boldsymbol{\nu}_{\mathbf{x}}))}{|\mathbf{x}-Q|}\Big)\tilde{\boldsymbol{\varphi}}(\tilde{\mathbf{y}})
\\&-2\mu\alpha_5\delta\chi_P\int_{S_{1}^a}
\frac{(\mathbf{x}-P)(\mathbf{x}-P)^T\boldsymbol{\nu}_{\mathbf{x}}(\mathbf{x}-P)^T}{|\mathbf{x}-P|^3}
\tilde{\boldsymbol{\varphi}}(\tilde{\mathbf{y}})
\\&-2\mu\alpha_5\delta\chi_Q\int_{S_{1}^b}
\frac{(\mathbf{x}-Q)(\mathbf{x}-Q)^T\boldsymbol{\nu}_{\mathbf{x}}(\mathbf{x}-Q)^T}{|\mathbf{x}-Q|^3}
\tilde{\boldsymbol{\varphi}}(\tilde{\mathbf{y}})\\
\end{aligned}
\end{equation}
and
\begin{equation}
\begin{aligned}
\mathbf{K}_{12}[\boldsymbol{\varphi}](\mathbf{x})=
&\Big(\lambda\alpha_4+\big(3\lambda+2\mu\big)\alpha_5\Big)\chi(S^a)\int_{S_{1}^b}
\frac{\boldsymbol{\nu}_{\mathbf{x}}(\mathbf{x}-Q)^T}{|\mathbf{x}-Q|}\tilde{\boldsymbol{\varphi}}(\tilde{\mathbf{y}})
\\&+\Big(\lambda\alpha_4+\big(3\lambda+2\mu\big)\alpha_5\Big)\chi(S^b)\int_{S_{1}^a}
\frac{\boldsymbol{\nu}_{\mathbf{x}}(\mathbf{x}-P)^T}{|\mathbf{x}-P|}\tilde{\boldsymbol{\varphi}}(\tilde{\mathbf{y}})
\\&+\mu(\alpha_4+\alpha_5)\chi(S^a)\int_{S_{1}^b}
\Big(\frac{(\mathbf{x}-Q)\boldsymbol{\nu}_{\mathbf{x}}^T}{|\mathbf{x}-Q|}
+\frac{(\mathbf{x}-Q,\boldsymbol{\nu}_{\mathbf{x}})}{|\mathbf{x}-Q|}\Big)\tilde{\boldsymbol{\varphi}}(\tilde{\mathbf{y}})\\
&+\mu(\alpha_4+\alpha_5)\chi(S^b)\int_{S_{1}^a}
\Big(\frac{(\mathbf{x}-P)\boldsymbol{\nu}_{\mathbf{x}}^T}{|\mathbf{x}-P|}
+\frac{(\mathbf{x}-P,\boldsymbol{\nu}_{\mathbf{x}})}{|\mathbf{x}-P|}\Big)\tilde{\boldsymbol{\varphi}}(\tilde{\mathbf{y}})
\\&-2\mu\alpha_5\chi(S^a)\int_{S_{1}^b}
\frac{(\mathbf{x}-Q)(\mathbf{x}-Q)^T\boldsymbol{\nu}_{\mathbf{x}}(\mathbf{x}-Q)^T}{|\mathbf{x}-Q|^3}
\tilde{\boldsymbol{\varphi}}(\tilde{\mathbf{y}})
\\&-2\mu\alpha_5\chi(S^b)\int_{S_{1}^a}
\frac{(\mathbf{x}-P)(\mathbf{x}-P)^T\boldsymbol{\nu}_{\mathbf{x}}(\mathbf{x}-P)^T}{|\mathbf{x}-P|^3}
\tilde{\boldsymbol{\varphi}}(\tilde{\mathbf{y}}),
\end{aligned}
\end{equation}
where $\chi_P:=\chi\left(S^f \backslash \iota_\delta(P)\right)$ and $\chi_Q:=\chi\left(S^f \backslash \iota_\delta(Q)\right)$. $\alpha_i, i=4, 5$ appear in \eqref{a45}. $\chi$ denote the characteristic function. In the above integrals, we omit the integral variable $s(\tilde{\mathbf{y}})$.
}
\end{lem}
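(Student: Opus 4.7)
The plan is to mirror the geometric strategy of Lemma~\ref{leS}, reducing each expansion to a combination of integrals over $S^f$ together with rescaled integrals over the unit half-spheres $S_1^a, S_1^b$, with Taylor expansion of the kernel about the endpoints $P, Q$ supplying the error control. Throughout, the key rescaling is $\mathbf{y} = P+\delta(\tilde{\mathbf{y}}-P)$ on $S^a$ (and similarly with $Q$ on $S^b$), which produces $ds(\mathbf{y}) = \delta^2 ds(\tilde{\mathbf{y}})$.

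For Part (1), the kernel of $\mathbf{S}_{D,1}$ is the \emph{constant} $\alpha_3 \mathbf{I}$, so the only work is the surface decomposition $\int_{\partial D} = \int_{S^f} + \int_{S^a} + \int_{S^b}$. The $S^f$ piece supplies the $\mathcal{O}(1)$ leading term $\mathbf{S}_{10}$ (with the characteristic-function splitting in the stated formula merely recording which surface component $\mathbf{x}$ lies in), while the two half-spheres each contribute $\delta^2$ after rescaling, giving the $\delta^2 \mathbf{S}_{11}$ term with no further error beyond the neglected contributions already absorbed in $\mathcal{O}(\delta^3)$.

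For Part (2), the kernel $\mathbf\Gamma_2(\mathbf{x}-\mathbf{y}) = \alpha_4|\mathbf{x}-\mathbf{y}|\mathbf{I}+\alpha_5(\mathbf{x}-\mathbf{y})(\mathbf{x}-\mathbf{y})^T/|\mathbf{x}-\mathbf{y}|$ is globally continuous and smooth off the diagonal, so the analysis parallels Lemma~\ref{leS}. For $\mathbf{x}\in S^f\setminus\iota_\delta(P)$ and $\mathbf{y}\in S^a$, I would Taylor-expand $\mathbf\Gamma_2(\mathbf{x}-\mathbf{y}) = \mathbf\Gamma_2(\mathbf{x}-P) + \mathcal{O}(\delta/|\mathbf{x}-P|)$ and then rescale; the outer $\delta$ in the formula for $\mathbf{S}_{21}$ absorbs one factor of the surface-area scaling $\delta^2$, so that $\delta\,\mathbf{S}_{21}$ has the correct $\delta^2$ order. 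For $\mathbf{x}\in S^a\cup\iota_\delta(P)$ the points $\mathbf{x}$ and $\mathbf{y}$ lie within $\mathcal{O}(\delta)$ of each other, so no expansion is possible and the full integral is retained in $\mathbf{S}_{20}$. The configurations with $\mathbf{x}$ on one cap and $\mathbf{y}$ on the other cap produce the $\delta^2 \mathbf{S}_{22}$ term through a Taylor expansion of the smooth kernel together with the cap rescaling.

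For Part (3), the principal algebraic step is to compute $\partial \mathbf\Gamma_2(\mathbf{x}-\mathbf{y})/\partial \boldsymbol{\nu}_{\mathbf{x}}$ explicitly from \eqref{eq:01}. Writing $\mathbf{r}=\mathbf{x}-\mathbf{y}$, $r=|\mathbf{r}|$, differentiating the columns of $\alpha_4 r\mathbf{I} + \alpha_5\mathbf{r}\mathbf{r}^T/r$ and forming $\lambda(\nabla\cdot)\boldsymbol{\nu}_{\mathbf{x}}+\mu(\nabla+\nabla^T)\boldsymbol{\nu}_{\mathbf{x}}$ produces exactly the four tensor structures
\begin{equation*}
\big(\lambda\alpha_4+(3\lambda+2\mu)\alpha_5\big)\frac{\boldsymbol{\nu}_{\mathbf{x}}\mathbf{r}^T}{r},\ \ \mu(\alpha_4+\alpha_5)\frac{\mathbf{r}\boldsymbol{\nu}_{\mathbf{x}}^T}{r},\ \ \mu(\alpha_4+\alpha_5)\frac{(\mathbf{r},\boldsymbol{\nu}_{\mathbf{x}})}{r}\mathbf{I},\ \ -2\mu\alpha_5\frac{\mathbf{r}\mathbf{r}^T(\mathbf{r},\boldsymbol{\nu}_{\mathbf{x}})}{r^3},
\end{equation*}
which are precisely the kernel blocks assembled in $\mathbf{K}_{10}, \mathbf{K}_{11}, \mathbf{K}_{12}$. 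With this explicit kernel in hand, I would apply the same decomposition-plus-rescaling procedure as in Part (2) to each block separately. The main obstacle is the bookkeeping: the conormal derivative calculation is a routine but delicate exercise in index calculus that must be carried out carefully, and each of the four tensor blocks interacts slightly differently with the geometric decomposition. In particular, the cubic block $\mathbf{r}\mathbf{r}^T(\mathbf{r},\boldsymbol{\nu}_{\mathbf{x}})/r^3$ requires extra attention near $\iota_\delta(P)$ and $\iota_\delta(Q)$ because of its stronger near-diagonal behavior; this is exactly why the characteristic functions $\chi_P, \chi_Q$ must be inserted in $\mathbf{K}_{11}$ to exclude the non-expandable regime, while the $S^a\leftrightarrow S^b$ cross-terms form the separate $\mathbf{K}_{12}$ contribution.
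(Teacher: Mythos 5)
Your proposal follows the paper's approach exactly: compute the four tensor blocks of $\partial\mathbf\Gamma_2(\mathbf{x}-\mathbf{y})/\partial\boldsymbol{\nu}_{\mathbf{x}}$ and then apply the decomposition-plus-endpoint-rescaling strategy of Lemma~\ref{leS} term by term, with the constant kernel in Part (1) and the smooth kernels in Parts (2)--(3) handled via Taylor expansion about $P,Q$. The paper's own proof is precisely "similar to the derivation of Lemma~\ref{leS}" together with the explicit conormal-derivative formula, which you reproduce correctly.
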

\begin{proof}
Similar to the derivation of Lemma \ref{leS}, and note that
$$
\begin{aligned}
\frac{\partial \boldsymbol{\Gamma}_2(\mathbf{x}-\mathbf{y})}{\partial \boldsymbol{\nu}_{\mathbf{x}}} =& \Big(\lambda\alpha_4+\big(3\lambda+2\mu\big)\alpha_5\Big)\frac{\boldsymbol{\nu}_{\mathbf{x}}(\mathbf{x}-\mathbf{y})^T}{|\mathbf{x}-\mathbf{y}|}+\mu(\alpha_4+\alpha_5)\bigg(\frac{(\mathbf{x}-\mathbf{y})\boldsymbol{\nu}_{\mathbf{x}}^T}{|\mathbf{x}-\mathbf{y}|}
\\&+\frac{(\mathbf{x}-\mathbf{y},\boldsymbol{\nu}_{\mathbf{x}})}{|\mathbf{x}-\mathbf{y}|}\bigg)-2\mu\alpha_5\frac{(\mathbf{x}-\mathbf{y})(\mathbf{x}-\mathbf{y})^T\boldsymbol{\nu}_{\mathbf{x}}(\mathbf{x}-\mathbf{y})^T}{|\mathbf{x}-\mathbf{y}|^3},
\end{aligned}
$$
by direct calculations, we can obtain \eqref{SD1}, \eqref{SD2} and \eqref{KD2}.
\end{proof}

\begin{lem}\label{le:AD}
The operator $\mathcal{A}_{D}(\omega): \mathcal{H}^*\left(\partial D\right) \rightarrow \mathcal{H}^*\left(\partial D\right)$ has the following expansion:
\begin{equation}\label{1300}
\mathcal{A}_{D}(\omega)=\mathcal{A}_{D, 0}+\omega^2 \mathcal{A}_{D, 2}+\mathcal{O}\left(\omega^3\right),
\end{equation}
where $\mathcal{A}_{D, 0}$ is defined in \eqref{eq:AD0} and
$$
\mathcal{A}_{D, 2}=(1-c)\Big(\frac{I}{2}-\mathbf{K}_{D}^*\Big)\mathbf{S}_{D}^{-1} \mathbf{S}_{D, 2},
$$
with $\mathbf{S}_{D, 2}$ appear in Lemma \ref{le:1}. The following estimate holds:
$$\left\|\mathcal{A}_{D, 2}\right\|_{\mathcal{L}\left(\mathcal{H}^*\left(\partial D\right), \mathcal{H}^*\left(\partial D\right)\right)}=\mathcal{O}\left( 1\right).$$
\end{lem}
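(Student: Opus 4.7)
The plan is to substitute the expansions from Lemma \ref{le:1} for $\mathbf{S}_D^{\omega}$, $(\mathbf{S}_D^{\omega})^{-1}$, and $(\mathbf{K}_D^{\omega})^*$ into the definition \eqref{F} of $\mathcal{A}_D(\omega)$, and to collect powers of $\omega$, keeping in mind that $\omega_1=\omega/\sqrt{c}$ and $\omega_2=\omega$. At zeroth order the two leading pieces give $c(-I/2+\mathbf{K}_D^*)-(I/2+\mathbf{K}_D^*)=\mathcal{A}_{D,0}$ immediately. So the content of the lemma lies in showing that the $\omega$-coefficient vanishes and in identifying the $\omega^2$-coefficient.

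For the order-$\omega$ coefficient, only the product $(\mathbf{S}_D^{\omega_1})^{-1}\mathbf{S}_D^{\omega_2}$ contributes, because $(\mathbf{K}_D^{\omega})^*$ has no linear term by \eqref{1407}. Using $\mathbf{B}_{D,1}\mathbf{S}_D=-\mathbf{S}_D^{-1}\mathbf{S}_{D,1}$, the linear part of this product simplifies to $(\omega_2-\omega_1)\mathbf{S}_D^{-1}\mathbf{S}_{D,1}$, so the entire $\omega$-term in $\mathcal{A}_D(\omega)$ equals $c(\omega-\omega/\sqrt{c})(-I/2+\mathbf{K}_D^*)\mathbf{S}_D^{-1}\mathbf{S}_{D,1}$. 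The key observation is that by \eqref{S1} the image $\mathbf{S}_{D,1}[\boldsymbol{\varphi}]$ is a \emph{constant} vector on $\partial D$. For any constant boundary value $\mathbf{c}$, the density $\boldsymbol{\psi}:=\mathbf{S}_D^{-1}[\mathbf{c}]$ produces $\mathbf{u}=\mathbf{S}_D[\boldsymbol{\psi}]$ which solves $\mathcal{L}_{\lambda,\mu}\mathbf{u}=0$ in $D$ with $\mathbf{u}|_{\partial D}=\mathbf{c}$; uniqueness for the interior Dirichlet Lam\'e problem forces $\mathbf{u}\equiv\mathbf{c}$ in $D$, so $\partial_{\boldsymbol{\nu}}\mathbf{u}|_{-}=0$, which via the jump relation \eqref{eq:04} is exactly $(-I/2+\mathbf{K}_D^*)[\boldsymbol{\psi}]=0$. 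Thus the linear term vanishes identically.

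For the order-$\omega^2$ coefficient, I would first extract the quadratic part of $(\mathbf{S}_D^{\omega_1})^{-1}\mathbf{S}_D^{\omega_2}$. Using $\mathbf{B}_{D,2}\mathbf{S}_D=-\mathbf{S}_D^{-1}\mathbf{S}_{D,2}+\mathbf{S}_D^{-1}\mathbf{S}_{D,1}\mathbf{S}_D^{-1}\mathbf{S}_{D,1}$ together with the identity for $\mathbf{B}_{D,1}\mathbf{S}_D$ above, this part equals $\omega^2\left[(1-c^{-1})\mathbf{S}_D^{-1}\mathbf{S}_{D,2}+(c^{-1}-c^{-1/2})\mathbf{S}_D^{-1}\mathbf{S}_{D,1}\mathbf{S}_D^{-1}\mathbf{S}_{D,1}\right]$. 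Composing with $c(-I/2+\mathbf{K}_D^*)$, the second summand is annihilated: the innermost $\mathbf{S}_{D,1}$ again produces a constant, and the argument from the previous paragraph applied to $\mathbf{S}_D^{-1}\mathbf{S}_{D,1}\mathbf{S}_D^{-1}\mathbf{S}_{D,1}[\boldsymbol{\varphi}]$ (apply $\mathbf{S}_D^{-1}$ to a constant, feed into $\mathbf{S}_{D,1}$ which returns another constant, then $\mathbf{S}_D^{-1}$ again, and finally kill by $-I/2+\mathbf{K}_D^*$) makes it vanish. Next, the $\mathbf{K}_{D,2}$ contributions: the front factor contributes $c\omega_1^2\mathbf{K}_{D,2}=\omega^2\mathbf{K}_{D,2}$ (because $c\omega_1^2=\omega^2$), and the back factor contributes $-\omega_2^2\mathbf{K}_{D,2}=-\omega^2\mathbf{K}_{D,2}$; these cancel exactly. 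What survives is $c\omega^2(1-c^{-1})(-I/2+\mathbf{K}_D^*)\mathbf{S}_D^{-1}\mathbf{S}_{D,2}=\omega^2(1-c)(I/2-\mathbf{K}_D^*)\mathbf{S}_D^{-1}\mathbf{S}_{D,2}=\omega^2\mathcal{A}_{D,2}$, as claimed.

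The operator-norm estimate $\|\mathcal{A}_{D,2}\|_{\mathcal{L}(\mathcal{H}^*,\mathcal{H}^*)}=\mathcal{O}(1)$ follows by chaining \eqref{1408} (for $I/2-\mathbf{K}_D^*$), \eqref{1304} (for $\mathbf{S}_D^{-1}$), and \eqref{2123} (for $\mathbf{S}_{D,2}$); the remainder $\mathcal{O}(\omega^3)$ is obtained by collecting the truncated tails of the three expansions in Lemma \ref{le:1} and estimating them in the same operator norm. The principal obstacle I anticipate is neither the algebra nor the norm bookkeeping but the two cancellations that make the expansion begin at order $\omega^2$: recognizing that $\mathbf{S}_{D,1}$ maps into constants, and applying elastic interior Dirichlet uniqueness to conclude $(-I/2+\mathbf{K}_D^*)\mathbf{S}_D^{-1}[\text{const}]=0$. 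Once this is isolated, the cancellation of the double-$\mathbf{S}_{D,1}$ term follows by the same mechanism applied twice.
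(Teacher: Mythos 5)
Your proposal is correct and follows exactly the route the paper takes, only with the details filled in: the paper's proof is a one-line sketch invoking the expansions of Lemma 3.2 and the identity $\bigl(-\tfrac{I}{2}+\mathbf{K}_D^*\bigr)\mathbf{S}_D^{-1}[C]=\mathbf{0}$ for constants $C$, and your write-up supplies precisely the algebra (tracking $\omega_1=\omega/\sqrt{c}$ versus $\omega_2=\omega$, the cancellation of the linear term, the exact cancellation of the two $\mathbf{K}_{D,2}$ contributions, and the annihilation of the double-$\mathbf{S}_{D,1}$ term) together with the interior-Dirichlet-uniqueness justification of that key identity. No gaps.
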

\begin{proof}
Using \eqref{F}-\eqref{eq:AD0}, Lemma \ref{le:1} and the fact that $\left(-\frac{I}{2}+\mathbf{K}_{D}^*\right)\mathbf{S}_{D}^{-1}[C]=\mathbf{0}$  (where $C$ is a constant), we can obtain the result by direct calculation. This completes the proof.
\end{proof}
\begin{lem}\label{le:F}
Let $\mathbf{F}$ be defined by \eqref{F}, one has
\begin{equation}\label{eq:FF}
\mathbf{F}=(1-c)\left(\frac{\partial \mathbf{H}_0}{\partial \boldsymbol{\nu}}+\omega\frac{\partial \mathbf{H}_1}{\partial \boldsymbol{\nu}}\right)+\mathcal{O}(\omega^2),
\end{equation}
where $\mathbf{H}_0$ and $\mathbf{H}_1$ are the zeroth-order and first-order terms of $\mathbf{H}$ with respect to $\omega$, respectively.
\end{lem}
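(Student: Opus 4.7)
The plan is to substitute the Taylor expansions of $(\mathbf{S}_{D}^{\omega_1})^{-1}$ and $(\mathbf{K}_{D}^{\omega_1})^*$ from Lemma \ref{le:1} into the definition \eqref{F} of $\mathbf{F}$, simultaneously expand $\mathbf{H}=\mathbf{H}_0+\omega\mathbf{H}_1+\mathcal{O}(\omega^2)$, and collect powers of $\omega$. Matching orders in $\mathcal{L}_{\lambda,\mu}\mathbf{H}+\omega^2\mathbf{H}=0$ immediately yields $\mathcal{L}_{\lambda,\mu}\mathbf{H}_0=0$ and $\mathcal{L}_{\lambda,\mu}\mathbf{H}_1=0$, so each $\mathbf{H}_i$ is a static Lam\'e field. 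This is what makes the interior representation below available.

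The pivotal identity I would invoke is that, for any $\mathbf{v}$ satisfying $\mathcal{L}_{\lambda,\mu}\mathbf{v}=0$ in $D$,
\begin{equation*}
\left(-\tfrac{1}{2}I+\mathbf{K}_{D}^{*}\right)\mathbf{S}_{D}^{-1}[\mathbf{v}]=\frac{\partial \mathbf{v}}{\partial\boldsymbol{\nu}}\bigg|_{\partial D}.
\end{equation*}
Indeed, $\mathbf{S}_{D}\mathbf{S}_{D}^{-1}[\mathbf{v}]$ and $\mathbf{v}$ both solve the static Lam\'e system in $D$ with the same Dirichlet trace, hence by uniqueness coincide on $\overline{D}$; taking the interior conormal derivative and applying the jump relation \eqref{eq:04} gives the claim. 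Applied to $\mathbf{v}=\mathbf{H}_0$, the zeroth-order term of $\mathbf{F}$ becomes $\partial_{\boldsymbol{\nu}}\mathbf{H}_0-c\,\partial_{\boldsymbol{\nu}}\mathbf{H}_0=(1-c)\partial_{\boldsymbol{\nu}}\mathbf{H}_0$.

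For the $\omega^{1}$ coefficient three contributions arise: $\omega\,\partial_{\boldsymbol{\nu}}\mathbf{H}_1$ from expanding $\mathbf{H}$, $-c\omega(-\tfrac{1}{2}I+\mathbf{K}_{D}^{*})\mathbf{S}_{D}^{-1}[\mathbf{H}_1]$ from pairing the $\omega^0$ operator with $\mathbf{H}_1$, and $-c\omega_1(-\tfrac{1}{2}I+\mathbf{K}_{D}^{*})\mathbf{B}_{D,1}[\mathbf{H}_0]$ from the first-order term in $(\mathbf{S}_{D}^{\omega_1})^{-1}$ (noting that $(\mathbf{K}_{D}^{\omega_1})^*$ carries no $\omega^1$ term by Lemma \ref{le:1}). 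Applying the identity above with $\mathbf{v}=\mathbf{H}_1$ collapses the first two into $\omega(1-c)\partial_{\boldsymbol{\nu}}\mathbf{H}_1$. The decisive observation is that the third contribution vanishes: by Lemma \ref{le:1}, $\mathbf{B}_{D,1}[\mathbf{H}_0]=-\mathbf{S}_{D}^{-1}\mathbf{S}_{D,1}\mathbf{S}_{D}^{-1}[\mathbf{H}_0]$, and $\mathbf{S}_{D,1}[\boldsymbol{\varphi}]=\alpha_3\int_{\partial D}\boldsymbol{\varphi}\,ds(\mathbf{y})$ is a constant vector; since any constant vector is a static Lam\'e field with zero conormal derivative, the identity forces $(-\tfrac{1}{2}I+\mathbf{K}_{D}^{*})\mathbf{S}_{D}^{-1}[\text{const}]=0$. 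Combining gives the stated formula, with the $\mathcal{O}(\omega^2)$ remainder absorbing all remaining terms via the operator-norm estimates of Lemma \ref{le:1}.

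The main obstacle is the vanishing of this $\omega^{1}$ correction from $(\mathbf{S}_{D}^{\omega_1})^{-1}$. It relies on the specific range property of $\mathbf{S}_{D,1}$, namely that it maps $\mathcal{H}^{*}$ into constant vectors, together with the fact that $(-\tfrac{1}{2}I+\mathbf{K}_{D}^{*})\mathbf{S}_{D}^{-1}$ annihilates constants; were either of these two features absent, a nonzero first-order correction would survive and the stated formula would fail. Everything else is routine bookkeeping in the $\omega$ expansion.
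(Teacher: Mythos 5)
Your proof is correct and follows essentially the same route as the paper: expand $\mathbf{H}$, $(\mathbf{S}_D^{\omega_1})^{-1}$, and $(\mathbf{K}_D^{\omega_1})^*$ in powers of $\omega$, use the identity $\left(-\tfrac{1}{2}I+\mathbf{K}_D^*\right)\mathbf{S}_D^{-1}[\mathbf{H}_i]=\partial_{\boldsymbol{\nu}}\mathbf{H}_i$ for static Lam\'e fields (proved by uniqueness of the interior Dirichlet problem plus the jump relation), and kill the $\mathbf{B}_{D,1}[\mathbf{H}_0]$ correction by noting that $\mathbf{S}_{D,1}$ has constant-vector range and $\left(-\tfrac{1}{2}I+\mathbf{K}_D^*\right)\mathbf{S}_D^{-1}$ annihilates constants. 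You make the mechanism behind the annihilation of constants slightly more explicit than the paper does, but the argument is otherwise identical.
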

\begin{proof}
Recall that $\mathcal{L}_{\lambda, \mu} \mathbf{H}+\omega^2\mathbf{H}=\mathbf{0}$ in $\mathbb{R}^3$, we can set
$\mathbf{H}=\mathbf{H}_0+\omega\mathbf{H}_1+\mathcal{O}(\omega^2)$, then $\mathcal{L}_{\lambda, \mu}\mathbf{H}_i=\mathbf{0}$, $i=0, 1$. It follows from \eqref{F} and Lemma \ref{le:1} that
$$
\begin{aligned}
\mathbf{F}=&\frac{\partial \mathbf{H}}{\partial \boldsymbol{\nu}}-c\left(-\frac{I}{2}+\left(\mathbf{K}_{D}^{\omega_1}\right)^*\right)\left(\mathbf{S}_{D}^{\omega_1}\right)^{-1}\left[\mathbf{H}\right]\\
=&\frac{\partial \mathbf{H}_0}{\partial \boldsymbol{\nu}}+\omega\frac{\partial \mathbf{H}_1}{\partial \boldsymbol{\nu}}-c\left(-\frac{I}{2}+\mathbf{K}_{D}^*\right)\left(\mathbf{S}_{D}\right)^{-1}\left[\mathbf{H}_0\right]\\
&-\omega c\left(-\frac{I}{2}+\mathbf{K}_{D}^*\right)\left(\mathbf{S}_{D}\right)^{-1}\left[\mathbf{H}_1\right]
-\omega_1 c\left(-\frac{I}{2}+\mathbf{K}_{D}^*\right)\mathbf{B}_{D, 1}\left[\mathbf{H}_0\right]+\mathcal{O}(\omega^2)\\
=&\frac{\partial \mathbf{H}_0}{\partial \boldsymbol{\nu}}+\omega\frac{\partial \mathbf{H}_1}{\partial \boldsymbol{\nu}}-c\frac{\partial \mathbf{H}_0}{\partial \boldsymbol{\nu}}-c\omega\frac{\partial \mathbf{H}_1}{\partial \boldsymbol{\nu}}+\mathcal{O}(\omega^2)=(1-c)\left(\frac{\partial \mathbf{H}_0}{\partial \boldsymbol{\nu}}+\omega\frac{\partial \mathbf{H}_1}{\partial \boldsymbol{\nu}}\right)+\mathcal{O}(\omega^2),
\end{aligned}
$$
we have used the following fact in the above formula
\begin{equation}\label{eq:1234}
\left(-\frac{I}{2}+\mathbf{K}_{D}^*\right)\left(\mathbf{S}_{D}\right)^{-1}\left[\mathbf{H}_i\right]=\frac{\partial \mathbf{H}_i}{\partial \boldsymbol{\nu}},\quad i=0, 1,
\end{equation}
\begin{equation}\label{eq:2345}
\left(-\frac{I}{2}+\mathbf{K}_{D}^*\right)\mathbf{B}_{D, 1}\left[\mathbf{H}_0\right]=\mathbf{0}.
\end{equation}
In fact, we set $\left(\mathbf{S}_{D}\right)^{-1}\left[\mathbf{H}_i\right]=\boldsymbol{\Phi}_i$ on $\partial D$, then $\mathbf{S}_{D}[\boldsymbol{\Phi}_i]=\mathbf{H}_i$ on $\partial D$. Owing to $\mathcal{L}_{\lambda, \mu}\mathbf{S}_{D}[\boldsymbol{\Phi}_i]=\mathbf{0}$ in $D$ and $\mathcal{L}_{\lambda, \mu}\mathbf{H}_i=\mathbf{0}$ in $D$, $i=0, 1$, so we have $\mathbf{S}_{D}[\boldsymbol{\Phi}_i]=\mathbf{H}_i$ in $D$. By jump formula \eqref{eq:04}, \eqref{eq:1234} holds. \eqref{eq:2345} follows from the fact that $\left(-\frac{I}{2}+\mathbf{K}_{D}^*\right)\mathbf{S}_{D}^{-1}[C]=\mathbf{0}$.
\end{proof}

Next, we further analyze \eqref{eq:main1}. The following Lemma holds:
\begin{lem}\label{leD}
Suppose $\boldsymbol{\psi}$ is given in \eqref{eq:03}, then we have
\begin{equation}\label{1015}
\begin{aligned}
&\left(\begin{array}{c} \boldsymbol{\psi}(x_1, -\delta\cos\theta_1, -\delta\sin\theta_1)\\[8pt]
 \boldsymbol{\psi}(x_1, \delta\cos\theta_1, \delta\sin\theta_1)\end{array} \right)\\=&\left(\begin{array}{cc} \lambda_1-\mathcal{D}_{11}& \mathcal{D}_{12}\\[8pt] \mathcal{D}_{21}& \lambda_1-\mathcal{D}_{22}\end{array} \right)^{-1}\left(\begin{array}{c}-\mathbf{f}_{\mathbf{H}_0}\\[8pt]
         \mathbf{f}_{\mathbf{H}_0}\end{array}\right)+\delta\left(\begin{array}{cc} \lambda_1-\mathcal{D}_{11}& \mathcal{D}_{12}\\[8pt] \mathcal{D}_{21}& \lambda_1-\mathcal{D}_{22}\end{array} \right)^{-1}\left(\begin{array}{c}\mathbf{g}_{\mathbf{H}_0}\\[8pt]
         \mathbf{g}_{\mathbf{H}_0}\end{array}\right)\\+&\omega\left(\begin{array}{cc} \lambda_1-\mathcal{D}_{11}& \mathcal{D}_{12}\\[8pt] \mathcal{D}_{21}& \lambda_1-\mathcal{D}_{22}\end{array} \right)^{-1}\left(\begin{array}{c}-\mathbf{f}_{\mathbf{H}_1}\\[8pt]
         \mathbf{f}_{\mathbf{H}_1}\end{array}\right)+\omega\delta\left(\begin{array}{cc} \lambda_1-\mathcal{D}_{11}& \mathcal{D}_{12}\\[8pt] \mathcal{D}_{21}& \lambda_1-\mathcal{D}_{22}\end{array} \right)^{-1}\left(\begin{array}{c}\mathbf{g}_{\mathbf{H}_1}\\[8pt]
         \mathbf{g}_{\mathbf{H}_1}\end{array}\right)\\+&\chi\left(\iota_{\delta^{\epsilon}}(P) \cup \iota_{\delta^{\epsilon}}(Q)\right) \mathcal{O}\left(\delta^{3(1-\varepsilon)}\right)+o(\delta)+\chi\left(\iota_{\delta^{\epsilon}}(P) \cup \iota_{\delta^{\epsilon}}(Q)\right)\omega\cdot\mathcal{O}\left(\delta^{3(1-\varepsilon)}\right)+\omega\cdot o(\delta)\\+&\mathcal{O}(\omega^2),\quad |x_1|\leq L/2-O(\delta),
\end{aligned}
\end{equation}
for any $0<\theta_1<\pi,\quad 0\leq\epsilon<1$, where
$$\mathbf{f}_{\mathbf{H}_k}=\left(\lambda(\nabla\cdot \mathbf{H}_k(\mathbf{z}_{\mathbf{x}}))
         +\mu(\nabla \mathbf{H}_k(\mathbf{z}_{\mathbf{x}})+\nabla \mathbf{H}_k(\mathbf{z}_{\mathbf{x}})^T)\right)\boldsymbol{\nu}_{\mathbf{x}},$$
and
$$\mathbf{g}_{\mathbf{H}_k}=\lambda A\nabla(\nabla\cdot \partial_{x_2}\mathbf{H}_k(\mathbf{z}_{\mathbf{x}}))+\mu((A\nabla)^T(\nabla\mathbf{H}_k(\mathbf{z}_{\mathbf{x}})+\nabla\mathbf{H}_k(\mathbf{z}_{\mathbf{x}})^T))^T, k=0, 1.$$
The matrix $A$ appears above is defined in Theorem \ref{3main}. The operators $\mathcal{D}_{ij}$, $i=1,2$, $j=1,2$ admit the following forms
         \begin{footnotesize}
\begin{equation}\label{1115}
\begin{aligned}
\mathcal{D}_{11}&[\boldsymbol{\psi}](x_1, \theta_1)=-\frac{\mu\delta}{4\pi(\lambda+2\mu)}\int_{-\frac{L}{2}}^{\frac{L}{2}}\int_{0}^{\pi}\frac{\mathbf{A}_1}{\big((x_1-y_1)^2+2\delta^2(1-\cos(\theta_1-\theta_2))\big)^{\frac{3}{2}}} \boldsymbol{\psi}(y_1, \theta_2)d\theta_2dy_1 \\&+\frac{\mu\delta}{4\pi(\lambda+2\mu)}\int_{-\frac{L}{2}}^{\frac{L}{2}}\int_{0}^{\pi}\frac{\delta\big(1-\cos(\theta_2-\theta_1)\big)}{\big((x_1-y_1)^2+2\delta^2(1-\cos(\theta_1-\theta_2))\big)^{\frac{3}{2}}}
\boldsymbol{\psi}(y_1, \theta_2)d\theta_2dy_1
\\&+\frac{3(\lambda+\mu)\delta}{4\pi(\lambda+2\mu)}
\int_{-\frac{L}{2}}^{\frac{L}{2}}\int_{0}^{\pi}\frac{\delta\big(1-\cos(\theta_2-\theta_1)\big)}{\big((x_1-y_1)^2+2\delta^2(1-\cos(\theta_1-\theta_2))\big)^{\frac{5}{2}}}\mathcal{C}_1(\mathbf{x}-\mathbf{y}) \boldsymbol{\psi}(y_1, \theta_2)d\theta_2dy_1,
\end{aligned}
\end{equation}
\end{footnotesize}
\begin{footnotesize}
\begin{equation}\label{1116}
\begin{aligned}
\mathcal{D}_{12}&[\boldsymbol{\psi}](x_1, \theta_1)=\frac{\mu\delta}{4\pi(\lambda+2\mu)}\int_{-\frac{L}{2}}^{\frac{L}{2}}\int_{0}^{\pi}\frac{\mathbf{A}_2}{\big((x_1-y_1)^2+2\delta^2(1+\cos(\theta_1-\theta_2))\big)^{\frac{3}{2}}}\boldsymbol{\psi}(y_1, \theta_2)d\theta_2dy_1 \\&-\frac{\mu\delta}{4\pi(\lambda+2\mu)}\int_{-\frac{L}{2}}^{\frac{L}{2}}\int_{0}^{\pi}\frac{\delta\big(1+\cos(\theta_2-\theta_1)\big)}{\big((x_1-y_1)^2+2\delta^2(1+\cos(\theta_1-\theta_2))\big)^{\frac{3}{2}}}
\boldsymbol{\psi}(y_1, \theta_2)d\theta_2dy_1
\\&-\frac{3(\lambda+\mu)\delta}{4\pi(\lambda+2\mu)}
\int_{-\frac{L}{2}}^{\frac{L}{2}}\int_{0}^{\pi}\frac{\delta\big(1+\cos(\theta_2-\theta_1)\big)}{\big((x_1-y_1)^2+2\delta^2(1+\cos(\theta_1-\theta_2))\big)^{\frac{5}{2}}}\mathcal{C}_2(\mathbf{x}-\mathbf{y}) \boldsymbol{\psi}(y_1, \theta_2)d\theta_2dy_1,
\end{aligned}
\end{equation}
\end{footnotesize}
\begin{footnotesize}
\begin{equation}\label{1117}
\begin{aligned}
\mathcal{D}_{21}&[\boldsymbol{\psi}](x_1, \theta_1)=
\frac{\mu\delta}{4\pi(\lambda+2\mu)}\int_{-\frac{L}{2}}^{\frac{L}{2}}\int_{0}^{\pi}\frac{\mathbf{A}_3}{\big((x_1-y_1)^2+2\delta^2(1+\cos(\theta_1-\theta_2))\big)^{\frac{3}{2}}} \boldsymbol{\psi}(y_1, \theta_2)d\theta_2dy_1
\\&-\frac{\mu\delta}{4\pi(\lambda+2\mu)}\int_{-\frac{L}{2}}^{\frac{L}{2}}\int_{0}^{\pi}\frac{\delta\big(1+\cos(\theta_2-\theta_1)\big)}{\big((x_1-y_1)^2+2\delta^2(1+\cos(\theta_1-\theta_2))\big)^{\frac{3}{2}}}
\boldsymbol{\psi}(y_1, \theta_2)d\theta_2dy_1
\\&-\frac{3(\lambda+\mu)\delta}{4\pi(\lambda+2\mu)}
\int_{-\frac{L}{2}}^{\frac{L}{2}}\int_{0}^{\pi}\frac{\delta\big(1+\cos(\theta_2-\theta_1)\big)}{\big((x_1-y_1)^2+2\delta^2(1+\cos(\theta_1-\theta_2))\big)^{\frac{5}{2}}}\mathcal{C}_3(\mathbf{x}-\mathbf{y}) \boldsymbol{\psi}(y_1, \theta_2)d\theta_2dy_1,
\end{aligned}
\end{equation}
\end{footnotesize}
and
\begin{footnotesize}
\begin{equation}\label{1118}
\begin{aligned}
\mathcal{D}_{22}&[\boldsymbol{\psi}](x_1, \theta_1)=
-\frac{\mu\delta}{4\pi(\lambda+2\mu)}\int_{-\frac{L}{2}}^{\frac{L}{2}}\int_{0}^{\pi}\frac{\mathbf{A}_0}{\big((x_1-y_1)^2+2\delta^2(1-\cos(\theta_1-\theta_2))\big)^{\frac{3}{2}}} \boldsymbol{\psi}(y_1, \theta_2)d\theta_2dy_1
\\&+\frac{\mu\delta}{4\pi(\lambda+2\mu)}\int_{-\frac{L}{2}}^{\frac{L}{2}}\int_{0}^{\pi}\frac{\delta\big(1-\cos(\theta_2-\theta_1)\big)}{\big((x_1-y_1)^2+2\delta^2(1-\cos(\theta_1-\theta_2))\big)^{\frac{3}{2}}}
\boldsymbol{\psi}(y_1, \theta_2)d\theta_2dy_1
\\&+\frac{3(\lambda+\mu)\delta}{4\pi(\lambda+2\mu)}
\int_{-\frac{L}{2}}^{\frac{L}{2}}\int_{0}^{\pi}\frac{\delta\big(1-\cos(\theta_2-\theta_1)\big)}{\big((x_1-y_1)^2+2\delta^2(1-\cos(\theta_1-\theta_2))\big)^{\frac{5}{2}}}\mathcal{C}_0(\mathbf{x}-\mathbf{y}) \boldsymbol{\psi}(y_1, \theta_2)d\theta_2dy_1,
\end{aligned}
\end{equation}
\end{footnotesize}
where
\begin{equation}\label{A1}
\mathbf{A}_1= \left(
                                                                     \begin{array}{ccc}
                                                                       0 & \cos\theta_1(x_1-y_1)&  \sin\theta_1(x_1-y_1) \\
                                                                       -\cos\theta_1(x_1-y_1)& 0 & -\delta\sin(\theta_2-\theta_1) \\
                                                                       -\sin\theta_1(x_1-y_1)& \delta\sin(\theta_2-\theta_1)& 0
                                                                     \end{array}
                                                                   \right),
\end{equation}
\begin{equation}\label{A2}
\mathbf{A}_2= \left(
                                                                     \begin{array}{ccc}
                                                                       0 & \cos\theta_1(x_1-y_1)&  \sin\theta_1(x_1-y_1) \\
                                                                       -\cos\theta_1(x_1-y_1)& 0 & \delta\sin(\theta_2-\theta_1) \\
                                                                       -\sin\theta_1(x_1-y_1)& -\delta\sin(\theta_2-\theta_1)& 0
                                                                     \end{array}
                                                                   \right),
\end{equation}
\begin{equation}\label{A3}
\mathbf{A}_3= \left(
                                                                     \begin{array}{ccc}
                                                                       0 & -\cos\theta_1(x_1-y_1)&  -\sin\theta_1(x_1-y_1) \\
                                                                       \cos\theta_1(x_1-y_1)& 0 & \delta\sin(\theta_2-\theta_1) \\
                                                                       \sin\theta_1(x_1-y_1)& -\delta\sin(\theta_2-\theta_1)& 0
                                                                     \end{array}
                                                                   \right),
\end{equation}
and $\mathbf{A}_0$ is defined by
\begin{equation}\label{A0}
\mathbf{A}_0= \left(
                                                                     \begin{array}{ccc}
                                                                       0 & -\cos\theta_1(x_1-y_1)&  -\sin\theta_1(x_1-y_1) \\
                                                                       \cos\theta_1(x_1-y_1)& 0 & -\delta\sin(\theta_2-\theta_1) \\
                                                                       \sin\theta_1(x_1-y_1)& \delta\sin(\theta_2-\theta_1)& 0
                                                                     \end{array}
                                                                   \right).
\end{equation}
The matrix operators $\mathcal{C}_1(\mathbf{x}-\mathbf{y})$, $\mathcal{C}_2(\mathbf{x}-\mathbf{y})$ and $\mathcal{C}_3(\mathbf{x}-\mathbf{y})$ are defined by
\begin{footnotesize}
\begin{equation}\label{C1}
\left(\begin{array}{ccc} (x_1-y_1)^2 & -\delta(\cos\theta_1-\cos\theta_2)(x_1-y_1)& -\delta(\sin\theta_1-\sin\theta_2)(x_1-y_1)\\[4pt] -\delta(\cos\theta_1-\cos\theta_2)(x_1-y_1)&\delta^2(\cos\theta_1-\cos\theta_2)^2&\delta^2(\cos\theta_1-\cos\theta_2)(\sin\theta_1-\sin\theta_2)\\[4pt] -\delta(\sin\theta_1-\sin\theta_2)(x_1-y_1)&\delta^2(\cos\theta_1-\cos\theta_2)(\sin\theta_1-\sin\theta_2)&\delta^2(\sin\theta_1-\sin\theta_2)^2\end{array} \right),
\end{equation}
\end{footnotesize}
\begin{footnotesize}
\begin{equation}\label{C2}
\left(\begin{array}{ccc} (x_1-y_1)^2 & -\delta(\cos\theta_1+\cos\theta_2)(x_1-y_1)&-\delta(\sin\theta_1+\sin\theta_2)(x_1-y_1)\\[4pt] -\delta(\cos\theta_1+\cos\theta_2)(x_1-y_1)&\delta^2(\cos\theta_1+\cos\theta_2)^2&\delta^2(\cos\theta_1+\cos\theta_2)(\sin\theta_1+\sin\theta_2)\\[4pt] -\delta(\sin\theta_1+\sin\theta_2)(x_1-y_1)&\delta^2(\cos\theta_1+\cos\theta_2)(\sin\theta_1+\sin\theta_2)&\delta^2(\sin\theta_1+\sin\theta_2)^2\end{array} \right),
\end{equation}
\end{footnotesize}
\begin{footnotesize}
\begin{equation}\label{C3}
\left(\begin{array}{ccc} (x_1-y_1)^2 & \delta(\cos\theta_1+\cos\theta_2)(x_1-y_1)&\delta(\sin\theta_1+\sin\theta_2)(x_1-y_1)\\[4pt] \delta(\cos\theta_1+\cos\theta_2)(x_1-y_1)&\delta^2(\cos\theta_1+\cos\theta_2)^2&\delta^2(\cos\theta_1+\cos\theta_2)(\sin\theta_1+\sin\theta_2)\\[4pt] \delta(\sin\theta_1+\sin\theta_2)(x_1-y_1)&\delta^2(\cos\theta_1+\cos\theta_2)(\sin\theta_1+\sin\theta_2)&\delta^2(\sin\theta_1+\sin\theta_2)^2\end{array} \right),
\end{equation}
\end{footnotesize}
respectively, and $\mathcal{C}_0(\mathbf{x}-\mathbf{y})=(\mathbf{x}-\mathbf{y})(\mathbf{x}-\mathbf{y})^T$.
\end{lem}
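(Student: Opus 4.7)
The plan is to reduce \eqref{eq:main1} to a coupled $2\times 2$ system on the lateral surface by block-decomposing $\mathbf{K}_{D}^*$ according to the two half cylinders $\Gamma_1$ and $\Gamma_2$. Combining Lemma~\ref{le:AD}, Lemma~\ref{le:F} and the factorization $\mathcal{A}_{D,0} = -(c-1)(\lambda_1 I - \mathbf{K}_{D}^*)$ that follows from \eqref{eq:AD0}, the density $\boldsymbol{\psi}$ satisfies
$$
(\lambda_1 I - \mathbf{K}_{D}^*)[\boldsymbol{\psi}](\mathbf{x}) = \frac{\partial \mathbf{H}_0}{\partial \boldsymbol{\nu}}(\mathbf{x}) + \omega\frac{\partial \mathbf{H}_1}{\partial \boldsymbol{\nu}}(\mathbf{x}) + \mathcal{O}(\omega^2),\qquad \mathbf{x}\in \partial D.
$$
Using cylindrical coordinates I parameterize points of $\Gamma_2$ as $(x_1, \delta\cos\theta_1, \delta\sin\theta_1)$ and the corresponding points of $\Gamma_1$ as $(x_1, -\delta\cos\theta_1, -\delta\sin\theta_1)$ with $\theta_1 \in [0,\pi)$; evaluating the above identity at both locations and splitting $\mathbf{K}_{D}^*[\boldsymbol{\psi}] = \int_{\Gamma_1} + \int_{\Gamma_2} + \int_{S^a\cup S^b}$ produces the required $2\times 2$ block system once the end-cap integrals are absorbed into the error terms.

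The four operators $\mathcal{D}_{ij}$ arise from computing $\partial_{\boldsymbol{\nu}_\mathbf{x}} \mathbf{\Gamma}(\mathbf{x}-\mathbf{y})$ for the four combinations $\mathbf{x}\in \Gamma_i$, $\mathbf{y}\in \Gamma_j$. Direct differentiation of the Kelvin matrix and substitution into the conormal formula \eqref{eq:01} produce a sum of dyadic kernels involving $\boldsymbol{\nu}_\mathbf{x}$, $(\mathbf{x}-\mathbf{y})$ and their outer products. In cylindrical coordinates one has the key identities $|\mathbf{x}-\mathbf{y}|^2 = (x_1-y_1)^2 + 2\delta^2(1\mp\cos(\theta_1-\theta_2))$ and $(\mathbf{x}-\mathbf{y})\cdot\boldsymbol{\nu}_\mathbf{x} = \delta(1\mp\cos(\theta_1-\theta_2))$, where the minus (respectively plus) sign occurs when $\mathbf{y}$ lies on the same (respectively opposite) half as $\mathbf{x}$; these identities generate the denominators in \eqref{1115}--\eqref{1118}. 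After rearranging the remaining matrix entries, the antisymmetric structure collects into the matrices $\mathbf{A}_i$ of \eqref{A1}--\eqref{A0} while the rank-one part combines to form $\mathcal{C}_i$ in \eqref{C1}--\eqref{C3}. The scalar prefactors $\mu\delta/(4\pi(\lambda+2\mu))$ and $3(\lambda+\mu)\delta/(4\pi(\lambda+2\mu))$ arise from $\alpha_1, \alpha_2$ together with the $\lambda, \mu$ coefficients of the conormal derivative.

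For the right-hand side I would Taylor expand $\partial_{\boldsymbol{\nu}}\mathbf{H}_k$ at a generic axial point $\mathbf{z}_\mathbf{x} = (x_1, 0, 0)$ about the actual boundary point: the leading contribution is exactly $\mathbf{f}_{\mathbf{H}_k}$ (with opposite sign on $\Gamma_1$ versus $\Gamma_2$ due to the opposite outward normals) while the $\mathcal{O}(\delta)$ correction, after the angular identifications, assembles into $\mathbf{g}_{\mathbf{H}_k}$ (with the same sign on both halves, because this term is even in $\boldsymbol{\nu}$). The $\omega$ and $\omega\delta$ contributions follow from the analogous expansion of $\mathbf{H}_1$. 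Finally, the end-cap integrals are estimated by exploiting $|\mathbf{x}-\mathbf{y}|\gtrsim |x_1\mp L/2|$ for $\mathbf{y}\in S^{a}\cup S^{b}$ combined with the density bound from Lemma~\ref{le:14151111}; this produces an $o(\delta)$ bulk error, sharpened to $\chi(\iota_{\delta^\epsilon}(P)\cup\iota_{\delta^\epsilon}(Q))\mathcal{O}(\delta^{3(1-\epsilon)})$ inside a $\delta^\epsilon$-neighborhood of the endpoints by a scale-dependent refinement. Inverting the resulting $2\times 2$ operator matrix yields the stated representation \eqref{1015}.

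The main obstacle is the tedious kernel asymptotics in the four cases: the near-cancellation $(\mathbf{x}-\mathbf{y})\cdot\boldsymbol{\nu}_\mathbf{x} = \mathcal{O}(\delta)$ reduces what would otherwise be strongly singular integrals to integrable ones, so getting the precise leading order requires carefully identifying the surviving matrix entries at each power of $\delta$, yielding the explicit antisymmetric and dyadic structures $\mathbf{A}_i, \mathcal{C}_i$ rather than a simpler scalar expansion. Equally delicate is the separation of the bulk $o(\delta)$ error from the concentrated endpoint error, where the exponent $3(1-\epsilon)$ reflects the $\delta^3$ volume of each end cap tempered by the $\delta^{-3\epsilon}$ growth of the kernel as one approaches $P$ or $Q$.
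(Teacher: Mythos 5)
Your proposal follows the same route as the paper: use Lemmas~\ref{le:AD} and~\ref{le:F} together with the factorization $\mathcal{A}_{D,0}=-(c-1)(\lambda_1 I-\mathbf{K}_D^*)$ to reduce \eqref{eq:main1} to the integral equation $(\lambda_1 I-\mathbf{K}_D^*)[\boldsymbol{\psi}_i]=\partial\mathbf{H}_i/\partial\boldsymbol{\nu}$, and then extract the $2\times2$ system by splitting $\mathbf{K}_D^*$ over $\Gamma_1$, $\Gamma_2$, and the end caps. The paper itself delegates this last step to an "argument analogous to Lemma 3.2 in \cite{TDF}", whereas you sketch it directly (with the correct identities for $|\mathbf{x}-\mathbf{y}|^2$ and $(\mathbf{x}-\mathbf{y})\cdot\boldsymbol{\nu}_\mathbf{x}$, the sign parity of $\mathbf{f}_{\mathbf{H}_k}$ versus $\mathbf{g}_{\mathbf{H}_k}$, and the end-cap scaling); in substance the two proofs are the same.
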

\begin{proof}
From \eqref{eq:main1}, Lemmas \ref{le:AD} and \ref{le:F}, we have
\begin{equation}\label{eq:1011}
\Big(\mathcal{A}_{D, 0}+\omega^2 \mathcal{A}_{D, 2}+\mathcal{O}\left(\omega^3(\ln \delta)^{-1}\right)\Big)\big[\boldsymbol{\psi}\big]=(1-c)\left(\frac{\partial \mathbf{H}_0}{\partial \boldsymbol{\nu}}+\omega\frac{\partial \mathbf{H}_1}{\partial \boldsymbol{\nu}}\right)+\mathcal{O}(\omega^2).
\end{equation}
Let $\boldsymbol{\psi}=\boldsymbol{\psi}_0+\omega\boldsymbol{\psi}_1+\mathcal{O}(\omega^2)$, then \eqref{eq:1011} can be rewritten as
\begin{equation}\label{eq:1012}
\mathcal{A}_{D, 0}[\boldsymbol{\psi}_0]+\omega\mathcal{A}_{D, 0}[\boldsymbol{\psi}_1]=(1-c)\left(\frac{\partial \mathbf{H}_0}{\partial \boldsymbol{\nu}}+\omega\frac{\partial \mathbf{H}_1}{\partial \boldsymbol{\nu}}\right)+\mathcal{O}(\omega^2),
\end{equation}
comparing the orders of $\omega$ on both sides of equation \eqref{eq:1012}, we can obtain
$$
\mathcal{A}_{D, 0}[\boldsymbol{\psi}_0]=(1-c)\frac{\partial \mathbf{H}_0}{\partial \boldsymbol{\nu}},\quad
\mathcal{A}_{D, 0}[\boldsymbol{\psi}_1]=(1-c)\frac{\partial \mathbf{H}_1}{\partial \boldsymbol{\nu}}.
$$
From the definition of operator $\mathcal{A}_{D, 0}$ in \eqref{eq:AD0}, we can further obtain
\begin{equation}\label{eq:1614}
\left(\lambda_1 I-\mathbf{K}_D^*\right)[\boldsymbol{\psi}_i]=\frac{\partial \mathbf{H}_i}{\partial \boldsymbol{\nu}},\quad i=0, 1,
\end{equation}
where $\lambda_1=\frac{c+1}{2(c-1)}$.
Then, by an argument analogous to the proof of Lemma 3.2 in \cite{TDF}, we obtain the result. This completes the proof.
\end{proof}
\begin{lem}\label{leSa}
Suppose $\boldsymbol{\psi}$ is given in \eqref{eq:03}, then there hold
\begin{equation}\label{1032}
\begin{aligned}
\int_{S^a \cup \iota_\delta(P)}\boldsymbol{\psi}=&-\pi\delta^2(\lambda_1-\frac{1}{2})^{-1}\left[\lambda(\nabla\cdot \mathbf{H}_0(P))\mathbf{e}_1
         +2\mu\nabla^s \mathbf{H}_0(P)\mathbf{e}_1\right]\\&-\omega\pi\delta^2(\lambda_1-\frac{1}{2})^{-1}\left[\lambda(\nabla\cdot \mathbf{H}_1(P))\mathbf{e}_1
         +2\mu\nabla^s \mathbf{H}_1(P)\mathbf{e}_1\right]
         +o(\delta^2)\\&+\omega\cdot o(\delta^2)+\delta^2\cdot\mathcal{O}(\omega^2),
\end{aligned}
\end{equation}
\begin{equation}\label{1033}
\begin{aligned}
\int_{S^b \cup \iota_\delta(Q)}\boldsymbol{\psi}=&\pi\delta^2(\lambda_1-\frac{1}{2})^{-1}\left[\lambda(\nabla\cdot \mathbf{H}_0(Q))\mathbf{e}_1+2\mu\nabla^s \mathbf{H}_0(Q)\mathbf{e}_1\right]\\&+\omega\pi\delta^2(\lambda_1-\frac{1}{2})^{-1}\left[\lambda(\nabla\cdot \mathbf{H}_1(Q))\mathbf{e}_1+2\mu\nabla^s \mathbf{H}_1(Q)\mathbf{e}_1\right]
         +o(\delta^2)\\&+\omega\cdot o(\delta^2)+\delta^2\cdot\mathcal{O}(\omega^2).
\end{aligned}
\end{equation}
\end{lem}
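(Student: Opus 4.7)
The plan is to start from equation \eqref{eq:1614}, namely $(\lambda_1 I - \mathbf{K}_D^*)[\boldsymbol{\psi}_i] = \partial \mathbf{H}_i/\partial\boldsymbol{\nu}$ on $\partial D$ for $i=0,1$ (where $\boldsymbol{\psi}=\boldsymbol{\psi}_0+\omega\boldsymbol{\psi}_1+\mathcal{O}(\omega^2)$), then integrate both sides over the cap region $S^a\cup\iota_\delta(P)$ and solve for $\int\boldsymbol{\psi}_i$. The identity \eqref{1033} will follow by the symmetric argument at $Q$, with a sign flip originating from the orientation of the outward normal on $S^b$.

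First I would evaluate the right-hand side $\int_{S^a\cup\iota_\delta(P)}\partial\mathbf{H}_i/\partial\boldsymbol{\nu}\,ds$. Since $\mathcal{L}_{\lambda,\mu}\mathbf{H}_i=\mathbf{0}$ in all of $\mathbb{R}^3$, in particular inside the half-ball $D^a$, the divergence theorem applied to the stress tensor on $D^a$ (whose boundary is $S^a$ together with the flat disk $\Sigma_P=\{x_1=-L/2,\,x_2^2+x_3^2<\delta^2\}$) gives
\begin{equation*}
\int_{S^a}\frac{\partial\mathbf{H}_i}{\partial\boldsymbol{\nu}}\,ds=-\int_{\Sigma_P}\!\Big[\lambda(\nabla\cdot\mathbf{H}_i)I+2\mu\nabla^s\mathbf{H}_i\Big]\mathbf{e}_1\,ds.
\end{equation*}
Taylor expanding $\mathbf{H}_i$ about $P$ and using $|\Sigma_P|=\pi\delta^2$ yields the leading term $-\pi\delta^2[\lambda(\nabla\cdot\mathbf{H}_i(P))\mathbf{e}_1+2\mu\nabla^s\mathbf{H}_i(P)\mathbf{e}_1]+o(\delta^2)$. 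The additional piece $\int_{\iota_\delta(P)}\partial\mathbf{H}_i/\partial\boldsymbol{\nu}\,ds$ is $o(\delta^2)$: the region has area $\mathcal{O}(\delta^2)$ and the integrand, evaluated to leading order at $\mathbf{H}_i(P)$ and paired with the lateral normal $\boldsymbol{\nu}=(0,\cos\theta_1,\sin\theta_1)^T$, averages to zero by $\theta_1$-symmetry over a full $2\pi$ period.

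The main obstacle is the left-hand side: I need to justify that
\begin{equation*}
\int_{S^a\cup\iota_\delta(P)}\mathbf{K}_D^*[\boldsymbol{\psi}_0]\,ds=\tfrac{1}{2}\int_{S^a\cup\iota_\delta(P)}\boldsymbol{\psi}_0\,ds+o(\delta^2),
\end{equation*}
so that upon subtraction the prefactor $(\lambda_1-1/2)$ appears on the left. The cleanest route is the rigid-motion duality: because $\mathbf{e}_1$ is a constant rigid displacement, $\mathbf{K}_D[\mathbf{e}_1]=\tfrac{1}{2}\mathbf{e}_1$ globally on $\partial D$, hence
\begin{equation*}
\int_{S^a\cup\iota_\delta(P)}\mathbf{K}_D^*[\boldsymbol{\psi}_0]\cdot\mathbf{e}_1\,ds=\int_{\partial D}\boldsymbol{\psi}_0\cdot\mathbf{K}_D[\chi_{S^a\cup\iota_\delta(P)}\mathbf{e}_1]\,ds.
\end{equation*}
I would analyze $\mathbf{K}_D[\chi_{S^a\cup\iota_\delta(P)}\mathbf{e}_1](\mathbf{x})$ in two regimes: for $\mathbf{x}$ inside the cap region, a rescaling $\tilde{\mathbf{x}}=(\mathbf{x}-P)/\delta$ together with the scale invariance of the NP kernel shows the truncation removes only an $o(1)$ correction to $\tfrac{1}{2}\mathbf{e}_1$; for $\mathbf{x}$ far from $P$, the restricted double-layer is $\mathcal{O}(\delta^2/\mathrm{dist}(\mathbf{x},P)^2)$, whose pairing with $\boldsymbol{\psi}_0$ (whose $L^1$ norm on $S^f$ is controlled by Lemma \ref{leD}) contributes at most $o(\delta^2)$. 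Analogous arguments in the $\mathbf{e}_2,\mathbf{e}_3$ components, combined with the observation that $\int_{S^a}\boldsymbol{\nu}\,ds=-\pi\delta^2\mathbf{e}_1$ forces all non-$\mathbf{e}_1$ components on the RHS of \eqref{1032} to vanish at leading order, close the argument.

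Combining the two sides and dividing by $(\lambda_1-\tfrac{1}{2})$ gives the $\mathbf{H}_0$ contribution in \eqref{1032}; repeating verbatim with $\boldsymbol{\psi}_1$ and $\mathbf{H}_1$ produces the $\omega$-linear term, while the $\mathcal{O}(\omega^2)$ remainder in \eqref{eq:1011} is carried through multiplication by $(\lambda_1-\tfrac{1}{2})^{-1}$ and integration over a region of area $\pi\delta^2$, yielding the residuals $o(\delta^2)$, $\omega\cdot o(\delta^2)$ and $\delta^2\cdot\mathcal{O}(\omega^2)$ as stated. Identity \eqref{1033} at the endpoint $Q$ follows by the same scheme applied to $D^b$; the sign change reflects that the outward normal on $S^b$ satisfies $\int_{S^b}\boldsymbol{\nu}\,ds=+\pi\delta^2\mathbf{e}_1$.
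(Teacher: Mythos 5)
The overall skeleton of your argument matches the paper's: both start from the scalar identity $(\lambda_1 I-\mathbf{K}_D^*)[\boldsymbol{\psi}_i]=\partial\mathbf{H}_i/\partial\boldsymbol{\nu}$, both exploit the $1/2$-trace property of $\mathbf{K}_D^*$ on constants, and both evaluate the cap integral of $\partial\mathbf{H}_i/\partial\boldsymbol{\nu}$ via the divergence theorem on $D^a$ together with the angular cancellation $\int_0^{2\pi}\boldsymbol{\nu}_{\mathbf x}\,d\theta_1=\mathbf 0$ on the band $\iota_\delta(P)$. That part of your write-up is correct. The difference is in how the nonlocal operator is handled: the paper introduces an auxiliary density solving the equation with the source $\chi(S^a\cup\iota_\delta(P))\,\partial\mathbf{H}_i(P)/\partial\boldsymbol{\nu}$ frozen at $P$, integrates over the whole boundary using $\int_{\partial D}\mathbf{K}_D^*[\cdot]=\tfrac12\int_{\partial D}\cdot$, and then matches $\int_{S^a\cup\iota_\delta(P)}\boldsymbol{\psi}_i$ to that auxiliary quantity; you instead integrate the untruncated equation directly over the cap and push the localization into the term $\int_{S^a\cup\iota_\delta(P)}\mathbf{K}_D^*[\boldsymbol{\psi}_0]$ via the duality $\langle\boldsymbol{\psi}_0,\mathbf{K}_D[\chi_{\mathrm{cap}}\mathbf{e}_j]\rangle$.

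The genuine gap is in your rescaling estimate for $\mathbf{K}_D[\chi_{\mathrm{cap}}\mathbf{e}_1]$. Write
\begin{equation*}
\mathbf{K}_D[\chi_{\mathrm{cap}}\mathbf{e}_1](\mathbf{x})=\mathbf{K}_D[\mathbf{e}_1](\mathbf{x})-\int_{\partial D\setminus(S^a\cup\iota_\delta(P))}\frac{\partial\boldsymbol{\Gamma}(\mathbf{x}-\mathbf{y})}{\partial\boldsymbol{\nu}_{\mathbf{y}}}\mathbf{e}_1\,ds(\mathbf{y})
=\tfrac12\mathbf{e}_1-\mathbf{R}(\mathbf{x}).
\end{equation*}
You assert $\mathbf{R}(\mathbf{x})=o(1)$ for $\mathbf{x}$ on the cap. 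This is not what the rescaling gives. Taking $\mathbf{x}=P-\delta\mathbf{e}_1$ for concreteness and parameterizing the remote cylinder, the prototypical term of the kernel contributes
\begin{equation*}
\int_0^{2\pi}\!\!\int_{-L/2+c\delta}^{L/2}\frac{\delta}{\big((x_1-y_1)^2+\delta^2\big)^{3/2}}\,\delta\,dy_1\,d\theta_2
=2\pi\int_{1+c}^{L/\delta+1}\frac{ds}{(s^2+1)^{3/2}},
\end{equation*}
which tends to a strictly positive constant as $\delta\to0$. So $\mathbf{R}(\mathbf{x})=\mathcal{O}(1)$, not $o(1)$: the remote part of $\partial D$ is a thin cylinder but its contribution is scale invariant, not vanishing. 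Consequently the near-cap pairing $\int_{\mathrm{cap}}\boldsymbol{\psi}_0\cdot(\mathbf{K}_D[\chi_{\mathrm{cap}}\mathbf{e}_1]-\tfrac12\mathbf{e}_1)$ is a priori $\mathcal{O}(\delta^2)$, the same order as the quantity you are trying to compute, not a lower-order error. Likewise, your far pairing estimate $\int_{\mathrm{far}}\boldsymbol{\psi}_0\cdot\mathcal{O}(\delta^2/\mathrm{dist}^2)$ produces $\mathcal{O}(\delta^2)$ rather than $o(\delta^2)$ once you account for the surface measure $\delta\,dy_1\,d\theta$ and the integral $\int_{c\delta}^L dr/r^2\sim\delta^{-1}$. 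Closing these two estimates requires exploiting the detailed angular and matrix structure of $\boldsymbol{\psi}_0$ on the cap (or a different decomposition, such as the paper's truncated-source splitting), not just the kernel scaling you invoke.
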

\begin{proof}
For any $\mathbf{U}_i \in \mathcal{H}^*, i=0, 1$, we consider the following boundary integral equation
$$
\left(\lambda_1 I-\mathbf{K}_D^*\right)[\boldsymbol{\psi}_i]=\mathbf{U}_i.
$$
Taking boundary integral on both sides of the above equation, it holds
\begin{equation}
\int_{\partial D}\left(\lambda_1 I-\mathbf{K}_D^*\right)[\boldsymbol{\psi}_i](\mathbf{x})ds(\mathbf{x})=(\lambda_1-\frac{1}{2})\int_{\partial D}\boldsymbol{\psi}_i(\mathbf{y})ds(\mathbf{y})=\int_{\partial D}\mathbf{U}_i.
\end{equation}
By assuming $\mathbf{U}_i=\chi\left(S^a\cup\iota_\delta(P)\right)[\frac{\partial\mathbf{H}_i(P)}{\partial\boldsymbol{\nu}}]$ and using the fact that $\mathcal{L}_{\lambda, \mu}\mathbf{H}_i=\mathbf{0}$ in $\mathbb{R}^3$, one has
\begin{equation}
\begin{aligned}
(\lambda_1-\frac{1}{2})&\int_{S^a\cup\iota_\delta(P)}\boldsymbol{\psi}_i=\int_{S^a\cup\iota_\delta(P)}\frac{\partial\mathbf{H}_i(P)}{\partial\boldsymbol{\nu}}
\\&=-\pi\delta^2
\left[\lambda(\nabla\cdot \mathbf{H}_i(P))\mathbf{e}_1
         +\mu(\nabla \mathbf{H}_i(P)+\nabla \mathbf{H}_i(P)^T)\mathbf{e}_1\right]+o(\delta^2),
\end{aligned}
\end{equation}
noting that $\boldsymbol{\psi}=\boldsymbol{\psi}_0+\omega\boldsymbol{\psi}_1+\mathcal{O}(\omega^2)$, so we can derive \eqref{1032}. \eqref{1033} can be derived similarly.
\end{proof}
\subsection*{Proof of Theorem \ref{3main}}
\begin{proof}
Denote by $\Gamma_{ij}$ the entries of the matrix $\mathbf\Gamma$, $i, j=1, 2, 3$. By using Taylor's expansion, one has
\begin{equation*}
\begin{aligned}
\mathbf{u}(\mathbf{x})=&\mathbf{H}(\mathbf{x}) +\int_{S^f \backslash\left(\iota_\delta(P) \cup \iota_\delta(Q)\right)} \mathbf\Gamma\left(\mathbf{x}-\mathbf{z}_{\mathbf{y}}\right) \boldsymbol{\psi}(\mathbf{y}) d s(\mathbf{y})
\\&-\delta \int_{S^f \backslash\left(\iota_\delta(P) \cup\iota_\delta(Q)\right)} \boldsymbol{\nu}_{\mathbf{y}}\cdot\nabla_{\mathbf{x}}\mathbf\Gamma\left(\mathbf{x}-\mathbf{z}_{\mathbf{y}}\right)\boldsymbol{\psi}(\mathbf{y}) d s(\mathbf{y})\\&+\int_{S^a \cup \iota_\delta(P)} \mathbf\Gamma\left(\mathbf{x}-\mathbf{z}_{\mathbf{y}}\right) \boldsymbol{\psi}(\mathbf{y}) d s(\mathbf{y})+\int_{S^b \cup\iota_\delta(Q)} \mathbf\Gamma\left(\mathbf{x}-\mathbf{z}_{\mathbf{y}}\right) \boldsymbol{\psi}(\mathbf{y}) d s(\mathbf{y})
\end{aligned}
\end{equation*}
\begin{equation}\label{1623}
\begin{aligned}
\\&+\omega\alpha_3\int_{S^f \backslash\left(\iota_\delta(P) \cup \iota_\delta(Q)\right)}\boldsymbol{\psi}(\mathbf{y}) d s(\mathbf{y})+\omega\alpha_3\int_{S^a \cup \iota_\delta(P)}\boldsymbol{\psi}(\mathbf{y}) d s(\mathbf{y})+\omega\alpha_3\int_{S^b \cup \iota_\delta(Q)}\boldsymbol{\psi}(\mathbf{y}) d s(\mathbf{y})\\&+\delta^2\cdot\mathcal{O}(\omega^2)+o\left(\delta^2\right).
\end{aligned}
\end{equation}
By Lemma \ref{leD} and proofs of Theorem 2.1 in \cite{TDF}, we can directly calculate and obtain
\begin{equation}\label{1619}
\begin{aligned}
&\int_{S^f \backslash\left(\iota_\delta(P) \cup \iota_\delta(Q)\right)}\mathbf\Gamma\left(\mathbf{x}-\mathbf{z}_{\mathbf{y}}\right) \boldsymbol{\psi}(\mathbf{y}) d s(\mathbf{y})\\=&\int_{\Gamma_1 \backslash\left(\iota_\delta(P) \cup \iota_\delta(Q)\right)}\mathbf\Gamma\left(\mathbf{x}-\mathbf{z}_{\mathbf{y}}\right) \boldsymbol{\psi}(\mathbf{y}) d s(\mathbf{y})+\int_{\Gamma_2 \backslash\left(\iota_\delta(P) \cup \iota_\delta(Q)\right)}\mathbf\Gamma\left(\mathbf{x}-\mathbf{z}_{\mathbf{y}}\right) \boldsymbol{\psi}(\mathbf{y}) d s(\mathbf{y})
\\
=&2\delta^2\int_{-L/2}^{L/2}\int_{0}^{\pi}\mathbf\Gamma\left(\mathbf{x}-\mathbf{z}_{\mathbf{y}}\right)\mathbf{G}^{\prime}_{\mathbf{H}_0}(\mathbf{y})d\theta_2 dy_1+o(\delta^2)\\
&+2\omega\delta^2\int_{-L/2}^{L/2}\int_{0}^{\pi}\mathbf\Gamma\left(\mathbf{x}-\mathbf{z}_{\mathbf{y}}\right)\mathbf{G}^{\prime}_{\mathbf{H}_1}(\mathbf{y})d\theta_2 dy_1
+\omega\cdot o(\delta^2),
\end{aligned}
\end{equation}
where $\mathbf{G}^{\prime}_{\mathbf{H}_i}$, $i=0, 1$ can be represented by
\begin{equation}\label{3G}
\begin{aligned}
\mathbf{G}^{\prime}_{\mathbf{H}_i}=&2(\lambda+2\mu)l_1\Big(\lambda\big(\nabla\cdot \partial_{y_2}\mathbf{H}_i(\mathbf{z}_{\mathbf{y}})\big)+2\mu^2 l_2(\partial_{y_2}\partial_{y_2}H_{i2}+\partial_{y_2}\partial_{y_3}H_{i3})\Big) \cos\theta_2\boldsymbol{\nu}_{\mathbf{y}}\\&+2(\lambda+2\mu)l_1\Big(\lambda\big(\nabla\cdot \partial_{y_3}\mathbf{H}_i(\mathbf{z}_{\mathbf{y}})\big)+2\mu^2 l_2(\partial_{y_3}\partial_{y_2}H_{i2}+\partial_{y_3}\partial_{y_3}H_{i3})\Big) \sin\theta_2\boldsymbol{\nu}_{\mathbf{y}}\\[5pt]
&+
\frac{\lambda+\mu}{2\lambda_1-1}l_1 \left(\begin{array}{c}0 \\ \lambda\big(\nabla\cdot \partial_{y_2}\mathbf{H}_i(\mathbf{z}_{\mathbf{y}})\big)+2\mu^2l_2(\partial_{y_2}\partial_{y_2}H_{i2}+\partial_{y_2}\partial_{y_3}H_{i3})\\[8pt] \lambda\big(\nabla\cdot \partial_{y_3}\mathbf{H}_i(\mathbf{z}_{\mathbf{y}})\big)+2\mu^2l_2(\partial_{y_3}\partial_{y_2}H_{i2}+\partial_{y_3}\partial_{y_3}H_{i3})\end{array} \right)
\\[5pt]
&+\frac{\mu}{\lambda_1} (\mathbf{g}_{\mathbf{H}_i, 1}+\mathbf{g}_{\mathbf{H}_i, 2})-\frac{\mu^2}{\lambda_1}l_2 (\mathbf{g}_{\mathbf{H}_i, 1}^0+\mathbf{g}_{\mathbf{H}_i, 2}^0)
\\&+\frac{\mu}{2\lambda_1(2\lambda_1-1)} \left(\begin{array}{c} \partial_{y_2}\partial_{y_2}H_{i1}+\partial_{y_2}\partial_{y_1}H_{i2}+\partial_{y_3}\partial_{y_3}H_{i1}+\partial_{y_1}\partial_{y_3}H_{i3}\\ 2\partial_{y_2}\partial_{y_2}H_{i2}+\partial_{y_3}\partial_{y_2}H_{i3}+\partial_{y_3}\partial_{y_3}H_{i2} \\ \partial_{y_2}\partial_{y_2}H_{i3}+\partial_{y_2}\partial_{y_3}H_{i2}+2\partial_{y_3}\partial_{y_3}H_{i3}\end{array} \right)
\\[5pt]&-\frac{\mu^2}{2\lambda_1-1}l_2\left(\begin{array}{c}0 \\2\partial_{y_2}\partial_{y_3}H_{i3}-(\partial_{y_3}\partial_{y_2}H_{i3}+\partial_{y_3}\partial_{y_3}H_{i2}) \\ -(\partial_{y_2}\partial_{y_2}H_{i3}+\partial_{y_3}\partial_{y_2}H_{i2})+2\partial_{y_3}\partial_{y_2}H_{i2}\end{array} \right)
\\[5pt]&-\frac{\mu^2}{2\lambda_1(2\lambda_1-1)}l_2 \left(\begin{array}{c} 0\\ 2\partial_{y_2}\partial_{y_2}H_{i2}+ \partial_{y_3}\partial_{y_2}H_{i3}+\partial_{y_3}\partial_{y_3}H_{i2}\\ \partial_{y_2}\partial_{y_2}H_{i3}+\partial_{y_3}\partial_{y_2}H_{i2}+2\partial_{y_3}\partial_{y_3}H_{i3}\end{array} \right),
\end{aligned}
\end{equation}
similarly, by direct calculations (see also proofs of Theorem 2.1 in \cite{TDF}) we have
\begin{equation}\label{1620}
\begin{aligned}
&\delta\int_{S^f \backslash\left(\iota_\delta(P) \cup\iota_\delta(Q)\right)}\boldsymbol{\nu}_{\mathbf{y}}\cdot\nabla_{\mathbf{x}}\mathbf\Gamma\left(\mathbf{x}-\mathbf{z}_{\mathbf{y}}\right)\boldsymbol{\psi}(\mathbf{y}) d s(\mathbf{y})\\
=&2\delta^2\int_{-L/2}^{L/2}\int_{0}^{\pi}\boldsymbol{\nu}_{\mathbf{y}}\cdot\nabla_{\mathbf{x}}\mathbf\Gamma\left(\mathbf{x}-\mathbf{z}_{\mathbf{y}}\right)\mathbf{R}^{\prime}_{\mathbf{H}_0}(\mathbf{y})d\theta_2dy_1+o(\delta^2)\\
&+2\omega\delta^2\int_{-L/2}^{L/2}\int_{0}^{\pi}\boldsymbol{\nu}_{\mathbf{y}}\cdot\nabla_{\mathbf{x}}\mathbf\Gamma\left(\mathbf{x}-\mathbf{z}_{\mathbf{y}}\right)\mathbf{R}^{\prime}_{\mathbf{H}_1}(\mathbf{y})d\theta_2dy_1+\omega\cdot o(\delta^2),
\end{aligned}
\end{equation}
where
\begin{equation*}
\begin{aligned}
\mathbf{R}^{\prime}_{\mathbf{H}_i}=&\Big(\big(l_{01}\nabla\cdot \mathbf{H}_i(\mathbf{z}_{\mathbf{y}})+l_{02}\nabla\cdot(B\mathbf{H}_i(\mathbf{z}_{\mathbf{y}}))\big)I_3 \\
&+\frac{\mu}{\lambda_1}\big(I_3-\mu l_2 B\big)\big(\nabla \mathbf{H}_i(\mathbf{z}_{\mathbf{y}})+\nabla \mathbf{H}_i(\mathbf{z}_{\mathbf{y}})^T\big)\Big)\boldsymbol{\nu}_{\mathbf{y}}.
\end{aligned}
\end{equation*}
It follows from Lemma \ref{leSa} that
\begin{equation}\label{1621}
\begin{aligned}
&\int_{S^a \cup \iota_\delta(P)} \mathbf\Gamma\left(\mathbf{x}-\mathbf{z}_{\mathbf{y}}\right) \boldsymbol{\psi}(\mathbf{y}) d s(\mathbf{y})=\int_{S^a \cup \iota_\delta(P)} \mathbf\Gamma\left(\mathbf{x}-P\right) \boldsymbol{\psi}(\mathbf{y}) d s(\mathbf{y})+o(\delta^2)\\=&-\pi\delta^2(\lambda_1-\frac{1}{2})^{-1}\mathbf\Gamma\left(\mathbf{x}-P\right)\left[\lambda(\nabla\cdot \mathbf{H}_0(P))\mathbf{e}_1
         +2\mu\nabla^s \mathbf{H}_0(P)\mathbf{e}_1\right]\\&-\omega\pi\delta^2(\lambda_1-\frac{1}{2})^{-1}\mathbf\Gamma\left(\mathbf{x}-P\right)\left[\lambda(\nabla\cdot \mathbf{H}_1(P))\mathbf{e}_1
         +2\mu\nabla^s \mathbf{H}_1(P)\mathbf{e}_1\right]+o(\delta^2)+\delta^2\cdot\mathcal{O}(\omega^2),
\end{aligned}
\end{equation}
similarly,
\begin{equation}\label{1622}
\begin{aligned}
&\int_{S^b \cup\iota_\delta(Q)} \mathbf\Gamma\left(\mathbf{x}-\mathbf{z}_{\mathbf{y}}\right) \boldsymbol{\psi}(\mathbf{y}) d s({\mathbf{y}})=\int_{S^b \cup\iota_\delta(Q)} \mathbf\Gamma\left(\mathbf{x}-Q\right) \boldsymbol{\psi}(\mathbf{y}) d s(\mathbf{y})+o(\delta^2)\\
=&\pi\delta^2(\lambda_1-\frac{1}{2})^{-1}\mathbf\Gamma\left(\mathbf{x}-Q\right)\left[\lambda(\nabla\cdot \mathbf{H}_0(Q))\mathbf{e}_1+2\mu\nabla^s \mathbf{H}_0(Q)\mathbf{e}_1\right]\\&+\omega\pi\delta^2(\lambda_1-\frac{1}{2})^{-1}\mathbf\Gamma\left(\mathbf{x}-Q\right)\left[\lambda(\nabla\cdot \mathbf{H}_1(Q))\mathbf{e}_1+2\mu\nabla^s \mathbf{H}_1(Q)\mathbf{e}_1\right]+o(\delta^2)+\delta^2\cdot\mathcal{O}(\omega^2).
\end{aligned}
\end{equation}
Substituting \eqref{1619}-\eqref{1622} into \eqref{1623}, we can obtain \eqref{u}. By using some elementary vectorial calculus we obtain the conclusion. This completes the proof.
\end{proof}


\section{Mathematical analysis of polariton resonance}

\begin{proof}[Proof of Theorem \ref{Thm11}]
(1) Since $c=-\frac{\lambda+\mu}{\lambda+3\mu}+i\varrho$, recalling that $\lambda_1=\frac{c+1}{2(c-1)}$, by direct calculations, we can obtain
$$
\lambda_1=-\frac{\mu}{2(\lambda+2\mu)}-\frac{i\varrho(\lambda+3\mu)^2}{4(\lambda+2\mu)^2}+\mathcal{O}(\varrho^2),
$$
from \eqref{l1}, we have
\begin{equation}\label{eq211a}
\begin{aligned}
l_2=&\frac{1}{2\lambda_1(\lambda+2\mu)+\mu}=\frac{1}{2\big(-\frac{\mu}{2(\lambda+2\mu)}-\frac{i\varrho(\lambda+3\mu)^2}{4(\lambda+2\mu)^2}
+\mathcal{O}(\varrho^2)\big)\big(\lambda+2\mu\big)+\mu}\\=&\frac{1}{-\frac{i\varrho(\lambda+3\mu)^2}{2(\lambda+2\mu)}+\mathcal{O}(\varrho^2)}
=a_1\frac{1}{\varrho}+\mathcal{O}(1),
\end{aligned}
\end{equation}
where $a_1=\frac{2i(\lambda+2\mu)}{(\lambda+3\mu)^2}$. From Theorem \ref{3main}, we can decompose $\mathbf{u}^s(\mathbf{x})$ into singular and regular parts
\begin{equation}\label{eq212a}
\mathbf{u}^s(\mathbf{x})=l_2\cdot\mathbf{B}^1(\mathbf{x})+\mathbf{J}^1(\mathbf{x}),
\end{equation}
where
\begin{equation}\label{B1111}
\begin{aligned}
\mathbf{B}^1(\mathbf{x})
=&\delta^2\mathcal{A}_1[\mathbf{G}^{11}_{\mathbf{H}_0}+\omega\mathbf{G}^{11}_{\mathbf{H}_1}](\mathbf{x})
-\delta^2\mathcal{A}_2[\mathbf{R}^{11}_{\mathbf{H}_0}+\omega\mathbf{R}^{11}_{\mathbf{H}_1}](\mathbf{x})
\\&+\alpha_3\omega\delta^2\int_{-L/2}^{L/2}\mathbf{G}^{21}_{\mathbf{H}_0}(\mathbf{y})dy,
\end{aligned}
\end{equation}
and
\begin{equation}
\begin{aligned}
&\mathbf{J}^1(\mathbf{x})\\
=&\delta^2\mathcal{A}_1[\mathbf{G}^{12}_{\mathbf{H}_0}+\omega\mathbf{G}^{12}_{\mathbf{H}_1}](\mathbf{x})
-\delta^2\mathcal{A}_2[\mathbf{R}^{12}_{\mathbf{H}_0}+\omega\mathbf{R}^{12}_{\mathbf{H}_1}](\mathbf{x})\\
&-\pi\delta^2(\lambda_1-\frac{1}{2})^{-1}\mathbf\Gamma\left(\mathbf{x}-P\right)\left[\lambda\Big(\nabla\cdot \big(\mathbf{H}_0(P)+\omega\mathbf{H}_1(P)\big)\Big)\mathbf{e}_1+2\mu\nabla^s\big(\mathbf{H}_0(P)+\omega\mathbf{H}_1(P)\big)\mathbf{e}_1\right]
\\&+\pi\delta^2(\lambda_1-\frac{1}{2})^{-1}\mathbf\Gamma\left(\mathbf{x}-Q\right)\left[\lambda\Big(\nabla\cdot \big(\mathbf{H}_0(Q)+\omega\mathbf{H}_1(Q)\big)\Big)\mathbf{e}_1+2\mu\nabla^s\big(\mathbf{H}_0(Q)+\omega\mathbf{H}_1(Q)\big)\mathbf{e}_1\right]\\
&+\alpha_3\omega\delta^2\int_{-L/2}^{L/2}\mathbf{G}^{12}_{\mathbf{H}_0}(\mathbf{y}) dy-\alpha_3\pi\omega\delta^2(\lambda_1-\frac{1}{2})^{-1}\left[\lambda(\nabla\cdot \mathbf{H}_0(P))\mathbf{e}_1+2\mu\nabla^s \mathbf{H}_0(P)\mathbf{e}_1\right]\\
&+\alpha_3\pi\omega\delta^2(\lambda_1-\frac{1}{2})^{-1}\left[\lambda(\nabla\cdot \mathbf{H}_0(Q))\mathbf{e}_1+2\mu\nabla^s \mathbf{H}_0(Q)\mathbf{e}_1\right]\\
&+o(\delta^2)+\omega\cdot o(\delta^2)+\delta\cdot\mathcal{O}(\omega^2),
   \end{aligned}
\end{equation}
with $\mathbf{G}^{11}_{\mathbf{H}_i}, \mathbf{G}^{12}_{\mathbf{H}_i} $, $\mathbf{R}^{11}_{\mathbf{H}_i}$ and $\mathbf{R}^{12}_{\mathbf{H}_i}$, $i=0, 1$ are given by
\begin{equation}\label{eqG11}
\begin{aligned}
\mathbf{G}^{11}_{\mathbf{H}_i}=& 4\pi\mu^2l_1\Big((\lambda+2\mu)+ \frac{\lambda+\mu}{2\lambda_1-1}\Big)B\nabla\nabla\cdot \big(B \mathbf{H}_i(\mathbf{z}_{\mathbf{y}})\big)\\
& -\frac{\mu^2\pi}{\lambda_1}B\Big(\big(B\nabla\big)^T\big(\nabla\mathbf{H}_i(\mathbf{z}_{\mathbf{y}})+\nabla\mathbf{H}_i(\mathbf{z}_{\mathbf{y}})^T\big)\Big)^T\\
& -\frac{\mu^2\pi}{\lambda_1(2\lambda_1-1)}B\Big(\Delta \mathbf{H}_i(\mathbf{z}_{\mathbf{y}})-\partial_{y_1}^2 \mathbf{H}_i(\mathbf{z}_{\mathbf{y}})+ \nabla\nabla\cdot \big(B \mathbf{H}_i(\mathbf{z}_{\mathbf{y}})\big)\Big)\\
& -\frac{2\mu^2\pi}{2\lambda_1-1}C\nabla\Big((\nabla\times \mathbf{H}_i(\mathbf{z}_{\mathbf{y}}))\cdot \mathbf{e}_1\Big),
\end{aligned}
\end{equation}
\begin{equation}
\begin{aligned}
\mathbf{G}^{12}_{\mathbf{H}_i}=& 2\pi\lambda l_1\Big((\lambda+2\mu)+\frac{\lambda+\mu}{2\lambda_1-1}\Big) B\nabla\nabla\cdot\mathbf{H}_i(\mathbf{z}_{\mathbf{y}})\\
& + \frac{\mu\pi}{\lambda_1}\Big(\big(B\nabla\big)^T\big(\nabla\mathbf{H}_i(\mathbf{z}_{\mathbf{y}})+\nabla\mathbf{H}_i(\mathbf{z}_{\mathbf{y}})^T\big)\Big)^T\\
& +\frac{\mu\pi}{\lambda_1(2\lambda_1-1)}\Big(\Delta \mathbf{H}_i(\mathbf{z}_{\mathbf{y}})-\partial_{y}^2 \mathbf{H}_i(\mathbf{z}_{\mathbf{y}})+ \nabla\nabla\cdot \big(B \mathbf{H}_i(\mathbf{z}_{\mathbf{y}})\big)\Big),
\end{aligned}
\end{equation}
\begin{equation}\label{eqR11}
\begin{aligned}
\mathbf{R}^{11}_{\mathbf{H}_i}=&-\frac{2\pi\mu(\lambda-\mu)(\lambda+2\mu)}{2(\lambda+2\mu)\lambda_1+\lambda}
\nabla\cdot\big(B\mathbf{H}_i(\mathbf{z}_{\mathbf{y}})\big)I_3-\frac{\mu^2\pi}{\lambda_1} B\big(\nabla \mathbf{H}_i(\mathbf{z}_{\mathbf{y}})+\nabla \mathbf{H}_i(\mathbf{z}_{\mathbf{y}})^T\big),
\end{aligned}
\end{equation}
and
\begin{equation}
\begin{aligned}
\mathbf{R}^{12}_{\mathbf{H}_i}=&\pi l_{01}\nabla\cdot \mathbf{H}_i(\mathbf{z}_{\mathbf{y}})I_3 +\frac{\mu\pi}{\lambda_1}(\nabla \mathbf{H}_i(\mathbf{z}_{\mathbf{y}})+\nabla \mathbf{H}_i(\mathbf{z}_{\mathbf{y}})^T).
\end{aligned}
\end{equation}
Since $\mathbf{B}^1(\mathbf{x})$ and $\mathbf{J}^1(\mathbf{x})$ are bounded, by \eqref{eq211a}-\eqref{eq212a}, it holds
$$
\mathbf{u}^s(\mathbf{x})\sim \frac{a_1}{\varrho}\cdot\mathbf{B}^1(\mathbf{x}).
$$
From \eqref{Eu}, one has
$$
\begin{aligned}
E(\mathbf{u}^s)=&\int_{\mathbb{R}^{3} \backslash \overline{D}}\nabla^s \overline{\mathbf{u}^s(\mathbf{x})}: \mathcal{C} \nabla^s \mathbf{u}^s(\mathbf{x}) d \mathbf{x} \sim  \frac{|a_1|^2}{\varrho^2}\int_{\mathbb{R}^{3} \backslash \overline{D}}\nabla^s \overline{\mathbf{B}^1(\mathbf{x})}: \mathcal{C} \nabla^s \mathbf{B}^1(\mathbf{x}) d \mathbf{x}.
\end{aligned}
$$
By \eqref{B1111} and Lemma \ref{le12345} (1), we can obtain the result. \\
The proof of (2) is similar to that of (1). \\
(3)Define $\zeta=c_0+1+i\varrho$. Since $c=c_0+i\varrho$, where $c_0\rightarrow -1$, then $\zeta=c+1\rightarrow 0$. By direct calculations, one has
$$
\lambda_1=\frac{\zeta}{2\zeta-4}=-\frac{\zeta}{4}\big(1+\mathcal{O}(\zeta)\big),
$$
then
\begin{equation}\label{eq411}
\frac{1}{\lambda_1}=-\frac{4}{\zeta}+\mathcal{O}(1).
\end{equation}
From Theorem \ref{3main}, we can decompose $\mathbf{u}^s(\mathbf{x})$ into singular and regular parts
\begin{equation}\label{eq412}
\mathbf{u}^s(\mathbf{x})=\frac{1}{\lambda_1}\cdot\mathbf{B}^{3}(\mathbf{x})+\mathbf{J}^{3}(\mathbf{x}),
\end{equation}
where
\begin{equation}\label{B4}
\begin{aligned}
\mathbf{B}^3(\mathbf{x})
=&\delta^2\mathcal{A}_1[\mathbf{G}^{31}_{\mathbf{H}_0}+\omega\mathbf{G}^{31}_{\mathbf{H}_1}](\mathbf{x})
-\delta^2\mathcal{A}_2[\mathbf{R}^{31}_{\mathbf{H}_0}+\omega\mathbf{R}^{31}_{\mathbf{H}_1}](\mathbf{x})\\&+\alpha_3\omega\delta^2\int_{-L/2}^{L/2}\mathbf{G}^{31}_{\mathbf{H}_0}(y) dy,
\end{aligned}
\end{equation}
and
\begin{equation}
\begin{aligned}
&\mathbf{J}^3(\mathbf{x})
=\delta^2\mathcal{A}_1[\mathbf{G}^{32}_{\mathbf{H}_0}+\omega\mathbf{G}^{32}_{\mathbf{H}_1}](\mathbf{x})
-\delta^2\mathcal{A}_2[\mathbf{R}^{32}_{\mathbf{H}_0}+\omega\mathbf{R}^{32}_{\mathbf{H}_1}](\mathbf{x})\\
&-\pi\delta^2(\lambda_1-\frac{1}{2})^{-1}\mathbf\Gamma\left(\mathbf{x}-P\right)\left[\lambda\Big(\nabla\cdot \big(\mathbf{H}_0(P)+\omega\mathbf{H}_1(P)\big)\Big)\mathbf{e}_1+2\mu\nabla^s\big(\mathbf{H}_0(P)+\omega\mathbf{H}_1(P)\big)\mathbf{e}_1\right]
\\&+\pi\delta^2(\lambda_1-\frac{1}{2})^{-1}\mathbf\Gamma\left(\mathbf{x}-Q\right)\left[\lambda\Big(\nabla\cdot \big(\mathbf{H}_0(Q)+\omega\mathbf{H}_1(Q)\big)\Big)\mathbf{e}_1+2\mu\nabla^s\big(\mathbf{H}_0(Q)+\omega\mathbf{H}_1(Q)\big)\mathbf{e}_1\right]\\
&+\alpha_3\omega\delta^2\int_{-L/2}^{L/2}\mathbf{G}^{32}_{\mathbf{H}_0}(y) dy-\alpha_3\pi\omega\delta^2(\lambda_1-\frac{1}{2})^{-1}\left[\lambda(\nabla\cdot \mathbf{H}_0(P))\mathbf{e}_1+2\mu\nabla^s \mathbf{H}_0(P)\mathbf{e}_1\right]\\
&+\alpha_3\pi\omega\delta^2(\lambda_1-\frac{1}{2})^{-1}\left[\lambda(\nabla\cdot \mathbf{H}_0(Q))\mathbf{e}_1+2\mu\nabla^s \mathbf{H}_0(Q)\mathbf{e}_1\right]\\
&+o(\delta^2)+\omega\cdot o(\delta^2)+\delta\cdot\mathcal{O}(\omega^2),
   \end{aligned}
\end{equation}
with $\mathbf{G}^{31}_{\mathbf{H}_i}, \mathbf{G}^{32}_{\mathbf{H}_i} $, $\mathbf{R}^{31}_{\mathbf{H}_i}$ and $\mathbf{R}^{32}_{\mathbf{H}_i}$, $i=0, 1$ are given by
\begin{equation}\label{GH4}
\begin{aligned}
\mathbf{G}^{31}_{\mathbf{H}_i}=&\mu\pi\Big(I_3-\mu l_2B\Big)\Big(\big(B\nabla\big)^T\big(\nabla\mathbf{H}_i(\mathbf{z}_{\mathbf{y}})+\nabla\mathbf{H}_i(\mathbf{z}_{\mathbf{y}})^T\big)\Big)^T\\
& +\frac{\mu\pi}{(2\lambda_1-1)}\Big(I_3-\mu l_2B\Big)\Big(\Delta \mathbf{H}_i(\mathbf{z}_{\mathbf{y}})-\partial_{y}^2 \mathbf{H}_i(\mathbf{z}_{\mathbf{y}})+ \nabla\nabla\cdot \big(B \mathbf{H}_i(\mathbf{z}_{\mathbf{y}})\big)\Big),
\end{aligned}
\end{equation}
\begin{equation}
\begin{aligned}
\mathbf{G}^{32}_{\mathbf{H}_i}=&2\pi\Big((\lambda+2\mu)+ \frac{\lambda+\mu}{2\lambda_1-1}\Big)l_1 B\Big(\lambda\nabla\nabla\cdot\mathbf{H}_i(\mathbf{z}_{\mathbf{y}})+2\mu^2l_2\nabla\nabla\cdot \big(B \mathbf{H}_i(\mathbf{z}_{\mathbf{y}})\big)\Big)\\
& -\frac{2\pi\mu^2}{2\lambda_1-1}l_2C\nabla\Big((\nabla\times \mathbf{H}_i(\mathbf{z}_{\mathbf{y}}))\cdot \mathbf{e}_1\Big),
\end{aligned}
\end{equation}
\begin{equation}\label{RH4}
\begin{aligned}
\mathbf{R}^{31}_{\mathbf{H}_i}=\mu\pi\big(I_3-\mu l_2 B\big)\big(\nabla \mathbf{H}_i(\mathbf{z}_{\mathbf{y}})+\nabla \mathbf{H}_i(\mathbf{z}_{\mathbf{y}})^T\big),
\end{aligned}
\end{equation}
and
\begin{equation}
\begin{aligned}
\mathbf{R}^{32}_{\mathbf{H}_i}=&\pi\big(l_{01}\nabla\cdot \mathbf{H}_i(\mathbf{z}_{\mathbf{y}})+l_{02}\nabla\cdot(B\mathbf{H}_i(\mathbf{z}_{\mathbf{y}}))\big)I_3.
\end{aligned}
\end{equation}
Since $\mathbf{B}^{3}(\mathbf{x})$ and $\mathbf{J}^{3}(\mathbf{x})$ are bounded, by \eqref{eq411} and \eqref{eq412}, it holds
$$
\mathbf{u}^s(\mathbf{x})\sim -\frac{4}{\zeta}\cdot\mathbf{B}^{3}(\mathbf{x}).
$$
From \eqref{Eu}, one has
$$
\begin{aligned}
E(\mathbf{u}^s)=&\int_{\mathbb{R}^{3} \backslash \overline{D}}\nabla^s \overline{\mathbf{u}^s(\mathbf{x})}: \mathcal{C} \nabla^s \mathbf{u}^s(\mathbf{x}) d \mathbf{x}\\ \sim & \frac{16}{|\zeta|^2}\int_{\mathbb{R}^{3} \backslash \overline{D}}\nabla^s \overline{\mathbf{B}^{3}(\mathbf{x})}: \mathcal{C} \nabla^s \mathbf{B}^{3}(\mathbf{x}) d \mathbf{x}.
\end{aligned}
$$
From \eqref{B4} and Lemma \ref{le12345} (3), we can obtain the result.\\
(4)$c=c_0+i\varrho$, where $c_0\rightarrow +\infty$, by direct calculation, it holds
\begin{equation}\label{eq511}
(\lambda_1-\frac{1}{2})^{-1}=c-1.
\end{equation}
From Theorem \ref{3main}, we can decompose $\mathbf{u}^s(\mathbf{x})$ into singular and regular parts
\begin{equation}\label{eq512}
\mathbf{u}^s(\mathbf{x})=(\lambda_1-\frac{1}{2})^{-1}\cdot\mathbf{B}^{4}(\mathbf{x})+\mathbf{J}^{4}(\mathbf{x}),
\end{equation}
where
\begin{equation}\label{B5}
\begin{aligned}
\mathbf{B}^4(\mathbf{x})
=&\delta^2\mathcal{A}_1[\mathbf{G}^{41}_{\mathbf{H}_0}+\omega\mathbf{G}^{41}_{\mathbf{H}_1}](\mathbf{x})\\
&-\pi\delta^2\mathbf\Gamma\left(\mathbf{x}-P\right)\left[\lambda\Big(\nabla\cdot \big(\mathbf{H}_0(P)+\omega\mathbf{H}_1(P)\big)\Big)\mathbf{e}_1+2\mu\nabla^s\big(\mathbf{H}_0(P)+\omega\mathbf{H}_1(P)\big)\mathbf{e}_1\right]
\\&+\pi\delta^2\mathbf\Gamma\left(\mathbf{x}-Q\right)\left[\lambda\Big(\nabla\cdot \big(\mathbf{H}_0(Q)+\omega\mathbf{H}_1(Q)\big)\Big)\mathbf{e}_1+2\mu\nabla^s\big(\mathbf{H}_0(Q)+\omega\mathbf{H}_1(Q)\big)\mathbf{e}_1\right]\\
&+\alpha_3\omega\delta^2\int_{-L/2}^{L/2}\mathbf{G}^{41}_{\mathbf{H}_0}(y)dy-\alpha_3\pi\omega\delta^2\left[\lambda(\nabla\cdot \mathbf{H}_0(P))\mathbf{e}_1+2\mu\nabla^s \mathbf{H}_0(P)\mathbf{e}_1\right]\\
&+\alpha_3\pi\omega\delta^2\left[\lambda(\nabla\cdot \mathbf{H}_0(Q))\mathbf{e}_1+2\mu\nabla^s \mathbf{H}_0(Q)\mathbf{e}_1\right],
\end{aligned}
\end{equation}
and
\begin{equation}
\begin{aligned}
\mathbf{J}^4(\mathbf{x})
=&\delta^2\mathcal{A}_1[\mathbf{G}^{42}_{\mathbf{H}_0}+\omega\mathbf{G}^{42}_{\mathbf{H}_1}](\mathbf{x})
-\delta^2\mathcal{A}_2[\mathbf{R}_{\mathbf{H}_0}+\omega\mathbf{R}_{\mathbf{H}_1}](\mathbf{x})\\
&+\alpha_3\omega\delta^2\int_{-L/2}^{L/2}\mathbf{G}^{42}_{\mathbf{H}_0}(y)dy\\
&+o(\delta^2)+\omega\cdot o(\delta^2)+\delta\cdot\mathcal{O}(\omega^2),
   \end{aligned}
\end{equation}
with
\begin{equation}\label{GH5}
\begin{aligned}
\mathbf{G}^{41}_{\mathbf{H}_i}=&\pi(\lambda+\mu)l_1 B\Big(\lambda\nabla\nabla\cdot\mathbf{H}_i(\mathbf{z}_{\mathbf{y}})+2\mu^2l_2\nabla\nabla\cdot \big(B \mathbf{H}_i(\mathbf{z}_{\mathbf{y}})\big)\Big)\\
& +\frac{\mu\pi}{2\lambda_1}\Big(I_3-\mu l_2B\Big)\Big(\Delta \mathbf{H}_i(\mathbf{z}_{\mathbf{y}})-\partial_{y}^2 \mathbf{H}_i(\mathbf{z}_{\mathbf{y}})+ \nabla\nabla\cdot \big(B \mathbf{H}_i(\mathbf{z}_{\mathbf{y}})\big)\Big)\\
& -\pi\mu^2l_2C\nabla\Big((\nabla\times \mathbf{H}_i(\mathbf{z}_{\mathbf{y}}))\cdot \mathbf{e}_1\Big), \quad i=0, 1,
\end{aligned}
\end{equation}
and
\begin{equation}
\begin{aligned}
\mathbf{G}^{42}_{\mathbf{H}_i}=& 2\pi(\lambda+2\mu)l_1 B\Big(\lambda\nabla\nabla\cdot\mathbf{H}_i(\mathbf{z}_{\mathbf{y}})+2\mu^2l_2\nabla\nabla\cdot \big(B \mathbf{H}_i(\mathbf{z}_{\mathbf{y}})\big)\Big)\\
& + \frac{\mu\pi}{\lambda_1}\Big(I_3-\mu l_2B\Big)\Big(\big(B\nabla\big)^T\big(\nabla\mathbf{H}_i(\mathbf{z}_{\mathbf{y}})+\nabla\mathbf{H}_i(\mathbf{z}_{\mathbf{y}})^T\big)\Big)^T, \quad i=0, 1.
\end{aligned}
\end{equation}
Since $\mathbf{B}^{4}(\mathbf{x})$ and $\mathbf{J}^{4}(\mathbf{x})$ are bounded, by \eqref{eq511} and \eqref{eq512}, it holds
$$
\mathbf{u}^s(\mathbf{x})\sim (c-1)\cdot\mathbf{B}^{4}(\mathbf{x}).
$$
From \eqref{Eu}, one has
$$
\begin{aligned}
E(\mathbf{u}^s)=&\int_{\mathbb{R}^{3} \backslash \overline{D}}\nabla^s \overline{\mathbf{u}^s(\mathbf{x})}: \mathcal{C} \nabla^s \mathbf{u}^s(\mathbf{x}) d \mathbf{x}\\ \sim & |c-1|^2\int_{\mathbb{R}^{3} \backslash \overline{D}}\nabla^s \overline{\mathbf{B}^{4}(\mathbf{x})}: \mathcal{C} \nabla^s \mathbf{B}^{4}(\mathbf{x}) d \mathbf{x}.
\end{aligned}
$$
From \eqref{B5} and Lemma \ref{le12345} (4), we can obtain the result.
This completes the proof.
\end{proof}
\begin{thm}
{\rm Consider $c=c_0+i\varrho$, where $\varrho>0$ is the loss parameter. Let $\mathbf{u}^s$ be the scattering solution of \eqref{3equa}.\\
(1) Suppose $\mathbf{H}_0(\mathbf{x})=\left(2x_1x_2, x_1^2-x_2^2, 0\right)^T$. If $c_0=-\frac{\lambda+\mu}{\lambda+3\mu}$,
then it holds
$$
E(\mathbf{u}^s)\sim\frac{4(\lambda+2\mu)^2\delta^4}{(\lambda+3\mu)^4\varrho^2}\int_{\mathbb{R}^{3} \backslash \overline{D}}\nabla^s \overline{\mathbf{B}_0^1(\mathbf{x})}: \mathcal{C} \nabla^s \mathbf{B}_0^1(\mathbf{x}) d \mathbf{x}=\mathcal{O}(\varrho^{-2}\delta^4).
$$
Furthermore, if $\varrho=o\left(\delta^2\right)$ (as $\delta \rightarrow 0, \varrho \rightarrow 0$ ), then it follows that
$$
\sqrt{E(\mathbf{u}^s)} \rightarrow \infty.
$$
(2) Suppose $\mathbf{H}_0(\mathbf{x})=\left( x_1^2- x_2^2,-2 x_1 x_2, 0\right)^T$. If $c_0=-\frac{\mu}{\lambda+\mu}$,
then it holds
$$
E(\mathbf{u}^s)\sim\frac{\lambda^2(\lambda+2\mu)^4\delta^4}{(\lambda+\mu)^4\varrho^2}\int_{\mathbb{R}^{3} \backslash \overline{D}}\nabla^s \overline{\mathcal{A}_2[\mathbf{R}^{21}_{\mathbf{H}_0}](\mathbf{x})}: \mathcal{C} \nabla^s \mathcal{A}_2[\mathbf{R}^{21}_{\mathbf{H}_0}](\mathbf{x}) d \mathbf{x}=\mathcal{O}(\varrho^{-2}\delta^4).
$$
Furthermore, if $\varrho=o\left(\delta^2\right)$ (as $\delta \rightarrow 0, \varrho \rightarrow 0$ ), then it follows that
$$
\sqrt{E(\mathbf{u}^s)} \rightarrow \infty.
$$
(3) Suppose $\mathbf{H}_0(\mathbf{x})=\left(2x_1x_2, x_1^2-x_2^2, 0\right)^T$. If $c_0\rightarrow -1$,
then it holds
$$
E(\mathbf{u}^s)\sim\frac{16\delta^4}{(c_0+1)^2+\varrho^2}\int_{\mathbb{R}^{3} \backslash \overline{D}}\nabla^s \overline{\mathbf{B}_0^3(\mathbf{x})}: \mathcal{C} \nabla^s \mathbf{B}_0^3(\mathbf{x}) d \mathbf{x}=\mathcal{O}\left(\frac{\delta^4}{(c_0+1)^2+\varrho^2}\right).
$$
Furthermore, if $(c_0+1)^2+\varrho^2=o\left(\delta^4\right)$ (as $\delta \rightarrow 0, \varrho \rightarrow 0$ ), then it follows that
$$
\sqrt{E(\mathbf{u}^s)} \rightarrow \infty.
$$
(4) Suppose $\mathbf{H}_0(\mathbf{x})=\left(2x_1x_2, x_1^2-x_2^2, 0\right)^T$. If $c_0\rightarrow -\infty$,
then it holds
$$
E(\mathbf{u}^s)\sim|c-1|^2\delta^4\int_{\mathbb{R}^{3} \backslash \overline{D}}\nabla^s \overline{\mathbf{B}_0^4(\mathbf{x})}: \mathcal{C} \nabla^s \mathbf{B}_0^4(\mathbf{x}) d \mathbf{x}=\mathcal{O}(\delta^4c_0^2).
$$
Furthermore, if $c_0^{-2}=o\left(\delta^4\right)$ (as $\delta \rightarrow 0$ ), then it follows that
$$
\sqrt{E(\mathbf{u}^s)} \rightarrow \infty.
$$
}
\end{thm}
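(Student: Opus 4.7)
The plan is to view this theorem as a direct specialization of Theorem \ref{Thm11} to two explicit quadratic background fields. Accordingly, the proof reduces to two routine checks per case: (i) each chosen $\mathbf{H}_0$ satisfies $\mathcal{L}_{\lambda,\mu}\mathbf{H}_0=\mathbf{0}$ in $\mathbb{R}^3$, qualifying it as an admissible background field in \eqref{3equa}; and (ii) the source densities entering the nontriviality hypotheses of Theorem \ref{Thm11} do not vanish identically on $[-L/2,L/2]$ or at the endpoints $P,Q$. Once (i) and (ii) are verified for a given case, the corresponding asymptotic estimate and blow-up criterion on $\varrho$, $\delta$, or $c_0$ is inherited termwise from the matching part of Theorem \ref{Thm11}.

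Step (i) is immediate: both proposed fields are harmonic quadratic polynomials with vanishing divergence. For $\mathbf{H}_0=(2x_1x_2,\,x_1^2-x_2^2,\,0)^T$ one has $\nabla\cdot\mathbf{H}_0\equiv 0$ and $\Delta\mathbf{H}_0\equiv\mathbf{0}$; the analogous computation for $\mathbf{H}_0=(x_1^2-x_2^2,\,-2x_1x_2,\,0)^T$ gives $\nabla\cdot\mathbf{H}_0\equiv 0$ and $\Delta\mathbf{H}_0\equiv\mathbf{0}$. In both cases $\mathcal{L}_{\lambda,\mu}\mathbf{H}_0=\mu\Delta\mathbf{H}_0+(\lambda+\mu)\nabla(\nabla\cdot\mathbf{H}_0)=\mathbf{0}$.

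For step (ii) I would restrict all tensor quantities to the axis $\mathbf{z}_{\mathbf{y}}=(y_1,0,0)^T$ and to the endpoints $P,Q$, then substitute into the source formulas: \eqref{eqG11} and \eqref{eqR11} for part (1); the explicit expression for $\mathbf{R}^{21}_{\mathbf{H}_0}$ given in Theorem \ref{Thm11}(2) for part (2); \eqref{GH4} and \eqref{RH4} for part (3); and \eqref{GH5} together with $\mathbf{C}_P,\mathbf{C}_Q$ for part (4). For the first field $(2x_1x_2,x_1^2-x_2^2,0)^T$ the elementary precomputations are $\nabla\times\mathbf{H}_0\equiv\mathbf{0}$ (which removes the $C$-matrix term throughout), $\nabla\nabla\cdot(B\mathbf{H}_0)=(0,-2,0)^T$, $\partial_{y_1}^2\mathbf{H}_0=(0,2,0)^T$, $\Delta\mathbf{H}_0-\partial_{y_1}^2\mathbf{H}_0=(0,-2,0)^T$, and $(\nabla\mathbf{H}_0+\nabla\mathbf{H}_0^T)|_{\mathbf{z}_{\mathbf{y}}}$ with off-diagonal entries $4y_1$; these feed directly into \eqref{eqG11}, \eqref{RH4} and the endpoint evaluations, yielding in particular $\mathbf{C}_P=2\mu\nabla^s\mathbf{H}_0(P)\mathbf{e}_1=(0,-2\mu L,0)^T\neq\mathbf{0}$ and $\mathbf{C}_Q=(0,2\mu L,0)^T\neq\mathbf{0}$ for part (4). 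For the second field $(x_1^2-x_2^2,-2x_1x_2,0)^T$ the key identity is $\nabla\cdot(B\mathbf{H}_0)(\mathbf{z}_{\mathbf{y}})=-2y_1$, and at the part (2) resonance $\lambda_1=-\lambda/(2(\lambda+2\mu))$ the explicit formula gives $\mathbf{R}^{21}_{\mathbf{H}_0}\mathbf{e}_i$ proportional to $y_1\mathbf{e}_i\not\equiv\mathbf{0}$ on $[-L/2,L/2]$ for $i=2,3$.

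With admissibility and nontriviality in hand, each energy estimate and blow-up regime follows from the matching case of Theorem \ref{Thm11}; case (4) with $c_0\to-\infty$ is covered by the same identity $(\lambda_1-1/2)^{-1}=c-1$ used in the proof of Theorem \ref{Thm11}(4), since the energy estimate depends only on $|c-1|^2$, which diverges as $c_0\to\pm\infty$. The main technical obstacle is the bookkeeping in (ii): one must carefully track the action of the matrices $B$, $C$, and $I_3-\mu l_2 B$ on the various second-derivative tensors and check that the several contributions combine into nonzero source densities at the resonant values of $\lambda_1$ (this is particularly delicate in part (3), where $\mu l_2\to 1$ as $c_0\to-1$ and $I_3-\mu l_2 B$ degenerates to a rank-one projector onto the $\mathbf{e}_1$-axis). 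Fortunately, the vanishing of $\nabla\times\mathbf{H}_0$ along $\Gamma_0$ for the first field and the constant-or-linear-in-$y_1$ form of all relevant tensors for both fields keep these checks elementary, so the substance of the proof really lies in Theorem \ref{Thm11} itself.
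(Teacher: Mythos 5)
Your proposal takes the same approach as the paper: each part of this theorem is intended to be a direct specialization of Theorem~\ref{Thm11}, so the work reduces to verifying that the chosen $\mathbf{H}_0$ solves $\mathcal{L}_{\lambda,\mu}\mathbf{H}_0=\mathbf{0}$ and that the relevant source densities do not vanish identically. Admissibility is immediate as you say, and your explicit checks for parts (2) and (4) --- namely $\mathbf{R}^{21}_{\mathbf{H}_0}(y)\propto y_1 I_3\not\equiv \mathbf{0}$ (for $\lambda\neq\mu$) and $\mathbf{C}_P=(0,-2\mu L,0)^T\neq\mathbf{0}$, $\mathbf{C}_Q=(0,2\mu L,0)^T\neq\mathbf{0}$ --- are correct.

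However, the check you defer for part (1) as ``elementary'' is precisely where the argument breaks down, and this is a genuine gap. Feeding your own precomputations ($\nabla\times\mathbf{H}_0\equiv\mathbf{0}$, $\nabla\nabla\cdot(B\mathbf{H}_0)=(0,-2,0)^T$, $\Delta\mathbf{H}_0-\partial_{y_1}^2\mathbf{H}_0=(0,-2,0)^T$, $(\nabla\mathbf{H}_0+\nabla\mathbf{H}_0^T)|_{\Gamma_0}$ with off-diagonal entries $4y_1$) into \eqref{eqG11} gives
\begin{equation*}
\mathbf{G}^{11}_{\mathbf{H}_0}(y)=-8\pi\mu^2\Big[l_1(\lambda+2\mu)+\frac{l_1(\lambda+\mu)-1}{2\lambda_1-1}\Big]\mathbf{e}_2,
\end{equation*}
exactly as the paper records. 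But the bracket is \emph{identically} zero: inserting $l_1=\big(2\lambda_1(\lambda+2\mu)-\mu\big)^{-1}$ and combining over the common denominator yields the algebraic identity $l_1\big((\lambda+2\mu)+\tfrac{\lambda+\mu}{2\lambda_1-1}\big)=\tfrac{1}{2\lambda_1-1}$ for every $\lambda_1$, so the two summands cancel. Furthermore, $\nabla\cdot(B\mathbf{H}_0)|_{\Gamma_0}=-2x_2|_{x_2=0}\equiv 0$ and the left-multiplication by $B$ in \eqref{eqR11} kills the first row, so the second and third columns of $\mathbf{R}^{11}_{\mathbf{H}_0}$ vanish identically as well, i.e.\ $\mathbf{R}^{11}_{\mathbf{H}_0}(y)\mathbf{e}_i\equiv\mathbf{0}$ for $i=2,3$. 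Thus neither nontriviality hypothesis of Theorem~\ref{Thm11}(1) holds for $\mathbf{H}_0=(2x_1x_2,x_1^2-x_2^2,0)^T$, and part (1) of this theorem cannot be obtained by the route you (and the paper) propose; note that part (3) survives only because the paper's $\mathbf{R}^{31}_{\mathbf{H}_0}$ carries the factor $(I_3-\mu l_2 B)$, which retains the first row, rather than $B$, which projects it away. The paper's one-line proof asserts the displayed bracket is nonzero, which appears to be an oversight; your proposal inherits the gap because the key verification is claimed but never actually executed.
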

\begin{proof}
We only prove (1), the others can be proved similarly.\\
(1)Since $\mathbf{H}_0(\mathbf{x})=\left(2x_1x_2, x_1^2-x_2^2, 0\right)^T$, from \eqref{eqG11}, we can obtain
$$
\mathbf{G}^{11}_{\mathbf{H}_0}(y) = -8\pi\mu^2 \left[ l_1(\lambda+2\mu) + \frac{l_1(\lambda+\mu)-1}{2\lambda_1-1} \right] \mathbf{e}_2\neq \mathbf{0}.
$$
From Theorem \ref{Thm11} (1), we can obtain the result. This completes the proof.
\end{proof}

\section*{Acknowledgment}
The work of Y. Deng was supported by NSFC-RGC Joint Research Grant No. 12161160314 and the Natural Science
Foundation Innovation Research Team Project of Guangxi (Grant No. 2025GXNSFGA069001) . The work of H. Liu was supported by the NSFC/RGC Joint Research Scheme, N\_CityU101/21; ANR/RGC Joint Research Scheme, A\_CityU203/19; and the Hong Kong RGC General Research Funds (projects 11311122, 12301420 and 11300821). The work of W. Tang was supported by the China Postdoctoral Science Foundation (Grant No. GZC20252030). The work of G. Zheng was supported by the NSF of China (12271151).

\appendix
\section{Auxiliary result}
\begin{lem}\label{le12345}
{\rm Regarding the functions $\mathbf{B}_0^1(\mathbf{x}), \mathcal{A}_2[\mathbf{R}^{21}_{\mathbf{H}_0}](\mathbf{x}), \mathbf{B}_0^3(\mathbf{x})$ and $\mathbf{B}_0^4(\mathbf{x})$, which appear in Theorem \ref{Thm11}, and the space $\boldsymbol{\Psi}$, defined in \eqref{Psi}, we have the following results:\\
(1)The function $\mathbf{B}_0^1(\mathbf{x})$ does not belong to $\boldsymbol{\Psi}$ in $\mathbb{R}^3\setminus\overline{D}$  if and only if its source densities $\mathbf{G}^{11}_{\mathbf{H}_0}(y)$ or $\mathbf{R}^{11}_{\mathbf{H}_0}(y)\mathbf{e}_i$ (for $i=2, 3$) are not identically zero on $[-L/2, L/2]$. i.e.,
$$\mathbf{B}_0^1(\mathbf{x}) \notin \boldsymbol{\Psi} \text{ in } \mathbb{R}^3\setminus\overline{D} \Longleftrightarrow \left( \mathbf{G}^{11}_{\mathbf{H}_0}(y) \not\equiv \mathbf{0} \text{ or } \mathbf{R}^{11}_{\mathbf{H}_0}(y)\mathbf{e}_i \not\equiv \mathbf{0} \text{ for } i=2, 3 \right) \text{ on } [-L/2, L/2].$$
(2)The function $\mathcal{A}_2[\mathbf{R}^{21}_{\mathbf{H}_0}](\mathbf{x})$ does not belong to $\boldsymbol{\Psi}$ in $\mathbb{R}^3\setminus\overline{D}$ if and only if its source densities $\mathbf{R}^{21}_{\mathbf{H}_0}(y)\mathbf{e}_i$ (for $i=2, 3$) are not identically zero on $[-L/2, L/2]$. i.e.,
$$\mathcal{A}_2[\mathbf{R}^{21}_{\mathbf{H}_0}](\mathbf{x}) \notin \boldsymbol{\Psi} \text{ in } \mathbb{R}^3\setminus\overline{D} \Longleftrightarrow \mathbf{R}^{21}_{\mathbf{H}_0}(y)\mathbf{e}_i \not\equiv \mathbf{0} \text{ for } i=2, 3 \text{ on } [-L/2, L/2].$$
(3)The function $\mathbf{B}_0^3(\mathbf{x})$ does not belong to $\boldsymbol{\Psi}$ in $\mathbb{R}^3\setminus\overline{D}$  if and only if its source densities $\mathbf{G}^{31}_{\mathbf{H}_0}(y)$ or $\mathbf{R}^{31}_{\mathbf{H}_0}(y)\mathbf{e}_i$ (for $i=2, 3$) are not identically zero on $[-L/2, L/2]$. i.e.,
$$\mathbf{B}_0^3(\mathbf{x}) \notin \boldsymbol{\Psi} \text{ in } \mathbb{R}^3\setminus\overline{D} \Longleftrightarrow \left( \mathbf{G}^{31}_{\mathbf{H}_0}(y) \not\equiv \mathbf{0} \text{ or } \mathbf{R}^{31}_{\mathbf{H}_0}(y)\mathbf{e}_i \not\equiv \mathbf{0} \text{ for } i=2, 3 \right) \text{ on } [-L/2, L/2].$$
(4)The function $\mathbf{B}_0^4(\mathbf{x})$ does not belong to $\boldsymbol{\Psi}$ in $\mathbb{R}^3\setminus\overline{D}$  if and only if its source density $\mathbf{G}^{41}_{\mathbf{H}_0}$ is not identically zero on $[-L/2, L/2]$ or $\mathbf{C}_P\neq \mathbf{0}$ or $\mathbf{C}_Q\neq \mathbf{0}$. i.e.,
$$
\mathbf{B}_0^4(\mathbf{x}) \notin \boldsymbol{\Psi} \text{ in } \mathbb{R}^3\setminus\overline{D} \Longleftrightarrow \mathbf{G}^{41}_{\mathbf{H}_0}(y) \not\equiv \mathbf{0} \text{ on } [-L/2, L/2] \text{ or } \mathbf{C}_P \neq \mathbf{0} \text{ or } \mathbf{C}_Q \neq \mathbf{0}.
$$
}
\end{lem}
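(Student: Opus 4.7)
The plan is to prove all four equivalences by the same mechanism. Each of the target fields $\mathbf{B}_0^1$, $\mathcal{A}_2[\mathbf{R}^{21}_{\mathbf{H}_0}]$, $\mathbf{B}_0^3$, $\mathbf{B}_0^4$ is built from the line-potential operators $\mathcal{A}_1$, $\mathcal{A}_2$ (and, in item (4), two isolated Kelvin singularities centered at $P$ and $Q$). Consequently each is a real-analytic $\mathcal{L}_{\lambda,\mu}$-solution on the open connected set $\Omega := \mathbb{R}^3 \setminus (\Gamma_0 \cup \{P, Q\})$, and its line or point densities are encoded in its singular profile along that set. One direction of each equivalence is trivial: if the cited source densities vanish identically, the corresponding field is identically zero and therefore lies in $\boldsymbol{\Psi}$. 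I will concentrate on the nontrivial converse, proving the contrapositive.

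I will treat item (1) in detail and indicate how items (2)--(4) reduce to the same pattern. Suppose $\mathbf{B}_0^1$ lies in $\boldsymbol{\Psi}$ in $\mathbb{R}^3 \setminus \overline{D}$, so that $\nabla^s \mathbf{B}_0^1 \equiv \mathbf{0}$ on the exterior. Then $\mathbf{B}_0^1(\mathbf{x}) = \mathbf{a} + \mathbf{b}\times \mathbf{x}$ for constants $\mathbf{a}, \mathbf{b} \in \mathbb{R}^3$. Since $|\mathbf{\Gamma}(\mathbf{x}-\mathbf{z}_\mathbf{y})|=\mathcal{O}(|\mathbf{x}|^{-1})$ and the line sources are compactly supported on $[-L/2, L/2]$, one has $\mathbf{B}_0^1(\mathbf{x})\to \mathbf{0}$ as $|\mathbf{x}|\to\infty$, forcing $\mathbf{a}=\mathbf{b}=\mathbf{0}$ and $\mathbf{B}_0^1\equiv \mathbf{0}$ on $\mathbb{R}^3\setminus\overline{D}$. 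Because $\Gamma_0$ has codimension two in $\mathbb{R}^3$, the set $\Omega$ is open and connected, and the real-analyticity of the constant-coefficient elliptic solution $\mathbf{B}_0^1$ then yields $\mathbf{B}_0^1 \equiv \mathbf{0}$ throughout $\Omega$ by unique continuation.

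The central step, and the principal obstacle, is to convert the vanishing of $\mathbf{B}_0^1$ on $\Omega$ into the vanishing of its one-dimensional source densities. The approach is to apply $\mathcal{L}_{\lambda,\mu}$ to the identity $\mathbf{B}_0^1 \equiv \mathbf{0}$ in the distributional sense on $\mathbb{R}^3$: the single-layer piece $\delta^2 \mathcal{A}_1[\mathbf{G}^{11}_{\mathbf{H}_0}]$ contributes the order-zero line distribution $\mathbf{G}^{11}_{\mathbf{H}_0}(y)\, ds_{\Gamma_0}$, whereas $-\delta^2 \mathcal{A}_2[\mathbf{R}^{11}_{\mathbf{H}_0}]$ contributes the order-one transverse derivative $-\sum_{i=2,3}\partial_{x_i}\bigl(\mathbf{R}^{11}_{\mathbf{H}_0}(y)\mathbf{e}_i\, ds_{\Gamma_0}\bigr)$. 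These two distributions have distinct differentiation orders and distinct directional structures, so their sum vanishing in $\mathcal{D}'(\mathbb{R}^3)^3$ forces each to vanish separately, yielding $\mathbf{G}^{11}_{\mathbf{H}_0}\equiv \mathbf{0}$ on $[-L/2, L/2]$ and $\mathbf{R}^{11}_{\mathbf{H}_0}(y)\mathbf{e}_i\equiv \mathbf{0}$ for $i=2,3$. An equivalent pointwise route is to extract transverse asymptotics near a generic $\mathbf{z}_{y_0}\in \Gamma_0\setminus\{P,Q\}$ by the cylindrical change of variables used in Lemma~\ref{le:14151111}: $\mathcal{A}_1[\mathbf{G}^{11}_{\mathbf{H}_0}]$ produces a $\log \rho$ blow-up with coefficient proportional to $\mathbf{G}^{11}_{\mathbf{H}_0}(y_0)$, while $\mathcal{A}_2[\mathbf{R}^{11}_{\mathbf{H}_0}]$ produces a strictly stronger $\rho^{-1}$ blow-up, so identical cancellation forces both coefficients to be zero. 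The hard part will be isolating these asymptotic profiles cleanly and uniformly on compact subsets of $\Gamma_0\setminus\{P,Q\}$, and verifying the linear independence of the transverse directional contributions.

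Items (2) and (3) are obtained by verbatim replacement of densities, with item (2) using only the double-layer asymptotics since no $\mathcal{A}_1$-term is present. Item (4) requires one extra ingredient: the endpoint Kelvin terms $-\pi\mathbf\Gamma(\mathbf{x}-P)\mathbf{C}_P$ and $\pi\mathbf\Gamma(\mathbf{x}-Q)\mathbf{C}_Q$ produce $|\mathbf{x}-P|^{-1}$ and $|\mathbf{x}-Q|^{-1}$ singularities that are strictly stronger than, and supported on strictly lower-dimensional sets than, the logarithmic line singularity of $\mathcal{A}_1[\mathbf{G}^{41}_{\mathbf{H}_0}]$. After unique continuation, comparing leading singular orders at $P$ and at $Q$ separately yields $\mathbf{C}_P=\mathbf{C}_Q=\mathbf{0}$, and restricting the analysis to the interior of $\Gamma_0$ then recovers $\mathbf{G}^{41}_{\mathbf{H}_0}\equiv \mathbf{0}$ by the argument of item (1). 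This completes the scheme for all four items.
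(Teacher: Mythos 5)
Your proposal is correct and captures the two load-bearing ideas that the paper also uses: the decay at infinity to kill the rigid-motion ansatz, and unique continuation to propagate $\mathbf{B}_0^1 \equiv \mathbf{0}$ from $\mathbb{R}^3\setminus\overline{D}$ to all of $\mathbb{R}^3\setminus\Gamma_0$. Where you diverge is in the final step, converting the vanishing field to vanishing line densities. The paper argues by contradiction through an explicit three-case transverse asymptotic expansion near a generic interior point $\mathbf{z}_{x_0}\in\Gamma_0$: it computes that $\mathcal{A}_1[\mathbf{G}^{11}_{\mathbf{H}_0}]$ blows up like $\ln\rho$ with coefficient determined by $\mathbf{G}^{11}_{\mathbf{H}_0}(x_0)$, that $\mathcal{A}_2[\mathbf{R}^{11}_{\mathbf{H}_0}]$ blows up like $\rho^{-1}\mathbf{A}(\theta)$, and then checks by a linear-independence argument on $\{\cos\theta,\sin\theta,\cos 3\theta,\sin 3\theta\}$ that $\mathbf{A}(\theta)\equiv 0$ forces the second and third columns of $\mathbf{R}^{11}_{\mathbf{H}_0}$ to vanish. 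You instead propose, as the primary route, a distributional argument: since $\mathbf{B}_0^1$ is in $L^1_{loc}$ and vanishes a.e.\ on $\mathbb{R}^3$, applying $\mathcal{L}_{\lambda,\mu}$ in $\mathcal{D}'(\mathbb{R}^3)^3$ yields $\mathbf{G}^{11}_{\mathbf{H}_0}\,ds_{\Gamma_0} - \sum_{i=2,3}\partial_{x_i}\big(\mathbf{R}^{11}_{\mathbf{H}_0}\mathbf{e}_i\,ds_{\Gamma_0}\big) = 0$, and testing first against $\psi(x_1)\chi(x_2,x_3)$ with $\chi\equiv 1$ near the axis and then against $x_j\psi(x_1)\chi(x_2,x_3)$ separates the two densities by their distinct differentiation orders. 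This is a genuinely cleaner argument: it sidesteps the paper's explicit computation of $\mathbf{A}(\theta)$, the trigonometric linear-independence check, and the need to handle the three subcases (log alone, $\rho^{-1}$ alone, both) separately. The cost is that you must justify the distributional identity $\mathcal{L}_{\lambda,\mu}\mathcal{A}_1[\phi]=\phi\,ds_{\Gamma_0}$ and $\mathcal{L}_{\lambda,\mu}\mathcal{A}_2[\mathbf{N}]=\sum_{i=2,3}\partial_{x_i}(\mathbf{N}\mathbf{e}_i\,ds_{\Gamma_0})$, which is standard but must be stated, together with the local integrability of the $\rho^{-1}$ profile (integrable against $\rho\,d\rho\,d\theta$ in 3D). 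For item (4) your remark that the point distributions $\delta_P\mathbf{C}_P$, $\delta_Q\mathbf{C}_Q$ are independent of the line measure is exactly the right extra ingredient. What you write as a ``hard part'' (uniform asymptotics on compacta of $\Gamma_0$) is only needed for the pointwise alternative route, not for the distributional one, so your plan is internally consistent; were you to commit to the distributional version and spell out the test-function bookkeeping, it would constitute a complete and somewhat shorter proof than the paper's.
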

\begin{proof}
We only prove (1), the others can be proved similarly.
We prove the contrapositive, i.e.,
$$
\mathbf{B}_0^1(\mathbf{x}) \in \boldsymbol{\Psi} \text{ in } \mathbb{R}^3\setminus\overline{D} \iff \mathbf{G}^{11}_{\mathbf{H}_0}(y) \equiv \mathbf{0} \text{ and } \mathbf{R}^{11}_{\mathbf{H}_0}(y)\mathbf{e}_i \equiv \mathbf{0} \text{ for } i=2,3 \text{ on } [-L/2, L/2].
$$
$\Leftarrow$: If $\mathbf{G}^{11}_{\mathbf{H}_0}(y)\equiv\mathbf{0}$ and the second and third columns of $\mathbf{R}^{11}_{\mathbf{H}_0}(y)$ are identically zero, then by the definition \eqref{AA1}-\eqref{AA2} of the operators, $\mathcal{A}_1[\mathbf{G}^{11}_{\mathbf{H}_0}] \equiv \mathbf{0}$ and $\mathcal{A}_2[\mathbf{R}^{11}_{\mathbf{H}_0}] \equiv \mathbf{0}$. Consequently, $\mathbf{B}_0^1(\mathbf{x}) \equiv \mathbf{0}$. From \eqref{Psi} , we know that $\mathbf{B}_0^1(\mathbf{x})\in \boldsymbol{\Psi}$ in $\mathbb{R}^3\setminus\overline{D}$.\\
$\Rightarrow$: Since $\mathbf{B}_0^1(\mathbf{x})\in \boldsymbol{\Psi}$ in $\mathbb{R}^3\setminus\overline{D}$, it follows that
$$
\mathbf{B}_0^1(\mathbf{x})=\mathbf{a}+\mathbf{M}\mathbf{x},
$$
where $\mathbf{a}$ is a constant translation vector and $\mathbf{M}$ is a constant skew-symmetric matrix. From the definitions of $\mathcal{A}_1$ and $\mathcal{A}_2$ in \eqref{AA1}-\eqref{AA2}, we know
$$
\lim _{|\mathbf{x}| \rightarrow \infty} \mathbf{B}_0^1(\mathbf{x})=\mathbf{0}.
$$
Substituting the form of $\mathbf{B}_0^1(\mathbf{x})$, we have
$$
\lim _{|\mathbf{x}| \rightarrow \infty}(\mathbf{a}+\mathbf{M}\mathbf{x})=\mathbf{0}.
$$
Since $\mathbf{a}$ is constant, for this limit to hold as $|\mathbf{x}| \rightarrow \infty$, we must have $\mathbf{M}=\mathbf{0}$. This then implies
$$
\lim _{|\mathbf{x}| \rightarrow \infty} \mathbf{a}=\mathbf{a}=\mathbf{0}.
$$
Thus, $\mathbf{B}_0^1(\mathbf{x})=\mathcal{A}_1[\mathbf{G}^{11}_{\mathbf{H}_0}](\mathbf{x})-\mathcal{A}_2[\mathbf{R}^{11}_{\mathbf{H}_0}](\mathbf{x})\equiv\mathbf{0}$ in $\mathbb{R}^3\setminus\overline{D}$.
According to the principle of analytic continuation, we have $\mathbf{B}_0^1(\mathbf{x})\equiv\mathbf{0}$ in $\mathbb{R}^3\setminus\Gamma_{0}$. We now prove the implication by contradiction.\\
(i) Assume that $\mathbf{G}^{11}_{\mathbf{H}_0}(y)\not\equiv \mathbf{0}$ and $\mathbf{R}^{11}_{\mathbf{H}_0}(y)\mathbf{e}_i \equiv \mathbf{0}$ for $i=2,3$ on $[-L/2, L/2]$. By this assumption, there must exist some point $x_0\in (-L/2, L/2)$ such that $\mathbf{G}^{11}_{\mathbf{H}_0}(x_0)\neq \mathbf{0}$. Let us analyze the behavior of $\mathcal{A}_1[\mathbf{G}^{11}_{\mathbf{H}_0}](\mathbf{x})$ as $\mathbf{x}$ approaches the point $\mathbf{z}_{x_0}=(x_0, 0, 0)^T$ on the line segment $\Gamma_0$. We define a local coordinate system by letting $\mathbf{x}=(x_0, x_2, x_3)^T$ and defining $\rho = \sqrt{x_2^2 + x_3^2}$. Note that $\mathbf{x} \to \mathbf{z}_{x_0}$ as $\rho \to 0$. We split the integral representation of $\mathcal{A}_1[\mathbf{G}^{11}_{\mathbf{H}_0}](\mathbf{x})$ as follows:
$$
\mathcal{A}_1[\mathbf{G}^{11}_{\mathbf{H}_0}](\mathbf{x})= \int_{-L/2}^{L/2} \boldsymbol{\Gamma}(\mathbf{x}-\mathbf{z}_y) (\mathbf{G}^{11}_{\mathbf{H}_0}(y) - \mathbf{G}^{11}_{\mathbf{H}_0}(x_0)) dy + \left(\int_{-L/2}^{L/2} \boldsymbol{\Gamma}(\mathbf{x}-\mathbf{z}_y) dy \right) \mathbf{G}^{11}_{\mathbf{H}_0}(x_0).
$$
Since $\mathbf{G}^{11}_{\mathbf{H}_0}(y)$ is derived from the smooth background field $\mathbf{H}_0$, it is at least continuous. Therefore, the difference $(\mathbf{G}^{11}_{\mathbf{H}_0}(y) - \mathbf{G}^{11}_{\mathbf{H}_0}(x_0)) \to \mathbf{0}$ as $y \to x_0$. This zero in the integrand is $\mathcal{O}(|y-x_0|)$, which is sufficient to cancel the singularity of the kernel $\boldsymbol{\Gamma}$. This cancellation ensures the integrand of the first term is bounded, and its integral over a finite domain converges to a finite value as $\rho \to 0$.
The asymptotic behavior of $\mathcal{A}_1[\mathbf{G}^{11}_{\mathbf{H}_0}](\mathbf{x})$ as $\rho \to 0$ is therefore dominated by the second term. Let $\mathbf{I}(\rho) = \int_{-L/2}^{L/2} \boldsymbol{\Gamma}(\mathbf{x}-\mathbf{z}_y) dy$. We analyze this integral by substituting $u = y-x_0$. Let $t_1=-L/2-x_0$ and $t_2=L/2-x_0$. Since $x_0 \in (-L/2, L/2)$, we have $t_1 < 0$ and $t_2 > 0$. The integral becomes:
$$
\mathbf{I}(\rho) = \int_{t_1}^{t_2} \left( \frac{\alpha_1 I_3}{\sqrt{u^2 + \rho^2}} + \frac{\alpha_2}{(u^2 + \rho^2)^{3/2}} \begin{pmatrix} u^2 & -u x_2 & -u x_3 \\ -u x_2 & x_2^2 & x_2 x_3 \\ -u x_3 & x_2 x_3 & x_3^2 \end{pmatrix} \right) du.
$$
We analyze the leading-order terms as $\rho \to 0$. The integral of the first part is
$$
\mathbf{I}_{\alpha_1} = \alpha_1 I_3 \int_{t_1}^{t_2} \frac{du}{\sqrt{u^2 + \rho^2}} = \alpha_1 I_3 \left[ \sinh^{-1}(u/\rho) \right]_{t_1}^{t_2}.
$$
Using the asymptotic behavior $\sinh^{-1}(s) \sim \ln(2s)$ as $s \to \infty$ and $\sinh^{-1}(s) \sim \ln(-2s)$ as $s \to -\infty$, we obtain
$$
\int_{t_1}^{t_2} \frac{du}{\sqrt{u^2 + \rho^2}} \sim \ln(2t_2/\rho) - \ln(-2t_1/\rho) = \ln(t_2/(-t_1)) - 2\ln(\rho).
$$
Thus, $\mathbf{I}_{\alpha_1} = \alpha_1 I_3 (-2\ln \rho + \mathcal{O}(1))$. A similar asymptotic analysis of the components of the $\mathbf{I}_{\alpha_2}$ integral yields the following behavior for the full matrix $\mathbf{I}(\rho)$:
$$
\mathbf{I}(\rho) = \mathbf{I}_{\alpha_1} + \mathbf{I}_{\alpha_2} \sim \alpha_1 \begin{pmatrix} -2\ln \rho & 0 & 0 \\ 0 & -2\ln \rho & 0 \\ 0 & 0 & -2\ln \rho \end{pmatrix} + \alpha_2 \begin{pmatrix} -2\ln \rho & \mathcal{O}(\rho) & \mathcal{O}(\rho) \\ \mathcal{O}(\rho) & \mathcal{O}(1) & \mathcal{O}(1) \\ \mathcal{O}(\rho) & \mathcal{O}(1) & \mathcal{O}(1) \end{pmatrix} + \mathcal{O}(1).
$$
This simplifies to:
$$
\mathbf{I}(\rho) \sim \begin{pmatrix} -2(\alpha_1+\alpha_2)\ln \rho & 0 & 0 \\ 0 & -2\alpha_1 \ln \rho & 0 \\ 0 & 0 & -2\alpha_1 \ln \rho \end{pmatrix} + \mathcal{O}(1).
$$
Therefore, the asymptotic behavior of $\mathbf{v}(\mathbf{x})$ as $\rho \to 0$ is
$$
\mathcal{A}_1[\mathbf{G}^{11}_{\mathbf{H}_0}](\mathbf{x}) \sim \mathbf{I}(\rho) \cdot \mathbf{G}^{11}_{\mathbf{H}_0}(x_0) + \mathcal{O}(1).
$$
Letting $\mathbf{G}^{11}_{\mathbf{H}_0}(x_0) = (G_1, G_2, G_3)^T$, we have
$$
\mathcal{A}_1[\mathbf{G}^{11}_{\mathbf{H}_0}](\mathbf{x})\sim \begin{pmatrix} -2(\alpha_1+\alpha_2)G_1 \ln \rho \\ -2\alpha_1 G_2 \ln \rho \\ -2\alpha_1 G_3 \ln \rho \end{pmatrix} + \mathcal{O}(1).
$$
Since $\alpha_1 \neq 0$ and $\alpha_1+\alpha_2 = -\frac{1}{4\pi\mu} \neq 0$, and we have assumed $\mathbf{G}^{11}_{\mathbf{H}_0}(x_0) \neq \mathbf{0}$, this implies that at least one component of $\mathcal{A}_1[\mathbf{G}^{11}_{\mathbf{H}_0}](\mathbf{x})$ exhibits a logarithmic singularity. As $\rho \to 0$ (i.e., $\mathbf{x} \to \mathbf{z}_{x_0}$), we have $\ln \rho \to -\infty$. Therefore, our singularity analysis shows that:
$$
\lim_{\mathbf{x} \to \mathbf{z}_{x_0}} \|\mathbf{B}_0^1(\mathbf{x})\| = \infty.
$$
However, this contradicts our previous conclusion $\mathbf{B}_0^1(\mathbf{x}) \equiv \mathbf{0}$ (in $\mathbb{R}^3\setminus\Gamma_{0}$), which was obtained via analytic continuation. This is because $\mathbf{B}_0^1(\mathbf{x}) \equiv \mathbf{0}$ necessarily requires:
$$
\lim_{\mathbf{x} \to \mathbf{z}_{x_0}} \mathbf{B}_0^1(\mathbf{x}) = \mathbf{0}.
$$
This contradiction proves that our initial assumption, $\mathbf{G}^{11}_{\mathbf{H}_0}(y)\not\equiv \mathbf{0}$ and $\mathbf{R}^{11}_{\mathbf{H}_0}(y)\mathbf{e}_i \equiv \mathbf{0}$, must be false.\\
(ii) Assume that $\mathbf{G}^{11}_{\mathbf{H}_0}(y)\equiv \mathbf{0}$ and $\mathbf{R}^{11}_{\mathbf{H}_0}(y)\mathbf{e}_i \not\equiv \mathbf{0}$ for $i=2, 3$ on $[-L/2, L/2]$. By this assumption, there must exist some point $x_0\in (-L/2, L/2)$ such that $\mathbf{R}^{11}_{\mathbf{H}_0}(x_0)\mathbf{e}_i\neq \mathbf{0}$. Let us analyze the behavior of $\mathcal{A}_2[\mathbf{R}^{11}_{\mathbf{H}_0}](\mathbf{x})$ as $\mathbf{x}$ approaches the point $\mathbf{z}_{x_0}=(x_0, 0, 0)^T$ on the line segment $\Gamma_0$. We define a local coordinate system by letting $\mathbf{x}=(x_0, x_2, x_3)^T$ and defining $\rho = \sqrt{x_2^2 + x_3^2}$, $x_2 = \rho \cos \theta$, $x_3 = \rho \sin \theta$. Similar to the analysis in (i) and by direct calculation, we can obtain the following result:
$$
\mathbf{B}_0^1(\mathbf{x}) = -\mathcal{A}_2[\mathbf{R}^{11}_{\mathbf{H}_0}](\mathbf{x}) \sim -\frac{1}{\rho} \mathbf{A}(\theta) + \mathcal{O}(\ln \rho),
$$
where
\begin{equation}
\mathbf{A}(\theta) = \begin{pmatrix} \displaystyle -\frac{1}{2\pi\mu} (R_{12}\cos\theta + R_{13}\sin\theta)\\[10pt] \displaystyle \frac{\lambda+\mu}{4\pi\mu(\lambda+2\mu)} \left( \begin{aligned} R_{22}\left[-\Lambda \cos\theta + 2\cos\theta\sin^2\theta\right] + R_{32}\left[\sin\theta - 2\cos^2\theta\sin\theta\right] \\ + R_{23}\left[-\Lambda \sin\theta - 2\sin\theta\cos^2\theta\right] + R_{33}\left[\cos\theta - 2\cos\theta\sin^2\theta\right] \end{aligned} \right) \\[20pt] \displaystyle \frac{\lambda+\mu}{4\pi\mu(\lambda+2\mu)} \left( \begin{aligned} R_{22}\left[\sin\theta - 2\cos^2\theta\sin\theta\right] + R_{32}\left[-\Lambda \cos\theta - 2\cos\theta\sin^2\theta\right] \\ + R_{23}\left[\cos\theta - 2\cos\theta\sin^2\theta\right] + R_{33}\left[-\Lambda \sin\theta + 2\sin\theta\cos^2\theta\right] \end{aligned} \right) \end{pmatrix},
\end{equation}
where $R_{kj}$ denotes the $(k,j)$-th component of the matrix $\mathbf{R}^{11}_{\mathbf{H}_0}(x_0)$ and $\Lambda = \frac{\lambda+3\mu}{\lambda+\mu}$.
The components of the vector $\mathbf{A}(\theta)$ are linear combinations of the basis functions $ \cos\theta$, $\sin\theta$, $\cos 3\theta$, $\sin 3\theta$. These functions are linearly independent on $[0, 2\pi)$. Through a detailed algebraic verification, it can be shown that the coefficients of these basis functions vanish simultaneously if and only if all source components $R_{kj}$ (for $k=1, 2, 3$ and $j=2,3$) are zero. Since we assumed $\mathbf{R}^{11}_{\mathbf{H}_0}(x_0)$ has non-zero entries in the second or third columns, the vector $\mathbf{A}(\theta)$ does not vanish identically for all $\theta$. Consequently, there exists at least one direction $\theta_0$ such that $\|\mathbf{A}(\theta_0)\| \neq 0$.
Along this direction, the field behaves as:
$$
\lim_{\rho \to 0} \|\mathbf{B}_0^1(\mathbf{x})\| = \lim_{\rho \to 0} \frac{1}{\rho} \|\mathbf{A}(\theta_0)\| = \infty.
$$
This contradicts our previous conclusion $\mathbf{B}_0^1(\mathbf{x}) \equiv \mathbf{0}$ (in $\mathbb{R}^3\setminus\Gamma_{0}$).
This contradiction proves that our initial assumption, $\mathbf{G}^{11}_{\mathbf{H}_0}(y)\equiv \mathbf{0}$ and $\mathbf{R}^{11}_{\mathbf{H}_0}(y)\mathbf{e}_i \not\equiv \mathbf{0}$, must be false.\\
(iii)Assume that $\mathbf{G}^{11}_{\mathbf{H}_0}(y)\not\equiv \mathbf{0}$ and $\mathbf{R}^{11}_{\mathbf{H}_0}(y)\mathbf{e}_i \not\equiv \mathbf{0}$ for $i=2, 3$ on $[-L/2, L/2]$. Based on the analysis in (i) and (ii), the field is a superposition of terms with different singularity orders:
$$
\mathbf{B}_0^1(\mathbf{x}) \sim \mathcal{O}(\ln \rho) - \frac{1}{\rho} \mathbf{A}(\theta).
$$
Since the algebraic singularity $\frac{1}{\rho}$ dominates the logarithmic singularity $\ln \rho$ as $\rho \to 0$, the asymptotic behavior is dictated by the $\mathcal{A}_2$ term. As proven in (ii), $\mathbf{A}(\theta)$ is not identically zero. Thus, along a direction $\theta_0$ where $\mathbf{A}(\theta_0) \neq \mathbf{0}$, the field blows up:
$$
\lim_{\rho \to 0} \|\mathbf{B}_0^1(\mathbf{x})\| = \infty.
$$
This contradicts $\mathbf{B}_0^1(\mathbf{x}) \equiv \mathbf{0}$ (in $\mathbb{R}^3\setminus\Gamma_{0}$). \\
Therefore, the assumptions in all three cases are false.
We conclude that $\mathbf{G}^{11}_{\mathbf{H}_0}(y)\equiv \mathbf{0}$ and $\mathbf{R}^{11}_{\mathbf{H}_0}(y)\mathbf{e}_i \equiv \mathbf{0}$ (for $i=2, 3$) must hold for all $y \in [-L/2, L/2]$.
This completes the proof.
\end{proof}

\end{document}